\documentclass[letterpaper]{article} 

\usepackage{aaai2027}

\usepackage[hyphens]{url}  % DO NOT CHANGE THIS
\usepackage{graphicx} % DO NOT CHANGE THIS
\usepackage{natbib}  % DO NOT CHANGE THIS AND DO NOT ADD ANY OPTIONS TO IT
\usepackage{caption} % DO NOT CHANGE THIS AND DO NOT ADD ANY OPTIONS TO IT
\usepackage{algorithm}
\usepackage{algorithmic}
\usepackage{amsthm}
\usepackage{mathtools}
\usepackage{multirow}
\usepackage{amssymb}
\usepackage{pifont}

\usepackage{float}

\usepackage[ruled,linesnumbered,algo2e]{algorithm2e}

\SetKwInput{KwParam}{Parameters}

\SetAlCapFnt{\bfseries}
\SetAlCapNameFnt{\bfseries}
\SetNlSty{textbf}{}{}

\nocopyright

\newtheorem{prop}{Proposition}
\newtheorem{lemma}{Lemma}
\newtheorem{thm}{Theorem}

\newtheorem{assumption}{Assumption}
\newtheorem{definition}{Definition}

\newcommand{\cmark}{\ding{51}} % ✓
\newcommand{\xmarkgray}{{\color{lightgray}\ding{55}}} % ✗

\usepackage{newfloat}
\usepackage{listings}
\DeclareCaptionStyle{ruled}{labelfont=normalfont,labelsep=colon,strut=off} % DO NOT CHANGE THIS
\floatstyle{ruled}
\newfloat{listing}{tb}{lst}{}
\floatname{listing}{Listing}

\usepackage{booktabs}

\title{Barycentric Fused Gromov-Wasserstein Balancing \\ for Causal Inference under Multiple Treatments}
\author{
    Yuki Murakami\textsuperscript{\rm 1},
    Takumi Hattori\textsuperscript{\rm 1},
    Kohsuke Kubota\textsuperscript{\rm 2},
}
\affiliations{
    \textsuperscript{\rm 1} NTT DOCOMO, INC., Tokyo, Japan\\
    \textsuperscript{\rm 2} Yokohama City University, Kanagawa, Japan\\

    yuuki.murakami.xg@nttdocomo.com,
    takumi.hattori.zw@nttdocomo.com,
    kubota.kos.oy@yokohama-cu.ac.jp

}

\begin{document}

\maketitle

\begin{abstract}

Estimating heterogeneous single and interaction treatment effects from observational data under multiple simultaneous treatments is crucial for decision-making. 
To mitigate estimation variance, previous studies balance representation distributions between every pair of treatment patterns. 
However, such pairwise balancing scales quadratically with the number of treatment patterns and fails to preserve consistent local proximity structures across patterns, which degrades counterfactual estimation. 
To address these challenges, we propose the Causal Inference for Heterogeneous Single and Interaction Treatment Effects Network~(CIHSI-Net), a deep learning framework built on a novel Barycentric Fused Gromov-Wasserstein Balancing~(BFG-WB) objective. 
BFG-WB aligns the representation distribution of each treatment pattern with a shared Wasserstein barycenter, achieving global alignment while reducing the computational complexity from quadratic to linear, and its Fused Gromov-Wasserstein discrepancy preserves the local proximity structures essential for reliable heterogeneous effect estimation. 
Simulation studies show that CIHSI-Net consistently outperforms state-of-the-art baselines, and an application to real-world marketing data demonstrates its practical utility in complex multi-treatment scenarios.
\end{abstract}

\section{Introduction}
Estimating heterogeneous single and interaction treatment effects under multiple simultaneous treatments is important in many domains, such as healthcare and marketing.
For example, optimizing combination drug therapies requires understanding how these effects vary due to physiological differences~\cite[e.g.][]{gradman2010combination,sever2006potential}.
Similarly, in marketing, promotion effectiveness depends heavily on user attributes and their interactions~\cite[e.g.][]{blake2015consumer,lesscher2021offline}.
Given that randomized trials are often infeasible due to ethical concerns and the high costs associated with the combinatorial explosion of treatment options, analytical frameworks that can accurately estimate these effects from observational data are essential for decision-making in real-world applications.

To enable such accurate estimation from observational data, representation balancing with deep learning has emerged as a promising approach to address the inherent selection bias.
This approach mitigates distributional discrepancies between treatment and control groups by learning a balanced representation~\cite[e.g.][]{shalit2017estimating, yao2018representation}.
Recently, Causal Inference for Single and Interaction Treatment Effects Network~(CISI-Net) extended this approach to the multiple-treatment setting~\cite{murakami2025multipletreatmentscausaleffects}. 
To handle multiple treatments, this approach adopts a pairwise balancing strategy that aligns distributions between specific pairs of treatment patterns, while integrating a task embedding network to encode the unique contributions of single and interaction effects.

However, existing pairwise balancing methods face three critical limitations.
First, they struggle to achieve global alignment because reducing the discrepancy for one treatment-pattern pair may increase the discrepancies between other treatment-pattern pairs, leaving residual imbalance that can increase estimation variance.
Second, pairwise constraints generally do not encourage consistent local proximity structures across all treatment patterns, and such inconsistent geometric distortions can degrade heterogeneous effect estimation~\cite{cao2026pite,wang2025proximity,yao2018representation}.
Third, the quadratic computational complexity from matching all pairs makes these methods impractical for large-scale applications with numerous treatment combinations.

To address these challenges, we propose the Causal Inference for Heterogeneous Single and Interaction Treatment Effects Network~(CIHSI-Net), which minimizes global distributional discrepancies while preserving local proximity structures.
Central to CIHSI-Net is a novel Barycentric Fused Gromov-Wasserstein Balancing~(BFG-WB) objective, which aligns the representation distributions of each treatment pattern to a shared Wasserstein barycenter~\cite{agueh2011barycenters} using the Fused Gromov-Wasserstein~(FGW) discrepancy~\cite{titouan2019optimal,vayer2020fused}.
This approach achieves global alignment and preserves local proximity structures by accounting for both feature values and geometry, while reducing computational complexity from quadratic to linear.

Simulation results demonstrate that CIHSI-Net consistently outperforms state-of-the-art baselines in estimating both heterogeneous single and interaction effects.
Furthermore, ablation studies confirm the critical contribution of BFG-WB, and application to real-world marketing data validates the practical utility of the proposed framework.

\section{Related Work}
We review existing research from two perspectives: representation balancing methods and deep learning-based methods for multiple treatments.
First, deep learning-based representation balancing mitigates estimation variance arising from selection bias in observational data by reducing the discrepancy between the representations of the treatment and control groups.
Optimal Transport (OT)-based regularization, which globally aligns the representation distributions across treatment groups, has proven effective for estimating the heterogeneous causal effect~\cite{shalit2017estimating,li2021causal}.
Motivated by the reliance of counterfactual inference on local smoothness, recent extensions encourage the preservation of local proximity structures within each distribution by keeping input-space neighbors adjacent after balancing~\cite{yao2018representation,yao2019ace} or by guiding the cross-group alignment using local neighborhood geometries~\cite{wang2025proximity,cao2026pite}.

However, extending this pairwise balancing paradigm to the multiple-treatment setting faces fundamental limitations.
For global alignment, balancing each pair of treatment patterns separately~\cite{murakami2025multipletreatmentscausaleffects} allows the alignment of one pair to degrade that of another, leaving a residual imbalance~\cite{gong2022gromov,lian2020unsupervised}, with a cost quadratic in the number of patterns.
For local structure preservation, pairwise objectives promote geometric consistency only within pairs, not across all treatment patterns~\cite{liu2016structure,LU2025110864}.
Consequently, in a multiple-treatment setting, a unified framework achieving scalable global balancing with local geometry preservation remains an open challenge.

Second, instead of relying on pseudo-samples via data augmentation~\cite[e.g.][]{qian2021estimating}, we focus on architecture-driven approaches that intrinsically capture treatment interactions.
These methods typically rely on either separate outcome networks~\cite[e.g.][]{parbhoo2021ncore} or latent-variable generative models~\cite[e.g.][]{saini2019multiple, zou2020counterfactual}.
However, separate networks often yield unstable estimates for rare treatment patterns due to isolated parameters~\cite{chu2022hierarchical}, while latent-variable methods suffer reduced robustness when generative assumptions fail~\cite{rissanen2021critical}.
Consequently, there remains a need for an architecture that achieves efficient parameter sharing without strong generative assumptions.

To fulfill these needs, we propose a novel framework, CIHSI-Net.
The following section defines the problem setting and the causal estimands necessary for our framework.

\section{Preliminaries}
\subsection{Heterogeneous Effects under Multiple Treatments}
We formulate the problem of estimating heterogeneous single and interaction treatment effects under multiple treatments with the potential outcomes framework~\cite{rubin2005causal}. 
Let $\mathcal{D} = \{ (\boldsymbol{x}_i, \boldsymbol{t}_i, y_i) \}_{i=1}^{N}$ denote an observed dataset of $N$ independent units.
Here, $\boldsymbol{X} \in \mathcal{X} \subseteq \mathbb{R}^d$ denote the covariate vector and $ T \in \mathcal{T} = \{ 0, 1 \}^{K}$ denote the vector of $K$ simultaneous binary treatments, with realizations $\boldsymbol{x}$ and $\boldsymbol{t}$.
For each unit $i$, a potential outcome $Y_i(\boldsymbol{t}) \in \mathbb{R}$ exists for every $\boldsymbol{t} \in \mathcal{T}$, but only the factual outcome $y_i = Y_i(\boldsymbol{t}_i)$ is observed.
Let $Y(\boldsymbol{t})$ and $Y$ represent the corresponding population-level random variables for the potential and factual outcomes, respectively.

To identify causal effects of interest from the observed data, we adopt the following three assumptions commonly used in observational studies~\cite{imbens2015causal}. 
\begin{assumption}[Stable Unit Treatment Value Assumption]
\textit{
(1) no interference, meaning that the outcome of one unit is unaffected by the treatment assignments of other units; and (2) consistency of treatment, meaning that the potential outcomes correspond to well-defined and unique treatments~(i.e., $y_i = Y_i(\boldsymbol{t}_i)$).}
\end{assumption}

\begin{assumption}[Ignorability]
\textit{    
For any treatment pattern, the potential outcome is independent of the assigned treatment $\boldsymbol{T}$ given the observed covariates $\boldsymbol{X}$.
Formally, for all $\boldsymbol{t}$, 
$
Y(\boldsymbol{t}) \perp \boldsymbol{T} \mid \boldsymbol{X}.
$
}
\end{assumption}

\begin{assumption}[Overlap]
\textit{
Every unit has a non-zero probability of receiving any treatment pattern given its observed covariates.
Formally, for all $\boldsymbol{t}$ and $\boldsymbol{x}$, 
$
0 < P(\boldsymbol{T}=\boldsymbol{t} \mid \boldsymbol{X}=\boldsymbol{x}) < 1. 
$
}
\end{assumption}

Based on these assumptions, we define our estimands of interest: the Conditional Average Single Effect~(CASE) and the Conditional Average Interaction Effect~(CAIE).
Both estimands are defined using the conditional expected potential outcome given by 
\begin{equation}\label{eq:mu_function}
    \mu(\boldsymbol{x},\boldsymbol t) \coloneqq \mathbb{E} \bigl[Y(\boldsymbol t)\mid\boldsymbol X=\boldsymbol x\bigr].
\end{equation}
First, the CASE for treatment $k$ quantifies the marginal effect of applying treatment $k$ in isolation compared to the control group~(no treatment).
This estimand extends the standard conditional average treatment effect in the single-treatment setting~\cite{Abrevaya02102015} to the multi-treatment setting.
The CASE is defined as
\begin{equation}
\label{def:CASE}
\tau_{\mathrm{CASE}}(k, \boldsymbol{x}) \coloneqq
\mu(\boldsymbol{x}, \boldsymbol{t}_{+k}) - 
\mu(\boldsymbol{x}, \boldsymbol{0}), 
\end{equation}
where $\boldsymbol{t}_{+k}$ is the one-hot vector for treatment $k$.

Second, the CAIE quantifies interactions among a subset of treatments $S \subseteq \{ 1, \ldots, K \}$~(where $|S| \geq 2$).
We adopt the additive scale, which aligns with decision-making based on absolute outcome differences~(e.g., revenue), and extend the Average Interaction Effect~(AIE)~\cite{egami2019causal} to the conditional setting as follows.
\begin{equation}
\label{def:CAIE}
\tau_{\mathrm{CAIE}}(S,\boldsymbol{x}) \coloneqq
\sum_{Q \subseteq S} (-1)^{|S| - |Q|} \,
\mu\left(\boldsymbol{x}, \boldsymbol{t}_{(+Q)}\right), 
\end{equation}
where $\boldsymbol{t}_{(+Q)}$ denotes a vector with ones at indices in $Q$ and zeros elsewhere.
For example, in a two-treatment setting, $\tau_{\mathrm{CAIE}}(\{1,2\}, \boldsymbol{x})$ measures the deviation of the joint effect at $\boldsymbol{t} = (1,1)$ from the sum of the single effects of $(1, 0)$ and $(0, 1)$.

We now establish that the estimands defined above are identifiable from the joint distribution of the observed data $(\boldsymbol{x}, \boldsymbol{t}, y)$ under the stated assumptions.
\begin{prop}
\label{prop:identifiability}
Under Assumptions 1-3, CASE and CAIE are identifiable from the observed data.
\end{prop}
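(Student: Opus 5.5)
Since both $\tau_{\mathrm{CASE}}(k,\boldsymbol{x})$ and $\tau_{\mathrm{CAIE}}(S,\boldsymbol{x})$ are finite signed combinations of the function $\mu(\boldsymbol{x},\boldsymbol{t})$ in \eqref{eq:mu_function}, evaluated at fixed treatment patterns ($\boldsymbol{t}_{+k}$, $\boldsymbol{0}$, and $\boldsymbol{t}_{(+Q)}$ for $Q\subseteq S$), it suffices to show that $\mu(\boldsymbol{x},\boldsymbol{t})$ is identifiable for every $\boldsymbol{t}\in\mathcal{T}$ and every $\boldsymbol{x}$ in the support of $\boldsymbol{X}$. Identifiability then transfers to \eqref{def:CASE} and \eqref{def:CAIE}, because a fixed linear combination of identified quantities is identified.

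The core step is the standard chain of equalities. For any fixed $\boldsymbol{t}$ and $\boldsymbol{x}$,
\begin{equation*}
\mu(\boldsymbol{x},\boldsymbol{t})
= \mathbb{E}\bigl[Y(\boldsymbol{t})\mid \boldsymbol{X}=\boldsymbol{x}\bigr]
= \mathbb{E}\bigl[Y(\boldsymbol{t})\mid \boldsymbol{X}=\boldsymbol{x}, \boldsymbol{T}=\boldsymbol{t}\bigr]
= \mathbb{E}\bigl[Y\mid \boldsymbol{X}=\boldsymbol{x}, \boldsymbol{T}=\boldsymbol{t}\bigr].
\end{equation*}
The second equality uses Assumption 2 (Ignorability), $Y(\boldsymbol{t})\perp \boldsymbol{T}\mid \boldsymbol{X}$. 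The third uses Assumption 1 (consistency), since on the event $\boldsymbol{T}=\boldsymbol{t}$ we have $Y=Y(\boldsymbol{t})$. The final expression is a functional of the joint distribution of $(\boldsymbol{X},\boldsymbol{T},Y)$ alone.

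The one point requiring care is that this conditional expectation must be well defined, i.e.\ the conditioning event $\{\boldsymbol{X}=\boldsymbol{x},\boldsymbol{T}=\boldsymbol{t}\}$ must have positive (conditional) probability for every $\boldsymbol{t}$. This is exactly what Assumption 3 (Overlap) supplies: $P(\boldsymbol{T}=\boldsymbol{t}\mid\boldsymbol{X}=\boldsymbol{x})>0$ for all $\boldsymbol{t}\in\mathcal{T}$, including the patterns $\boldsymbol{0}$, $\boldsymbol{t}_{+k}$, and all $\boldsymbol{t}_{(+Q)}$. I will state this explicitly, noting that for continuous $\boldsymbol{X}$ the conditional expectation is understood as a regular conditional expectation defined $P_{\boldsymbol{X}}$-almost everywhere.

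Substituting the identified expression into \eqref{def:CASE} and \eqref{def:CAIE} yields
\begin{equation*}
\tau_{\mathrm{CASE}}(k,\boldsymbol{x})=\mathbb{E}[Y\mid\boldsymbol{x},\boldsymbol{t}_{+k}]-\mathbb{E}[Y\mid\boldsymbol{x},\boldsymbol{0}]
\end{equation*}
and
\begin{equation*}
\tau_{\mathrm{CAIE}}(S,\boldsymbol{x})=\sum_{Q\subseteq S}(-1)^{|S|-|Q|}\,\mathbb{E}[Y\mid\boldsymbol{x},\boldsymbol{t}_{(+Q)}],
\end{equation*}
both of which are functions of the observed-data distribution. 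This completes the argument.
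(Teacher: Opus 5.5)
Your proposal is correct and takes the same route as the paper. The paper also writes CASE and CAIE as linear combinations of $\mu(\boldsymbol{x},\boldsymbol{t})$ and identifies $\mu$ via ignorability, consistency, and overlap; the only difference is that it cites the standard derivation, whereas you write out the chain of equalities and the role of overlap explicitly.
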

The proof is provided in Appendix~\ref{appendix:proof_identifiability}.

\subsection{Motivation for Representation Balancing}
\begin{figure}[t!]
  \centering
  \includegraphics[width=\linewidth, keepaspectratio]{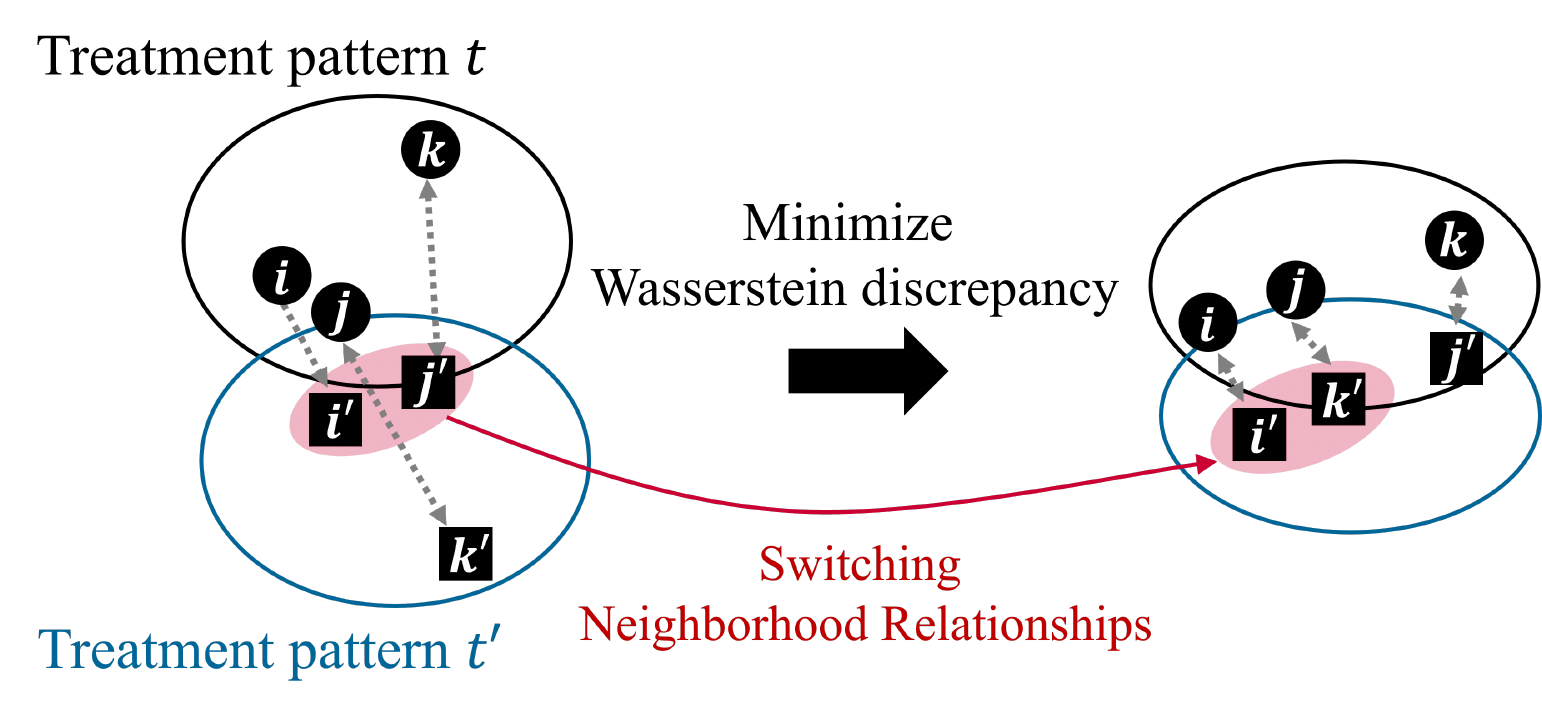}
  \caption{
    Illustration of neighborhood switching under Wasserstein-based representation balancing.
    The gray arrows show OT correspondences.
    Wasserstein minimization aligns the two distributions but can map neighboring points to distant locations, distorting within-distribution neighborhood relations; a structure-preserving match such as $j\!\leftrightarrow\!j'$ would be desirable.
  }
  \label{fig:viz_balancing_problem}
\end{figure}

Representation balancing mitigates estimation variance by minimizing distributional discrepancies across treatment groups.
A widely adopted metric is the Wasserstein distance~\cite[e.g.][]{cheng2022learning, shalit2017estimating}, which uses OT to measure the minimal transformation cost between distributions.

However, because the Wasserstein distance minimizes only the aggregate point-to-point transport cost, it does not explicitly account for local proximity structures within each distribution~\cite{wang2025proximity}.
Here, the local proximity structures refer to the relative geometric relationships among units within a specific treatment pattern.
Consequently, as illustrated in Figure~\ref{fig:viz_balancing_problem}, neighbors in one distribution may be mapped to scattered locations in another, potentially leading to structural collapse.

This disruption of local proximity structures degrades the accuracy of causal effect estimation.
Counterfactual prediction typically relies on the assumption of local smoothness, where units with similar representations yield similar outcomes~\cite{li2019locality, xu2012robustness}.
Therefore, accurate estimation of counterfactual outcomes $Y(\boldsymbol{t})$ benefits from (i) reducing distributional discrepancies~(global alignment) and (ii) ensuring that neighborhood consistency is maintained across treatment patterns~(local structure preservation).

This issue is particularly critical in the multiple-treatment setting.
When pairwise balancing is applied across numerous treatment patterns, local structure preservation becomes inconsistent, where the structural distortion may vary significantly depending on the pair being aligned.
Such non-uniformity destabilizes the reference neighbors used for inference, thereby degrading the estimation of heterogeneous single and interaction effects.

To address these challenges, we propose a novel framework that simultaneously achieves global alignment and local structure preservation.
We introduce a new architecture incorporating the FGW discrepancy to satisfy these dual objectives in the following section.

\section{CIHSI-Net: The Proposed Framework}
\subsection{Barycentric Fused Gromov-Wasserstein Balancing}
To overcome the limitations of pairwise balancing, we propose BFG-WB.
BFG-WB functions as a regularization term within our proposed CIHSI-Net, designed to achieve two critical objectives simultaneously: (i) reducing estimation variance induced by distributional imbalances globally by aligning all treatment representations to a common anchor, and (ii) preserving local proximity structures to ensure stable heterogeneous effect estimation.

Formally, BFG-WB minimizes the weighted sum of discrepancies between the representation distribution of each treatment pattern $\boldsymbol{t}$ and a shared Wasserstein barycenter $\boldsymbol{R}^{\ast}_{b}$.
Unlike standard approaches that use the Wasserstein distance, we employ the FGW discrepancy~\cite{titouan2019optimal,vayer2020fused}.
By integrating the feature-based Wasserstein distance with the structure-based Gromov-Wasserstein distance, the FGW discrepancy enables us to simultaneously achieve global distribution alignment~(Objective~(i)) and local structure preservation~(Objective~(ii)).
The BFG-WB regularization term $\mathcal{L}_{\phi}$ is defined as follows.
\begin{equation}
\label{eq:bfgwb_overview}
    \mathcal{L}_{\phi} = \sum_{\boldsymbol{t} \in \mathcal{T}} w_{\boldsymbol{t}} F\left(\boldsymbol{R}_{\boldsymbol{t}}, \boldsymbol{R}^{\ast}_{b}\right),
\end{equation}
where $\boldsymbol{R}_{b}^{\ast}$ is the Wasserstein barycenter, and
$\boldsymbol{R}_{\boldsymbol{t}}$ denotes the representation distribution on $\mathcal{R}$ for treatment pattern $\boldsymbol{t}$.
$F( \cdot, \cdot )$ measures the FGW discrepancy between two distributions, and $w_{\boldsymbol{t}} > 0$ is a weight satisfying $\sum_{\boldsymbol{t} \in \mathcal{T}} w_{\boldsymbol{t}} = 1$.
In the following, we detail the two core components: the Wasserstein barycenter $\boldsymbol{R}_{b}^{\ast}$ and the FGW discrepancy $F(\cdot, \cdot)$.

\paragraph{Wasserstein Barycenter.} 
Following the formulation in \cite{agueh2011barycenters}, we define the Wasserstein barycenter $\boldsymbol{R}_{b}^{\ast}$ as the centroid distribution that minimizes the weighted sum of squared Wasserstein distances from the representation distributions ${\boldsymbol{R}_{\boldsymbol{t}}}$ as follows.
\begin{equation}
\label{eq:def_barycenter}
\boldsymbol{R}^{\ast}_{b}
= \underset{\boldsymbol{R}_{b}} {\operatorname{argmin}}
\sum_{\boldsymbol{t} \in \mathcal{T}} \lambda_{\boldsymbol{t}} W^{2}_{2}(\boldsymbol{R}_{\boldsymbol{t}}, \boldsymbol{R}_{b}),
\end{equation}
where $W_2(\cdot,\cdot)$ is the 2-Wasserstein distance~(see Appendix~\ref{appendix:assumptions_and_notation} for its definition).
$\boldsymbol{R}_{b}$ denotes a barycenter candidate distribution, and $\lambda_{\boldsymbol{t}} \ge 0$ represents the weight for treatment pattern $\boldsymbol{t}$ satisfying $\sum_{\boldsymbol{t} \in \mathcal{T}} \lambda_{\boldsymbol{t}} = 1$.

Crucially, the introduction of the Wasserstein barycenter $\boldsymbol{R}_{b}^{\ast}$ resolves the optimization conflicts in pairwise balancing.
In standard pairwise approaches, minimizing the discrepancy for one pair of treatments may inadvertently increase it for another due to differing OT plans, which leads to persistent local imbalances associated with selection bias.
In contrast, our approach adopts a ``star-shaped''~(See Figure~\ref{fig:bfgwb_fig}) alignment strategy where all representation distributions are updated toward a single, fixed reference~(barycenter) $\boldsymbol{R}_{b}^{\ast}$.
This alignment strategy promotes coherent alignment and effectively mitigates estimation variance across all treatment patterns~(See Appendix~\ref{appendix:supporting_lemmas}).

\begin{figure}[t!]
  \centering
  \includegraphics[width=\linewidth, keepaspectratio]{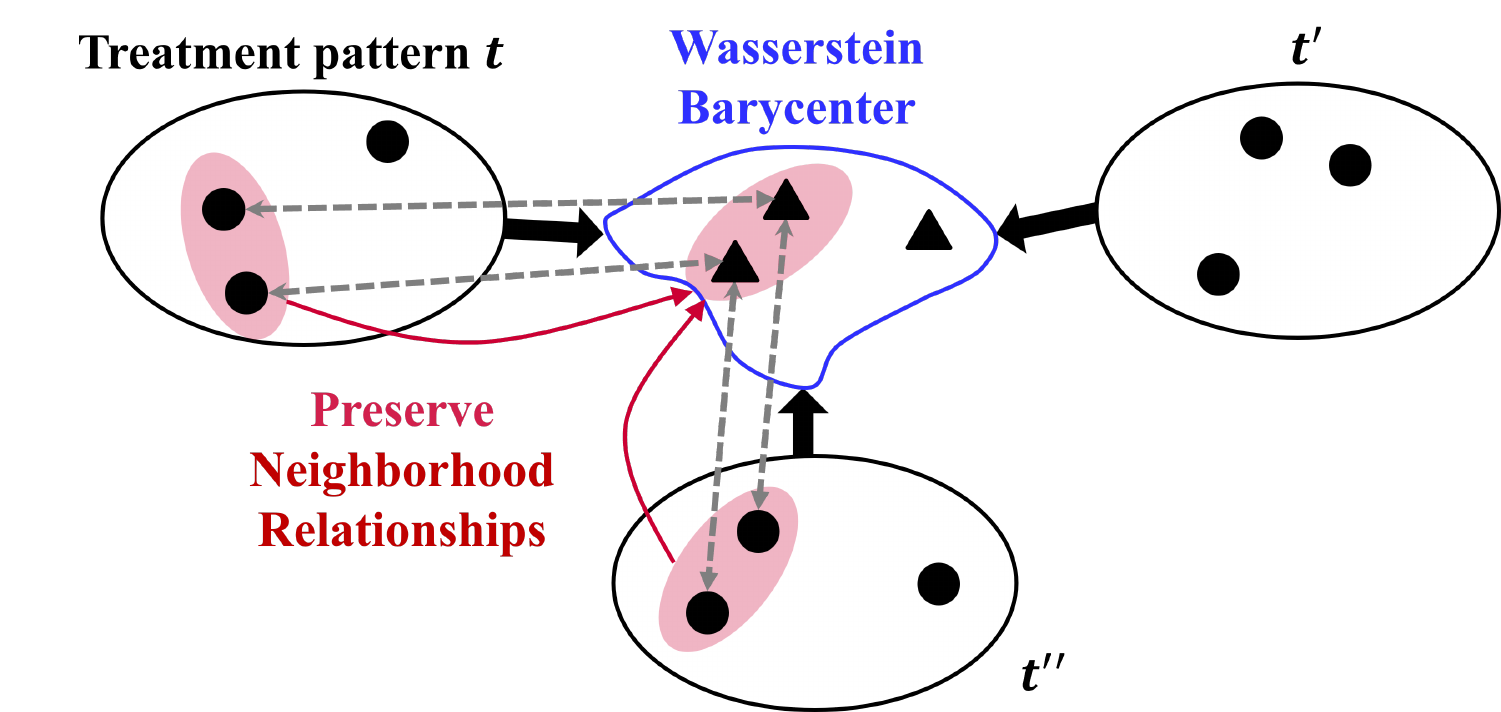}
  \caption{
    Conceptual illustration of BFG-WB. 
    The gray arrows show a subset of OT correspondences. 
    By evaluating the FGW discrepancy, the alignment is encouraged to preserve within-distribution neighborhood relations.
  }
  \label{fig:bfgwb_fig}
\end{figure}

\paragraph{Fused Gromov-Wasserstein Discrepancy.} 
With the barycenter $\boldsymbol{R}_{b}^{\ast}$ established as the global anchor, we employ the FGW discrepancy to measure the discrepancy between the representation distribution $\boldsymbol{R}_{\boldsymbol{t}}$ of each treatment pattern $\boldsymbol{t}$ and $\boldsymbol{R}_{b}^{\ast}$.
The FGW discrepancy integrates the Wasserstein distance~(for global distribution alignment), which evaluates feature values, and the Gromov-Wasserstein distance~(for local proximity structure preservation), which evaluates the geometric structure within distributions, under a single OT plan $\pi^{(\boldsymbol{t})}$.
Specifically, the FGW discrepancy $F(\boldsymbol{R}_{\boldsymbol{t}}, \boldsymbol{R}_{b}^{\ast})$ is defined as follows.

\begin{equation}
\label{eq:fgw_to_barycenter}
\begin{aligned}
F\!\left(\boldsymbol{R}_{\boldsymbol{t}}, \boldsymbol{R}^{\ast}_{b}\right)
=
\inf_{\pi^{(\boldsymbol{t})} \in \Pi(\boldsymbol{R}_{\boldsymbol{t}},\boldsymbol{R}^{\ast}_{b})}
\Bigg[
\eta
\int
c^{(\boldsymbol{t})}\!\left(\boldsymbol{r},\boldsymbol{z}\right)\, d\pi^{(\boldsymbol{t})}(\boldsymbol{r},\boldsymbol{z}) \quad
\\
+\;
(1-\eta)
\iint
d_b^{(\boldsymbol{t})}(\boldsymbol{r},\boldsymbol{r}',\boldsymbol{z},\boldsymbol{z}')
\, d\pi^{(\boldsymbol{t})}(\boldsymbol{r},\boldsymbol{z})\, d\pi^{(\boldsymbol{t})}(\boldsymbol{r}',\boldsymbol{z}')
\Bigg],
\end{aligned}
\end{equation}
where $\Pi(\boldsymbol{R}_{\boldsymbol{t}},\boldsymbol{R}^{\ast}_{b})$ denotes the set of all couplings on $\mathcal{R}\times\mathcal{R}$ with marginals $\boldsymbol{R}_{\boldsymbol{t}}$ and $\boldsymbol{R}^{\ast}_{b}$.
Here, $\boldsymbol{r},\boldsymbol{r}',\boldsymbol{z},\boldsymbol{z}'\in\mathcal{R}$ with $\boldsymbol{r},\boldsymbol{r}'\!\sim\!\boldsymbol{R}_{\boldsymbol{t}}$ and $\boldsymbol{z},\boldsymbol{z}'\!\sim\!\boldsymbol{R}^{\ast}_{b}$.
The first term accounts for global distribution alignment, where $c^{(\boldsymbol{t})}(\boldsymbol{r},\boldsymbol{z})=\|\boldsymbol{r}-\boldsymbol{z}\|_2$ measures the Euclidean distance between features.
The second term accounts for local proximity structure preservation.
Specifically, we define $d_b^{(\boldsymbol{t})}(\boldsymbol{r},\boldsymbol{r}',\boldsymbol{z},\boldsymbol{z}') := \bigl(d^{(\boldsymbol{t})}(\boldsymbol{r},\boldsymbol{r}') - d^{(b)}(\boldsymbol{z},\boldsymbol{z}')\bigr)^2$, where $d^{(\boldsymbol{t})}(\boldsymbol{r},\boldsymbol{r}')=\|\boldsymbol{r}-\boldsymbol{r}'\|_2$ and $d^{(b)}(\boldsymbol{z},\boldsymbol{z}')=\|\boldsymbol{z}-\boldsymbol{z}'\|_2$ represent the within-distribution distances~(local geometry) within the treatment representation distribution and the barycenter, respectively.
The hyperparameter $\eta\in(0,1]$ controls the trade-off between the first term~(global distribution alignment) and the second term~(local proximity structure preservation).

The second term in Equation~\eqref{eq:fgw_to_barycenter} plays an important role in our framework.
This term incurs a low cost only when pairs of points that are close in the representation space of treatment $\boldsymbol{t}$ are matched to pairs that are similarly close in the barycenter, thereby penalizing structure-blind matching based solely on feature similarity.
This mechanism mitigates distortions in neighborhood geometry by aligning each distribution to the local proximity structure of the barycenter.
As a result, it prevents counterfactual prediction from being performed using neighborhoods that mix units with different heterogeneity patterns.
This consistency significantly improves the estimation stability of heterogeneous causal effects, because such estimation depends on covariate-conditioned counterfactual prediction.

Beyond its methodological advantages in estimation variance reduction and structure preservation, BFG-WB offers a significant practical benefit: computational efficiency.
Whereas pairwise balancing requires comparing all pairs of treatment patterns, our star-shaped strategy compares each pattern only with the barycenter.
We formalize this advantage in the following proposition.
\begin{prop}[Computational Efficiency]
\label{prop:computational_efficiency}
Let $L = 2^K$ denote the number of treatment patterns derived from $K$ binary treatments, and let one evaluation denote the computation of a single OT-based discrepancy between two empirical distributions.
Per training step, standard pairwise balancing requires $\binom{L}{2} = O(L^2)$ evaluations, whereas BFG-WB requires at most $(L_b + 1)\,L = O(L)$ evaluations, where $L_b$ denotes a fixed upper bound on the number of barycenter-update iterations and is independent of $L$.
\end{prop}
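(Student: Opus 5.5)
The plan is to count OT evaluations directly under each strategy, treating one evaluation as the unit of cost as the statement specifies. The total count is the only quantity that matters. The per-evaluation cost (which depends on sample sizes and the solver) is the same kind of object in both cases, so it can be factored out.

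\textbf{Pairwise balancing.} Pairwise balancing, as in CISI-Net, evaluates one OT discrepancy for every unordered pair $\{\boldsymbol{t},\boldsymbol{t}'\}\subseteq\mathcal{T}$ with $\boldsymbol{t}\neq\boldsymbol{t}'$ at each training step. There are $|\mathcal{T}|=L=2^K$ patterns, so the count is exactly $\binom{L}{2}=L(L-1)/2$. This is $\Theta(L^2)$, and in particular $O(L^2)$.

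\textbf{BFG-WB: barycenter computation.} I split one training step into two phases. In the first phase the barycenter $\boldsymbol{R}^{\ast}_b$ of Equation~\eqref{eq:def_barycenter} is computed by a fixed-point scheme of the standard free-support or fixed-support kind, as in Cuturi--Doucet or \citet{agueh2011barycenters}-style iterations. Each iteration computes, for every $\boldsymbol{t}\in\mathcal{T}$, one OT plan between the empirical $\boldsymbol{R}_{\boldsymbol{t}}$ and the current candidate $\boldsymbol{R}_b$. It then updates the support or weights of $\boldsymbol{R}_b$ from these $L$ plans, for example by weighted barycentric projection. The update itself requires no further OT evaluations. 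One iteration therefore costs exactly $L$ evaluations, and with at most $L_b$ iterations this phase costs at most $L_bL$.

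\textbf{BFG-WB: regularizer.} In the second phase, $\mathcal{L}_\phi$ in Equation~\eqref{eq:bfgwb_overview} is formed from one FGW discrepancy $F(\boldsymbol{R}_{\boldsymbol{t}},\boldsymbol{R}^{\ast}_b)$ per pattern, which adds $L$ evaluations. Summing the two phases gives at most $(L_b+1)L$. Since $L_b$ is a constant independent of $L$, this is $O(L)$.

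\textbf{Main obstacle.} The delicate point is justifying that each barycenter iteration costs only $L$ evaluations and that $L_b$ does not grow with $L$. I would handle this by stating it explicitly as a property of the implementation: a capped iteration count, or early stopping with a fixed maximum. I would also note that warm-starting $\boldsymbol{R}_b$ from the previous training step keeps the fixed cap reasonable. The FGW solve (conditional-gradient iterations) is counted as a single evaluation, just as a Wasserstein solve is in the pairwise count. This keeps the comparison like-for-like and finishes the argument.
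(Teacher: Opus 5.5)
Your proposal is correct and matches the paper's proof: you use the same $\binom{L}{2}$ count for pairwise balancing, and the same split of BFG-WB into at most $L_b L$ OT solves for the barycenter (one per pattern per iteration) plus $L$ FGW evaluations, giving $(L_b+1)L = O(L)$. Like the paper, you take the $L$ solves per iteration and the fixed cap $L_b$ as properties of the implementation rather than deriving them.
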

A detailed derivation is provided in Appendix~\ref{appendix:computational_cost_remark}, and we empirically verify this scalability advantage through runtime comparisons in Appendix~\ref{appendix:exp_computing_speed}

\subsection{Theoretical Analysis}
In this section, we demonstrate that BFG-WB provides theoretically sound error control for estimating both CASE and CAIE from observational data.
Although accurate causal effect estimation requires controlling for prediction errors under all possible treatment patterns, counterfactual prediction errors cannot be minimized directly because only factual outcomes are observed.

To bridge this gap, we draw upon the theoretical framework of representation-based causal inference~\cite{wang2025proximity}.
A key insight from this literature is that the prediction error on the unobserved target domain~(counterfactuals) is theoretically bounded by the error on the observed source domain~(factuals) plus a discrepancy measure between their distributions.
Motivated by this principle, we derive upper bounds for the estimation errors of CASE and CAIE that are expressible solely through observational quantities.
Specifically, we demonstrate that these errors are bounded by the sum of the factual prediction error and the distributional discrepancies, which are explicitly minimized by the proposed BFG-WB.

The accuracy of the estimated CASE and CAIE is assessed using the integrated squared error with respect to the marginal covariate distribution $p(\boldsymbol{x})$.
These metrics quantify the expected squared deviation between the estimated and true causal effects, defined analogously to the Precision in Estimation of Heterogeneous Effects~(PEHE)~\cite{hill2011bayesian} used in single-treatment studies.
Formally, the estimation errors of CASE and CAIE are defined as follows.
\begin{equation}
\label{eq:def_case_error}
    \epsilon_{\mathrm{CASE}}(k) = \int_{\mathcal{X}}\left( \hat\tau_{\mathrm{CASE}}\left(k,\boldsymbol{x} \right) - \tau_{\mathrm{CASE}}\left(k,\boldsymbol{x} \right) \right)^2 p(\boldsymbol{x}) d\boldsymbol{x},
\end{equation}
\begin{equation}
\label{eq:def_caie_error}
    \epsilon_{\mathrm{CAIE}}({S}) = \int_{\mathcal{X}}\left( \hat\tau_{\mathrm{CAIE}}\left(S,\boldsymbol{x} \right) - \tau_{\mathrm{CAIE}}\left(S,\boldsymbol{x} \right) \right)^2 p(\boldsymbol{x}) d\boldsymbol{x},
\end{equation}
where $\hat{\tau}_{\mathrm{CASE}}(\cdot)$ and $\hat{\tau}_{\mathrm{CAIE}}(\cdot) $ denote the estimated CASE and CAIE.
For each treatment pattern $\boldsymbol{t}$, the expected prediction error of the factual outcome is defined as follows.
\begin{equation}
\epsilon_{\mathrm{F}}^{(\boldsymbol{t})}
=
\int_{\mathcal{X}}
l\left( \boldsymbol{x}, \boldsymbol{t} \right)
\, p(\boldsymbol{x} \mid \boldsymbol{t}) \,p(\boldsymbol{t})\, d\boldsymbol{x},
\end{equation}
where 
$
  l(\boldsymbol{x},\boldsymbol{t})
  =
  \int \bigl(Y(\boldsymbol{t})-\hat\mu(\boldsymbol{x}, \boldsymbol{t})\bigr)^2\,p\bigl(Y(\boldsymbol{t})\mid \boldsymbol{x}\bigr)\,dY(\boldsymbol{t})
$
denotes the conditional expected squared prediction loss, and $\hat{\mu}(\boldsymbol{x},\boldsymbol{t})$ denotes the estimate of $\mu(\boldsymbol{x},\boldsymbol{t})$.

Formally, we establish the upper bounds for $\epsilon_{\text{CASE}}$ and $\epsilon_{\text{CAIE}}$ as follows.
\begin{thm}[Upper Bound for CASE]
\label{thm:upper_bound_case}
Suppose that Assumptions 1–3 and the auxiliary conditions in Appendix~\ref{appendix:assumptions_and_notation} hold~(in particular, the expected squared loss $l(\boldsymbol{x},\boldsymbol{t})$ is $B_{\phi}$-Lipschitz in the representation space), and that the weights in Equation~\eqref{eq:bfgwb_overview} are uniform, i.e., $w_{\boldsymbol{t}}=2^{-K}$. 
Then, for any $k \in \{1,\ldots,K\}$,
\begin{equation*}
% \label{eq:case}
\epsilon_{\mathrm{CASE}}(k) \le 2 \Biggl\{ \frac{1}{p(\boldsymbol{t}_{+k})}\epsilon_{\mathrm{F}}^{(\boldsymbol{t}_{+k})} + \frac{1}{p(\boldsymbol{0})}\epsilon_{\mathrm{F}}^{(\boldsymbol{0})}
+ \frac{2^{2K}}{\eta} B_{\phi}\mathcal{L}_{\phi} \Biggr\}. 
\end{equation*}
\end{thm}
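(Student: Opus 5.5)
We argue in the style of the single-treatment PEHE bounds of Shalit et al.\ and \cite{wang2025proximity}, adapted to the star-shaped geometry of BFG-WB. The proof has four steps: a pointwise decomposition, a reduction to factual losses plus a counterfactual residual, a transport bound on that residual, and a comparison of the transport cost with $\mathcal{L}_{\phi}$.

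\textbf{Step 1 (decomposition).} Write $\hat\tau_{\mathrm{CASE}}(k,\boldsymbol{x})-\tau_{\mathrm{CASE}}(k,\boldsymbol{x}) = (\hat\mu(\boldsymbol{x},\boldsymbol{t}_{+k})-\mu(\boldsymbol{x},\boldsymbol{t}_{+k})) - (\hat\mu(\boldsymbol{x},\boldsymbol{0})-\mu(\boldsymbol{x},\boldsymbol{0}))$. The inequality $(a-b)^2\le 2a^2+2b^2$ then gives
\[
\epsilon_{\mathrm{CASE}}(k)\le 2\sum_{\boldsymbol{t}\in\{\boldsymbol{t}_{+k},\boldsymbol{0}\}}\int (\hat\mu(\boldsymbol{x},\boldsymbol{t})-\mu(\boldsymbol{x},\boldsymbol{t}))^2 p(\boldsymbol{x})\,d\boldsymbol{x}.
\]
By a bias--variance argument, each integrand is at most $l(\boldsymbol{x},\boldsymbol{t})$: the conditional noise variance is nonnegative, and Assumption 2 lets us identify $p(Y(\boldsymbol{t})\mid\boldsymbol{x})$ with the observed conditional. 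So it suffices to bound $\epsilon^{(\boldsymbol{t})}:=\int l(\boldsymbol{x},\boldsymbol{t})p(\boldsymbol{x})\,d\boldsymbol{x}$ for each of the two patterns.

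\textbf{Step 2 (factual plus counterfactual).} Split $p(\boldsymbol{x}) = p(\boldsymbol{t})p(\boldsymbol{x}\mid\boldsymbol{t}) + (1-p(\boldsymbol{t}))p(\boldsymbol{x}\mid\neg\boldsymbol{t})$, or more conveniently $p(\boldsymbol{x})=\sum_{\boldsymbol{t}'}p(\boldsymbol{t}')p(\boldsymbol{x}\mid\boldsymbol{t}')$. The factual part is exactly $\epsilon_{\mathrm{F}}^{(\boldsymbol{t})}$. A cruder route produces the $1/p(\boldsymbol{t})$ factor in the statement:
\[
\epsilon^{(\boldsymbol{t})} \le \frac{1}{p(\boldsymbol{t})}\epsilon_{\mathrm{F}}^{(\boldsymbol{t})} + \sum_{\boldsymbol{t}'} p(\boldsymbol{t}')\Bigl|\int l\,(p(\boldsymbol{x}\mid\boldsymbol{t}')-p(\boldsymbol{x}\mid\boldsymbol{t}))\,d\boldsymbol{x}\Bigr|.
\]
Each summand compares one distribution against another, so we move to representation space. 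Using the invertibility of $\phi$ from the auxiliary conditions, we rewrite each integral as an integral of $l\circ\phi^{-1}$ against $\boldsymbol{R}_{\boldsymbol{t}'}-\boldsymbol{R}_{\boldsymbol{t}}$.

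\textbf{Step 3 (transport via the barycenter).} Since $l$ is $B_\phi$-Lipschitz in representation space, Kantorovich--Rubinstein duality bounds each difference by $B_\phi W_1(\boldsymbol{R}_{\boldsymbol{t}'},\boldsymbol{R}_{\boldsymbol{t}})$. The triangle inequality through $\boldsymbol{R}^\ast_b$ gives $W_1(\boldsymbol{R}_{\boldsymbol{t}'},\boldsymbol{R}_{\boldsymbol{t}})\le W_1(\boldsymbol{R}_{\boldsymbol{t}'},\boldsymbol{R}^\ast_b)+W_1(\boldsymbol{R}_{\boldsymbol{t}},\boldsymbol{R}^\ast_b)$. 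This star-shaped routing is what lets BFG-WB replace pairwise terms.

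\textbf{Step 4 (comparison with $\mathcal{L}_\phi$).} The Gromov term is nonnegative, and the plan $\pi^{(\boldsymbol{t})}$ is feasible for $W_1$ with cost $\|\boldsymbol{r}-\boldsymbol{z}\|_2$, so $\eta W_1(\boldsymbol{R}_{\boldsymbol{t}},\boldsymbol{R}^\ast_b)\le F(\boldsymbol{R}_{\boldsymbol{t}},\boldsymbol{R}^\ast_b)$. With uniform weights $w_{\boldsymbol{t}}=2^{-K}$, this gives $F(\boldsymbol{R}_{\boldsymbol{t}},\boldsymbol{R}^\ast_b)\le 2^K\mathcal{L}_\phi$ for every $\boldsymbol{t}$. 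Summing the Step~3 bounds over $\boldsymbol{t}'$ with $p(\boldsymbol{t}')\le 1$ and adding the two patterns $\boldsymbol{t}_{+k},\boldsymbol{0}$ yields a bound of the form $C\,2^K\cdot 2^K B_\phi\mathcal{L}_\phi/\eta$. Here one factor $2^K$ comes from the number of patterns summed and the other from converting a single $F$ into $\mathcal{L}_\phi$; this produces the $2^{2K}$ in the statement.

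\textbf{Main obstacle.} The constants are the hard part. A naive count gives roughly $2\cdot 2\cdot 2^{2K}$ rather than $2^{2K}$ inside the braces. To tighten this, I would avoid summing pairwise triangle inequalities and instead bound $\sum_{\boldsymbol{t}'}p(\boldsymbol{t}')W_1(\boldsymbol{R}_{\boldsymbol{t}'},\boldsymbol{R}_{\boldsymbol{t}})$ directly. Two options: compare $p(\boldsymbol{x})$ with $p(\boldsymbol{x}\mid\boldsymbol{t})$ in one step, writing the mixture $\sum_{\boldsymbol{t}'}p(\boldsymbol{t}')\boldsymbol{R}_{\boldsymbol{t}'}$ and using convexity of $W_1$; or use $\sum_{\boldsymbol{t}'}F(\boldsymbol{R}_{\boldsymbol{t}'},\boldsymbol{R}^\ast_b)=2^K\mathcal{L}_\phi$ exactly, so that only the single term $F(\boldsymbol{R}_{\boldsymbol{t}},\boldsymbol{R}^\ast_b)$ carries the extra $2^K$. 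I would also take care to justify $\mu$ versus $l$ through ignorability, and the change of variables through the auxiliary conditions.
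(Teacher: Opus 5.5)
Your four steps are the paper's proof. The ingredients match: the $(a-b)^2\le 2a^2+2b^2$ split with $(\hat\mu-\mu)^2\le l$, and the factual/counterfactual split. Your Step~2 display is not a ``cruder route'': up to the absolute values it is an identity, namely Lemma~\ref{lem:cf_bound} before $p(\boldsymbol{t}')\le 1$ is used. The remaining steps also match: Kantorovich--Rubinstein through $\Psi$, the triangle inequality through $\boldsymbol{R}_b^{\ast}$, and $\eta W_1\le F$.

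The gap is the one you flag but do not close. The theorem asserts the specific constant $2^{2K}/\eta$, and neither your explicit count (about $4\cdot 2^{2K}$) nor your second remedy reaches it. Write $W_{\boldsymbol{t}}:=W_1(\boldsymbol{R}_{\boldsymbol{t}},\boldsymbol{R}_b^{\ast})$ and $S:=\sum_{\boldsymbol{t}\in\mathcal{T}}W_{\boldsymbol{t}}$. After discarding $p(\boldsymbol{t}')\le 1$, each $\boldsymbol{t}\in\{\boldsymbol{t}_{+k},\boldsymbol{0}\}$ gives $\sum_{\boldsymbol{t}'\neq\boldsymbol{t}}W_1(\boldsymbol{R}_{\boldsymbol{t}},\boldsymbol{R}_{\boldsymbol{t}'})\le S+(2^K-2)W_{\boldsymbol{t}}$. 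Suppose you now bound the lone term by $W_{\boldsymbol{t}}\le F(\boldsymbol{R}_{\boldsymbol{t}},\boldsymbol{R}_b^{\ast})/\eta\le 2^K\mathcal{L}_{\phi}/\eta$ separately for each of the two patterns, as ``only the single term carries the extra $2^K$'' suggests. The Wasserstein part then becomes $2^{K+1}(2^K-1)\mathcal{L}_{\phi}/\eta$, which is strictly larger than $2^{2K}\mathcal{L}_{\phi}/\eta$ for every $K\ge 2$.

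The missing idea is never to convert an individual barycentric term into $2^K\mathcal{L}_{\phi}$. Only the full sum $\sum_{\boldsymbol{t}}F(\boldsymbol{R}_{\boldsymbol{t}},\boldsymbol{R}_b^{\ast})=2^K\mathcal{L}_{\phi}$ should be converted, after everything has been collected as a multiple of $S$. The paper adds the two per-pattern bounds and uses nonnegativity to get $W_{\boldsymbol{t}_{+k}}+W_{\boldsymbol{0}}\le S$. The total is then at most $(2^K-2)S+2S=2^KS\le 2^{2K}\mathcal{L}_{\phi}/\eta$, exactly the stated constant.

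Your first option also closes the gap, with a sharper constant, provided it means keeping the weights after the triangle inequality: $\sum_{\boldsymbol{t}'\neq\boldsymbol{t}}p(\boldsymbol{t}')(W_{\boldsymbol{t}'}+W_{\boldsymbol{t}})\le\sum_{\boldsymbol{t}'\neq\boldsymbol{t}}W_{\boldsymbol{t}'}+(1-p(\boldsymbol{t}))W_{\boldsymbol{t}}\le S$. Both patterns together then contribute at most $2S\le 2^{K+1}\mathcal{L}_{\phi}/\eta\le 2^{2K}\mathcal{L}_{\phi}/\eta$. Convexity of $W_1$ over the mixture only returns you to the $p(\boldsymbol{t}')$-weighted sum you already have in Step~2. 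The improvement comes from not discarding $p(\boldsymbol{t}')$ and $1-p(\boldsymbol{t})$ before passing to $\mathcal{L}_{\phi}$.
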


\begin{thm}[Upper Bound for CAIE]
\label{thm:upper_bound_caie}
Under the same assumptions as Theorem~\ref{thm:upper_bound_case}, consider an interaction set $S \subseteq \{1,\ldots,K\}$ with $|S|\ge 2$.
Let $(a_{\boldsymbol{t}})_{\boldsymbol{t}\in\mathcal{T}}$  be a constant vector reflecting the combinatorial structure of $S$~($a_{\boldsymbol{t}}\in\{-1,0,1\}$). 
The estimation error $\epsilon_{\mathrm{CAIE}}(S)$ is bounded by:
\begin{equation*}
% \label{eq:caie}
\begin{aligned}
\epsilon_{\mathrm{CAIE}}(S)
&\le
\left(\sum_{\boldsymbol{t} \in \mathcal{T}} a_{\boldsymbol{t}}^2 \right)
\Biggl\{
\sum_{\boldsymbol{t} \in \mathcal{T}} \frac{1}{p(\boldsymbol{t})}\,
\epsilon_{\mathrm{F}}^{(\boldsymbol{t})}
\\
&\qquad\qquad
+ \frac{2^{K+1}}{\eta}\, B_{\phi}\,\left(2^K - 1\right)\, \mathcal{L}_{\phi}
\Biggr\}.
\end{aligned}
\end{equation*}
\end{thm}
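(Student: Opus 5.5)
We prove Theorem~\ref{thm:upper_bound_caie} by mirroring the argument for Theorem~\ref{thm:upper_bound_case}. The plan has four steps: write the CAIE error as a signed sum of per-pattern errors, apply Cauchy--Schwarz, reduce each marginal counterfactual error to a factual error plus a Wasserstein term, and control the Wasserstein terms by $\mathcal{L}_\phi$ through the barycenter.

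\textbf{Step 1 (signed sum and Cauchy--Schwarz).} Define $a_{\boldsymbol t}=(-1)^{|S|-|Q|}$ if $\boldsymbol t=\boldsymbol t_{(+Q)}$ for some $Q\subseteq S$, and $a_{\boldsymbol t}=0$ otherwise. Then Equation~\eqref{def:CAIE} gives
\[
\hat\tau_{\mathrm{CAIE}}(S,\boldsymbol x)-\tau_{\mathrm{CAIE}}(S,\boldsymbol x)=\sum_{\boldsymbol t} a_{\boldsymbol t}\bigl(\hat\mu(\boldsymbol x,\boldsymbol t)-\mu(\boldsymbol x,\boldsymbol t)\bigr).
\]
By Cauchy--Schwarz, the square of this is at most $(\sum_{\boldsymbol t} a_{\boldsymbol t}^2)\sum_{\boldsymbol t}(\hat\mu-\mu)^2$. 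Integrating against $p(\boldsymbol x)$ gives
\[
\epsilon_{\mathrm{CAIE}}(S)\le \Bigl(\sum_{\boldsymbol t} a_{\boldsymbol t}^2\Bigr)\sum_{\boldsymbol t}\epsilon^{(\boldsymbol t)},\qquad \epsilon^{(\boldsymbol t)}:=\int(\hat\mu-\mu)^2p(\boldsymbol x)\,d\boldsymbol x .
\]
By Jensen's inequality together with ignorability (the bias--variance identity), $\epsilon^{(\boldsymbol t)}\le\int l(\boldsymbol x,\boldsymbol t)\,p(\boldsymbol x)\,d\boldsymbol x$.

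\textbf{Step 2 (factual versus counterfactual split).} Decompose $p(\boldsymbol x)=p(\boldsymbol t)p(\boldsymbol x\mid\boldsymbol t)+\sum_{\boldsymbol s\neq\boldsymbol t}p(\boldsymbol s)p(\boldsymbol x\mid\boldsymbol s)$. Bound $\int l\,p(\boldsymbol x)$ by $\frac{1}{p(\boldsymbol t)}\epsilon_F^{(\boldsymbol t)}$ plus counterfactual terms $\int l(\boldsymbol x,\boldsymbol t)\,[p(\boldsymbol x\mid\boldsymbol s)-p(\boldsymbol x\mid\boldsymbol t)]$ for $\boldsymbol s\ne\boldsymbol t$. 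We will follow exactly the normalization used in the CASE proof (supporting lemmas in Appendix~\ref{appendix:supporting_lemmas}), so the factual coefficient is $1/p(\boldsymbol t)$ and the $p(\boldsymbol s)\le1$ factors are dropped.

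Next, push each difference into representation space via $\phi$, using the invertibility and Jacobian conditions in Appendix~\ref{appendix:assumptions_and_notation}. Because $l$ is $B_\phi$-Lipschitz in representation space, Kantorovich--Rubinstein duality bounds each difference by $B_\phi W_1(\boldsymbol R_{\boldsymbol s},\boldsymbol R_{\boldsymbol t})$. Each $\boldsymbol t$ has $2^K-1$ such partners $\boldsymbol s$, which is the source of the factor $(2^K-1)$ in the statement.

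\textbf{Step 3 (barycentric FGW control).} Use the triangle inequality through the barycenter,
\[
W_1(\boldsymbol R_{\boldsymbol s},\boldsymbol R_{\boldsymbol t})\le W_1(\boldsymbol R_{\boldsymbol s},\boldsymbol R_b^\ast)+W_1(\boldsymbol R_{\boldsymbol t},\boldsymbol R_b^\ast).
\]
Since the Gromov term in Equation~\eqref{eq:fgw_to_barycenter} is nonnegative and $c^{(\boldsymbol t)}$ is the Euclidean cost, the optimal FGW plan is an admissible coupling for $W_1$, so
\[
W_1(\boldsymbol R_{\boldsymbol t},\boldsymbol R_b^\ast)\le \tfrac1\eta F(\boldsymbol R_{\boldsymbol t},\boldsymbol R_b^\ast)=\tfrac{2^K}{\eta}\,w_{\boldsymbol t}F(\boldsymbol R_{\boldsymbol t},\boldsymbol R_b^\ast)\le\tfrac{2^K}{\eta}\mathcal{L}_\phi
\]
under uniform weights. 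Each pairwise $W_1$ is therefore at most $\frac{2^{K+1}}{\eta}\mathcal{L}_\phi$. Multiplying by $B_\phi(2^K-1)$ reproduces the bracketed penalty. Summing over $\boldsymbol t$ and collecting terms gives the stated form, provided the counterfactual bookkeeping is arranged as in the CASE proof.

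\textbf{Main obstacle.} The delicate step is matching the constants exactly. The bound as stated has only one factor $2^{K+1}(2^K-1)B_\phi\mathcal{L}_\phi/\eta$ inside the braces, not one per $\boldsymbol t$. A per-pattern count would naively give $2^K$ times more. To avoid this, I would sum the Wasserstein terms before bounding them: $\sum_{\boldsymbol t}\sum_{\boldsymbol s\ne\boldsymbol t}[\,W_1(\boldsymbol R_{\boldsymbol s},\boldsymbol R_b^\ast)+W_1(\boldsymbol R_{\boldsymbol t},\boldsymbol R_b^\ast)\,]=2(2^K-1)\sum_{\boldsymbol t}W_1(\boldsymbol R_{\boldsymbol t},\boldsymbol R_b^\ast)$. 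Then use $\sum_{\boldsymbol t}W_1(\boldsymbol R_{\boldsymbol t},\boldsymbol R_b^\ast)\le\frac{1}{\eta}\sum_{\boldsymbol t}F=\frac{2^K}{\eta}\mathcal{L}_\phi$, which gives exactly $\frac{2^{K+1}(2^K-1)}{\eta}B_\phi\mathcal{L}_\phi$. I would use this global summation rather than bounding each term separately.
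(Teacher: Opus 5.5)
Your proposal is correct and follows essentially the same route as the paper. The shared steps are Cauchy--Schwarz on the signed contrast, the bias--variance bound $(\hat\mu-\mu)^2\le l$, the factual/counterfactual split with Kantorovich--Rubinstein (the paper's Lemmas~\ref{lem:cf_ipm_wasserstein} and~\ref{lem:cf_bound}), then $\eta W_1\le F$ and $\sum_{\boldsymbol t}F=2^K\mathcal{L}_\phi$. Your ``global summation'' over ordered pairs in the last paragraph is exactly the paper's Lemma~\ref{lem:pairwise_to_barycenter_bound}, which gives the same factor $2(2^K-1)$, so drop the per-pair bound in Step~3: it would overcount by a factor of $2^K$.
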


The detailed proofs for Theorem~\ref{thm:upper_bound_case} and Theorem~\ref{thm:upper_bound_caie} are provided in Appendix~\ref{appendix:proof_theorem_upper}.
These theorems demonstrate that the estimation errors for CASE and CAIE are theoretically upper-bounded by the sum of the factual prediction error $\epsilon_{\mathrm{F}}^{(\boldsymbol{t})}$ and the distributional discrepancy term $\mathcal{L}_{\phi}$.
Crucially, these bounds explicitly highlight the inverse probability term $1 / p(\boldsymbol{t})$, which formalizes the vulnerability of causal inference to rare treatment patterns.
Furthermore, the dependence on $2^K$ in these bounds reflects the inherent combinatorial complexity of estimating causal effects over all treatment patterns.
By aligning all treatment-pattern representation distributions to a single fixed anchor $\boldsymbol{R}_{b}^{\ast}$, our star-shaped strategy efficiently minimizes all pairwise Wasserstein discrepancies via the triangle inequality.
Simultaneously, the Gromov-Wasserstein component of the FGW discrepancy preserves local proximity structures within the representation space.
This structural preservation makes the Lipschitz continuity assumption~($B_{\phi}$) more plausible in practice, supporting counterfactual inference from nearby factual samples. 
Together, these results show that BFG-WB provides both computational efficiency and a principled mechanism for controlling the estimation errors of CASE and CAIE.

\subsection{Network Architecture and Training Objective}
\label{subsec:network}
We implement our proposed method using CIHSI-Net, a neural network architecture designed to achieve both accurate outcome prediction and robust representation balancing.
As illustrated in Figure~\ref{fig:model_architecture}, the network consists of three components: a representation learning network $\phi: \mathbb{R}^d \rightarrow \mathbb{R}^{d_r}$ regularized by BFG-WB, a task embedding network $t_{w}:\mathbb{R}^{K} \rightarrow \mathbb{R}^{d_t}$ that captures treatment similarities, and the outcome prediction network $h: \mathbb{R}^{d_r+d_t} \rightarrow \mathbb{R}$ that estimates outcomes from the concatenated features.
\begin{figure}[t]
  \centering
  \includegraphics[width=\linewidth, keepaspectratio]{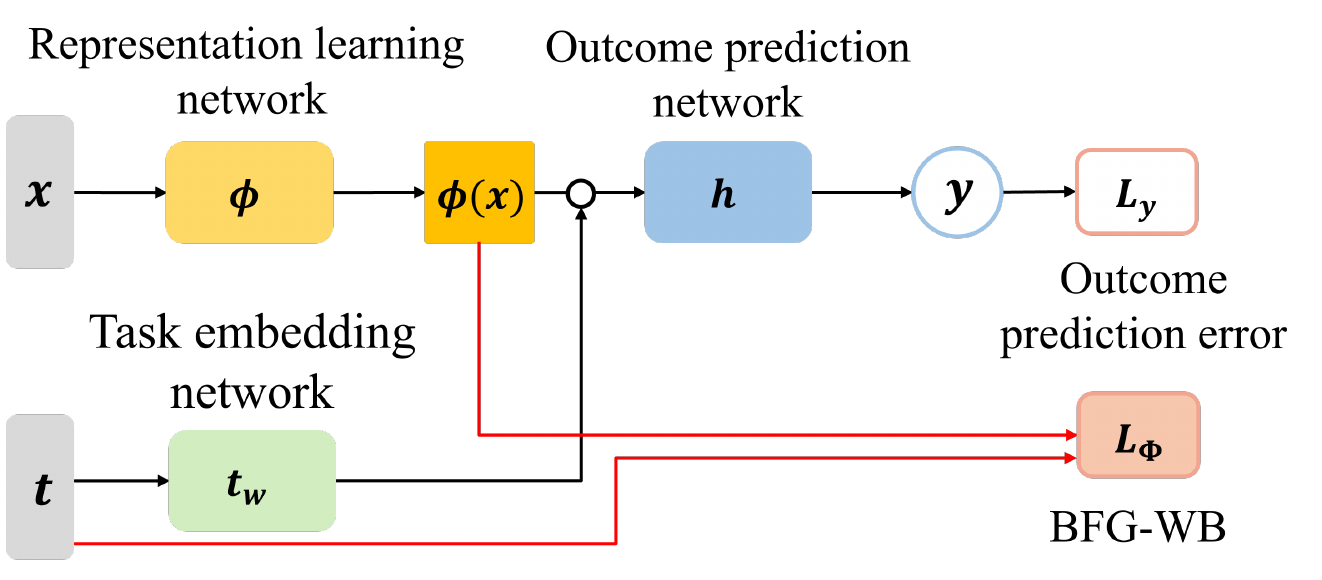}
  \caption{
  The architecture of CIHSI-Net.
  The model consists of three components: the representation learning network $\phi$, the task embedding network $t_w$, and the outcome prediction network $h$. 
  BFG-WB is a regularization term applied to the latent representation $\phi(\boldsymbol{x})$. 
  }
  \label{fig:model_architecture}
\end{figure}

The entire network is optimized by minimizing a composite loss function $\mathcal{L}$ that balances factual prediction accuracy with the distributional alignment constraints derived in our theoretical analysis as follows.
\begin{equation}
\label{eq:total_loss} 
\mathcal{L} = \mathcal{L}_y + \alpha \mathcal{L}_{\phi} + \beta \| w \|_2^2, 
\end{equation}
where the first term $\mathcal{L}_y$ represents the factual outcome prediction error, the second term $\mathcal{L}_\phi$ is the BFG-WB regularization term, and the third term is an L2 regularization term applied to the network weights.
The coefficients $\alpha$ and $\beta$ are hyperparameters that control the strength of the corresponding terms.

The outcome prediction loss $\mathcal{L}_y$ is designed to align with the error bounds established in Theorems~\ref{thm:upper_bound_case} and \ref{thm:upper_bound_caie}.
Guided by our theoretical analysis, which reveals that prediction errors on rare treatment patterns are amplified by the inverse probability $1 / p(\boldsymbol{t})$, we employ an inverse-frequency weighted mean squared error to counteract this imbalance as follows.
\begin{equation} 
\label{eq:weighted_loss} 
\mathcal{L}_y = \frac{1}{N} \sum_{i=1}^{N} \frac{1}{\hat{p}(\boldsymbol{t}_i)} ( y_i - \hat{y}_i )^2, 
\end{equation}
where $\hat{p}({\boldsymbol{t}_i}) = \frac{1}{N} \sum_{j=1}^{N} \mathbb{I}[\boldsymbol{t}_i = \boldsymbol{t}_j]$ represents the empirical frequency of the treatment pattern assigned to unit $i$.

\section{Simulation Experiments}
\label{main:simulation}
We evaluate our CIHSI-Net using simulation datasets to (i) compare estimation accuracy with state-of-the-art baselines and (ii) clarify the contributions of its components through ablation studies.

\paragraph{Datasets.} 
To assess estimation accuracy under selection bias and heterogeneous effects, we generate two types of simulation datasets following the protocol in \citet{murakami2025multipletreatmentscausaleffects}.
Simulation 1~(With Interactions) is generated with complex interaction effects among treatments and serves as the primary benchmark for evaluating CASE and CAIE estimation. 
Simulation 2~(Without Interactions) is generated without interaction effects and tests the model's robustness by verifying that it does not produce spurious interaction effects when none exist. 
We fix the number of treatments at $K=3$ and the sample size at $N=50,000$.
Detailed data generation processes are provided in Appendix~\ref{appendix:dgp}.

\paragraph{Experimental Setup.} 
We compare CIHSI-Net with three representative multi-treatment causal inference methods: Task Embedding–based Causal Effect Variational Autoencoder~(TECE-VAE)~\cite{saini2019multiple}, Neural Counterfactual Relation Estimation~(NCoRE)~\cite{parbhoo2021ncore}, and CISI-Net~\cite{murakami2025multipletreatmentscausaleffects}.
Implementation details for all baselines and the hyperparameters of CIHSI-Net are provided in Appendix~\ref{appendix:implement_details}. 
The same network hyperparameter settings are applied across all experiments.
For CIHSI-Net, we set $\alpha$ in Equation~\eqref{eq:total_loss} to 1.0 and $\eta$ in Equation~\eqref{eq:fgw_to_barycenter} to 0.6 for comparative experiments, while fixing $\alpha = 1.0$ and varying $\eta$ for ablation studies.

All models, including baselines, are trained using standard optimization protocols with 60\% training, 10\% validation, and 30\% test data.
A sensitivity analysis regarding the regularization coefficients is presented in Appendix~\ref{appendix:sensitivity_analysis}.

\paragraph{Evaluation Metrics.}
Performance is evaluated using the integrated squared errors for CASE and CAIE, as defined in Equations~\eqref{eq:def_case_error} and \eqref{eq:def_caie_error}.
To ensure statistical reliability, we report average metrics over 100 independent runs.

\paragraph{Accuracy Comparison against Baselines.}
Table~\ref{tab:simulation_data_result_all} shows ${\epsilon_{\mathrm{CASE}}}$ and ${\epsilon_{\mathrm{CAIE}}}$ of the proposed and baseline methods across two simulation datasets.
Across all simulation settings, the proposed method consistently achieves the lowest estimation errors for both CASE and CAIE, outperforming all baseline methods.
CIHSI-Net maintains high accuracy regardless of whether interaction effects are present, and its advantage over the baselines is most pronounced for ${\epsilon_{\mathrm{CAIE}}}$, indicating that BFG-WB is particularly beneficial for interaction treatment effect estimation.
Further analysis in Appendix~\ref{appendix:learned_rep} demonstrates that our CIHSI-Net effectively reduces distributional discrepancies~(measured by Wasserstein distance) while preserving local proximity structures~(measured by Gromov-Wasserstein distance), thereby contributing to superior estimation performance.
Moreover, CIHSI-Net maintains its CASE and CAIE estimation advantage as $K$ increases to eight, demonstrating its scalability~(see Appendix~\ref{appendix:simulation_scalability}), and achieves the best overall estimation performance in the semi-synthetic setting (see Appendix~\ref{appendix:evaluation_semi_synthetic}).

\begin{table*}[tb]
  \centering
  \small
  \begin{tabular}{llccc|cccc}
  \hline
  \multicolumn{2}{l}{} & \multicolumn{3}{c|}{${\epsilon_{\mathrm{CASE}}}$} & \multicolumn{4}{c}{${\epsilon_{\mathrm{CAIE}}}$} \\
  \cline{3-5}\cline{6-9}
  Sim.&Method& $k=1$ & $k=2$ & $k=3$ & $S=\{1,2\}$ & $S=\{2,3\}$ & $S=\{1,3\}$ & $S=\{1,2,3\}$ \\
  \hline
  \multirow{5}{*}{1}
   & TECE-VAE & 1.82 $\pm$ 0.09 & 1.91 $\pm$ 0.10 & 1.80 $\pm$ 0.14 & 3.58 $\pm$ 0.11 & 3.27 $\pm$ 0.10 & 3.30 $\pm$ 0.10 & 6.80 $\pm$ 0.18 \\
   & NCoRE    & 0.28 $\pm$ 0.06 & 0.30 $\pm$ 0.06 & 0.25 $\pm$ 0.05 & 0.52 $\pm$ 0.10 & 0.33 $\pm$ 0.10 & 0.37 $\pm$ 0.10 & 0.56 $\pm$ 0.14 \\
   & CISI-Net & 0.22 $\pm$ 0.04 & 0.24 $\pm$ 0.03 & 0.19 $\pm$ 0.06 & 0.31 $\pm$ 0.07 & 0.11 $\pm$ 0.04 & 0.15 $\pm$ 0.08 & 0.31 $\pm$ 0.08 \\
   & \textbf{CIHSI-Net} & \textbf{0.19 $\pm$ 0.04} & \textbf{0.21 $\pm$ 0.04} & \textbf{0.18 $\pm$ 0.10} & \textbf{0.28 $\pm$ 0.05} & \textbf{0.08 $\pm$ 0.05} & \textbf{0.12 $\pm$ 0.04} & \textbf{0.24 $\pm$ 0.04} \\
  \midrule
  \multirow{5}{*}{2}
   & TECE-VAE & 1.60 $\pm$ 0.05 & 1.65 $\pm$ 0.09 & 1.58 $\pm$ 0.16 & 3.01 $\pm$ 0.05 & 2.65 $\pm$ 0.08 & 2.69 $\pm$ 0.09 & 5.50 $\pm$ 0.15 \\
   & NCoRE    & 0.26 $\pm$ 0.05 & 0.28 $\pm$ 0.09 & 0.24 $\pm$ 0.07 & 0.34 $\pm$ 0.12 & 0.30 $\pm$ 0.09 & 0.27 $\pm$ 0.10 & 0.35 $\pm$ 0.12 \\
   & CISI-Net & 0.16 $\pm$ 0.04 & 0.19 $\pm$ 0.07 & \textbf{0.16 $\pm$ 0.05} & 0.08 $\pm$ 0.08 & 0.07 $\pm$ 0.04 & 0.07 $\pm$ 0.05 & 0.11 $\pm$ 0.09 \\
   & \textbf{CIHSI-Net} & \textbf{0.14 $\pm$ 0.04} & \textbf{0.17 $\pm$ 0.03} & \textbf{0.16 $\pm$ 0.12} & \textbf{0.06 $\pm$ 0.04} & \textbf{0.06 $\pm$ 0.03} & \textbf{0.06 $\pm$ 0.03} & \textbf{0.09 $\pm$ 0.04} \\
  \bottomrule
  \end{tabular}
  \caption{
  Comparison of mean estimation errors and standard deviations for ${\epsilon_{\mathrm{CASE}}}$ and ${\epsilon_{\mathrm{CAIE}}}$ in two simulation datasets. 
  Here, $k \in \{1,2,3\}$ indexes individual treatments, and $S \subseteq \{1,2,3\}$ (with $|S| \geq 2$) denotes treatment combinations.
  }
  \label{tab:simulation_data_result_all}
\end{table*}

\paragraph{Ablation Study.}
To verify the contribution of each component in our proposed method, we conducted an ablation study by varying the alignment strategy~(with or without Wasserstein barycenter) and the discrepancy metric~(Wasserstein distance~(W), Gromov-Wasserstein discrepancy~(GW), and FGW discrepancy~(FGW)).
Table~\ref{tab:simulation_ablation_average} summarizes the results.
Detailed results for $\epsilon_{\mathrm{CASE}}$ and $\epsilon_{\mathrm{CAIE}}$ are reported in Appendix~\ref{appendix:ablation_results_details}.

First, the combination of the Wasserstein barycenter and FGW discrepancy~(No.7) consistently achieves the lowest average estimation errors for both CASE and CAIE.
Second, methods using the Wasserstein barycenter generally outperform their pairwise counterparts~(compare No.5-7 with No.2-4).
This finding suggests that pairwise discrepancy minimization is insufficient for globally reducing representation imbalance across treatment patterns, whereas barycentric alignment more effectively mitigates imbalance-driven estimation variance.
Third, regarding the discrepancy metric, while the FGW discrepancy fails to improve or even yields slightly inferior results compared to the Wasserstein distance in the pairwise setting~(No.4 vs No.2), it improves performance when combined with the Wasserstein barycenter~(No.7 vs No.5).
These findings suggest that the structural preservation capability of FGW discrepancy is effective only when supported by the global alignment provided by the Wasserstein barycenter.

\begin{table}[tb]
    \centering
    \small
    \begin{tabular}{lcccc}
    \toprule
    No. & WB & Penalty type 
    & Average $\epsilon_{\mathrm{CASE}}$ 
    & Average $\epsilon_{\mathrm{CAIE}}$ \\
    \midrule
    1 & \xmarkgray & None & 0.27 $\pm$ 0.06 & 0.24 $\pm$ 0.02\\
    2 & \xmarkgray & W    & 0.22 $\pm$ 0.03 & 0.22 $\pm$ 0.03\\
    3 & \xmarkgray & GW   & 0.25 $\pm$ 0.06 & 0.24 $\pm$ 0.03 \\
    4 & \xmarkgray & FGW  & 0.22 $\pm$ 0.02 & 0.23 $\pm$ 0.02 \\
    \midrule
    5 & \cmark     & W    & 0.22 $\pm$ 0.04 & 0.21 $\pm$ 0.02 \\
    6 & \cmark     & GW   & 0.22 $\pm$ 0.04 & 0.22 $\pm$ 0.03\\
    7 & \cmark     & FGW  & \textbf{0.19 $\pm$ 0.04} & \textbf{0.18 $\pm$ 0.02} \\
    \bottomrule
    \end{tabular}
    \caption{
    Ablation study results on simulation dataset 1.
    Average $\epsilon_{\mathrm{CASE}}$ and Average $\epsilon_{\mathrm{CAIE}}$ denote the means of the reported $\epsilon_{\mathrm{CASE}}$ and $\epsilon_{\mathrm{CAIE}}$ values, respectively.
    WB indicates whether the Wasserstein barycenter is used.
    Penalties: Wasserstein~(W), Gromov-Wasserstein~(GW), Fused GW~(FGW), and None.
    }
    \label{tab:simulation_ablation_average}
\end{table}

\section{Application to Multiple Marketing Promotions}
\label{main:application_marketing_promotions}

We apply CIHSI-Net to a real-world dataset from a mobile payment platform to validate its effectiveness in capturing heterogeneous causal effects.
The dataset includes three simultaneously conducted promotions: two offline promotions by the same merchant group~(CP\textsubscript{1}, CP\textsubscript{2}) and one online promotion by a different merchant group~(CP\textsubscript{3}).
The outcome is defined as the total payment amount during the promotion period, standardized for confidentiality. 
User covariates include 71 features derived from service usage history and demographics.
To analyze heterogeneity, we stratify users into 11 groups based on their payment amount in the pre-promotion month. 
Hyperparameters are set to the best-performing configuration from the simulation study~(Section~\ref{main:simulation}). 
Detailed dataset statistics and preprocessing procedures are provided in Appendix~\ref{appendix:sub_real_world_data_details}.

Figure~\ref{fig:empirical_result} shows the estimated CASE and CAIE across user groups obtained from a single algorithm run.
\begin{figure}[t]
  \centering
  \includegraphics[width=\linewidth, keepaspectratio]{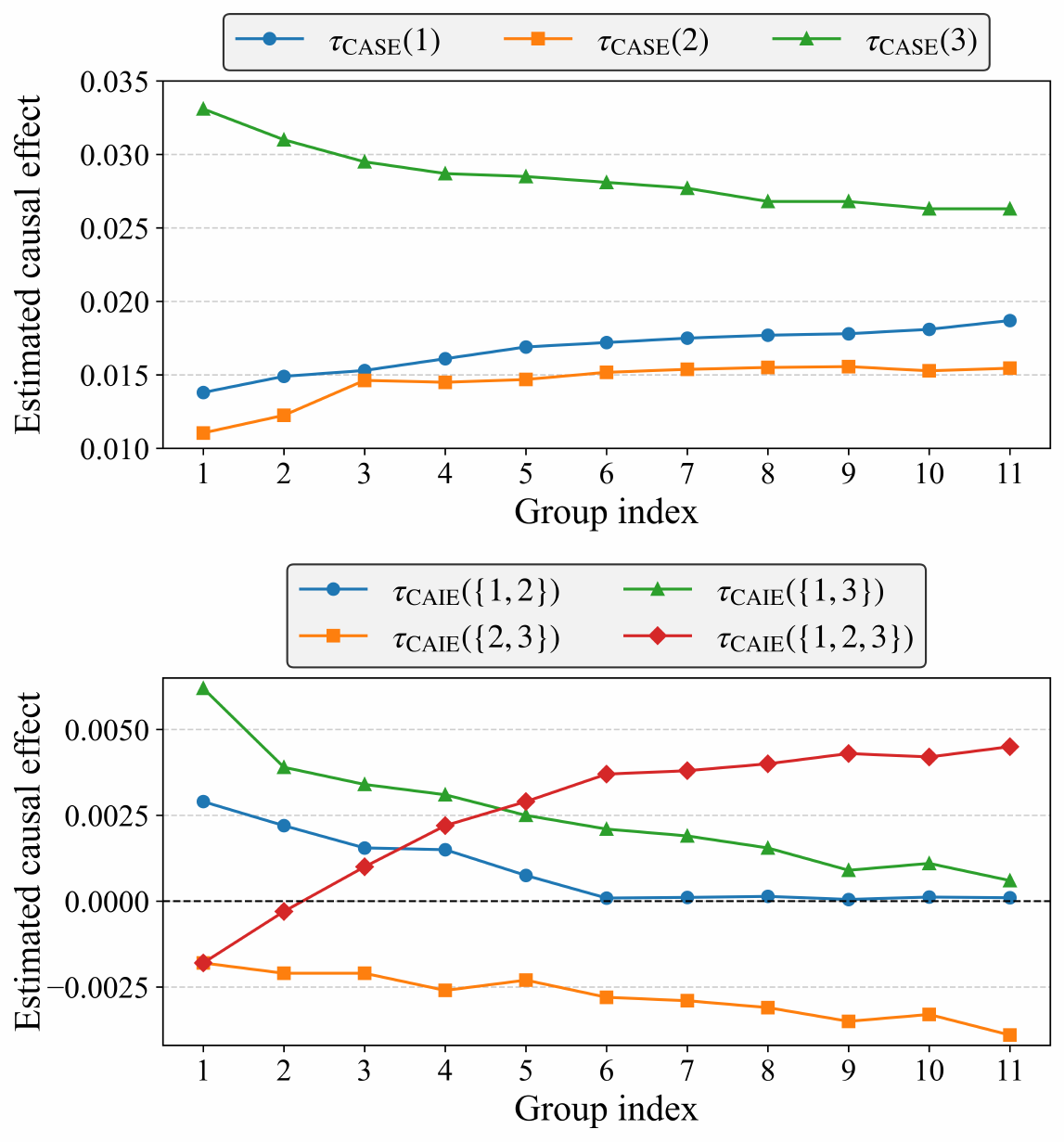}   
  \caption{
    Estimated CASE~(top) and CAIE~(bottom) across user groups stratified by pre-promotion service usage.
  }
  \label{fig:empirical_result}
\end{figure}
The top panel in Figure~\ref{fig:empirical_result} shows that all estimated CASEs are positive.
The online promotion CP$\textsubscript{3}$ exhibits the strongest effect for low-usage users.
This result is consistent with findings that online incentives effectively encourage increased service spending, even among users with little service experience~\cite{blake2015consumer, langen2023causal}.
In contrast, the offline promotions~(CP\textsubscript{1}, CP\textsubscript{2}) show modest increases for higher-usage groups.
These results suggest that user groups for whom effects are strongly observed may differ across channels and promotion designs.

The bottom panel in Figure~\ref{fig:empirical_result} highlights the capability of CIHSI-Net to capture complex interaction structures.
Whereas the estimated interaction between same-group promotions ($\tau_{\text{CAIE}}(\{1,2\})$) remains positive but declines with usage, cross-group interactions show mixed patterns.
Notably, the estimated three-way interaction $\tau_{\text{CAIE}}(\{1,2,3\})$ shifts from negative for low-usage users to positive for high-usage users.
This result suggests that, for users who regularly use the service, conducting more promotions simultaneously may yield complementary effects, whereas for low-usage users, presenting multiple incentives may divide attention~(like choice overload~\cite{CHERNEV2015333}) and result in a negative interaction effect.
Additional sensitivity and uncertainty analyses in Appendix~\ref{appendix:sensitivity_and_uncertainty_analysis_on_datasetA} show that the qualitative CASE and CAIE patterns remain stable across the BFG-WB hyperparameters $\alpha$ and $\eta$ and are broadly consistent under bootstrap confidence intervals.
These results highlight the importance of capturing treatment effect heterogeneity and demonstrate the practical utility of CIHSI-Net in revealing complex behavioral mechanisms, such as coexisting synergistic and cannibalistic interactions that vary substantially across user groups.

To further validate the generalizability of our CIHSI-Net, we extended our analysis to a second real-world scenario involving promotions from competing merchants.
Due to space constraints, detailed results demonstrating the detection of cannibalization are provided in Appendix~\ref{appendix:additional_real_world_results}.

\section{Conclusion}
This study addresses the limitations of pairwise balancing in multi-treatment causal inference by proposing Causal Inference for Heterogeneous Single and Interaction Treatment Effects Network~(CIHSI-Net).
By introducing Barycentric Fused Gromov-Wasserstein Balancing~(BFG-WB), our framework achieves scalable global alignment via a Wasserstein barycenter while preserving local proximity structures through the FGW discrepancy.
Experiments on simulation datasets demonstrate that CIHSI-Net outperforms baselines in estimating heterogeneous single and interaction effects, and a real-world marketing application illustrates its practical utility in multi-treatment settings.
Several directions remain for future work. 
First, a promising extension is to replace the Wasserstein barycenter with an FGW barycenter to unify global alignment and structural preservation, enabling more consistent alignment of feature distributions and within-distribution geometry.
Second, it will be valuable to connect CIHSI-Net to downstream decision-making, such as uplift-based allocation for combinatorial treatment.

\bibliography{aaai2027}

\clearpage
\onecolumn

\appendix
\begin{center}
    {\Large\bfseries Barycentric Fused Gromov-Wasserstein Balancing \\ for Causal Inference under Multiple Treatments~(Supplementary Material)\par}
    \vspace{3em}
\end{center}

\renewcommand{\qedsymbol}{$\blacksquare$}
\renewcommand{\thetable}{A\arabic{table}}
\setcounter{table}{0}
\newcommand{\bfcell}[1]{\textbf{\boldmath #1}}
\renewcommand{\thefigure}{A\arabic{figure}}
\makeatletter
\def\section{\@startsection{section}{1}{\z@}{-10pt plus -2pt minus -2pt}{4pt plus 2pt minus 2pt}{\Large\bfseries\centering}}
\def\subsection{\@startsection{subsection}{2}{\z@}{-8pt plus -2pt minus -2pt}{4pt plus 2pt minus 2pt}{\large\bfseries}}
\def\subsubsection{\@startsection{subsubsection}{3}{\z@}{-6pt plus -2pt minus -2pt}{4pt plus 2pt minus 2pt}{\normalsize\bfseries}}
\makeatother
\setcounter{secnumdepth}{3}
\renewcommand{\thesection}{\Alph{section}}
\renewcommand{\thesubsection}{\thesection.\arabic{subsection}}
\renewcommand{\thesubsubsection}{\thesubsection.\arabic{subsubsection}}

% Equation numbers: A.1, A.2, ..., B.1, B.2, ...
\numberwithin{equation}{section}

\setcounter{assumption}{0}
\setcounter{thm}{0}
\setcounter{figure}{0}
\setcounter{table}{0}

\section{Proofs of Theoretical Results}

\subsection{Proof of Proposition 1}
\label{appendix:proof_identifiability}
\begin{lemma}
\label{lemma:identifiability_of_cepo}
Under Assumptions 1-3, the conditional expected potential outcome $\mu(\boldsymbol{x}, \boldsymbol{t})$ is identifiable from the observed data.
\end{lemma}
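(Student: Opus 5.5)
The plan is the standard identification chain. We express $\mu(\boldsymbol{x},\boldsymbol{t})$ as a functional of the observed joint distribution of $(\boldsymbol{X},\boldsymbol{T},Y)$, namely the regression function $\mathbb{E}[Y\mid \boldsymbol{X}=\boldsymbol{x},\boldsymbol{T}=\boldsymbol{t}]$. Since that quantity is determined by the observed-data law, identifiability follows.

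Concretely, fix any $\boldsymbol{t}\in\mathcal{T}$ and any $\boldsymbol{x}$ in the support of $\boldsymbol{X}$. The argument has three steps.

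1. \emph{Ignorability (Assumption 2).} Since $Y(\boldsymbol{t})\perp \boldsymbol{T}\mid \boldsymbol{X}$, we can condition additionally on $\boldsymbol{T}=\boldsymbol{t}$:
\begin{equation*}
\mu(\boldsymbol{x},\boldsymbol{t})=\mathbb{E}[Y(\boldsymbol{t})\mid \boldsymbol{X}=\boldsymbol{x}]=\mathbb{E}[Y(\boldsymbol{t})\mid \boldsymbol{X}=\boldsymbol{x},\boldsymbol{T}=\boldsymbol{t}].
\end{equation*}

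2. \emph{Overlap (Assumption 3).} The right-hand side must be a well-defined conditional expectation. Overlap gives $P(\boldsymbol{T}=\boldsymbol{t}\mid\boldsymbol{X}=\boldsymbol{x})>0$, so the conditioning event $\{\boldsymbol{X}=\boldsymbol{x},\boldsymbol{T}=\boldsymbol{t}\}$ has positive (conditional) probability or density. The conditional law of $Y$ given it is therefore determined by the observed distribution.

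3. \emph{SUTVA consistency (Assumption 1).} On the event $\boldsymbol{T}=\boldsymbol{t}$ we have $Y=Y(\boldsymbol{t})$, hence
\begin{equation*}
\mu(\boldsymbol{x},\boldsymbol{t})=\mathbb{E}[Y\mid \boldsymbol{X}=\boldsymbol{x},\boldsymbol{T}=\boldsymbol{t}],
\end{equation*}
which depends only on the observed-data law.

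The main obstacle is measure-theoretic. With continuous $\boldsymbol{X}$, conditional expectations are defined only $p(\boldsymbol{x})$-almost everywhere, so the chain above holds for almost every $\boldsymbol{x}$ in the support. I would state the lemma in that sense; this is also all that Proposition~\ref{prop:identifiability} and the integrated errors $\epsilon_{\mathrm{CASE}},\epsilon_{\mathrm{CAIE}}$ require.

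To justify step 1 rigorously, I would write the joint density as $p(y(\boldsymbol{t}),\boldsymbol{t}\mid\boldsymbol{x})=p(y(\boldsymbol{t})\mid\boldsymbol{x})\,p(\boldsymbol{t}\mid\boldsymbol{x})$ and divide by $p(\boldsymbol{t}\mid\boldsymbol{x})>0$. That division is exactly where Assumption 3 enters. Finally, because the lemma holds for every $\boldsymbol{t}\in\mathcal{T}$, substituting into Equations~\eqref{def:CASE} and \eqref{def:CAIE} gives Proposition~\ref{prop:identifiability}: both estimands are finite signed sums of identified quantities.
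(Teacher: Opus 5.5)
Your proposal is correct and takes the same route as the paper. The paper states that $\mu(\boldsymbol{x},\boldsymbol{t})=\mathbb{E}[Y\mid \boldsymbol{X}=\boldsymbol{x},\boldsymbol{T}=\boldsymbol{t}]$ follows from consistency and ignorability, with overlap ensuring the conditional expectation is well defined, and defers the details to prior work; your write-up spells out that standard chain and adds the almost-everywhere caveat for continuous covariates.
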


\begin{proof}
The identifiability of the conditional expected potential outcome is a standard result in the potential outcome framework.
The equality holds directly from Assumptions~1 and 2, while Assumption~3 ensures the conditional expected potential outcome is well-defined.
For a detailed derivation, see, for example, \citet{murakami2025multipletreatmentscausaleffects}.

\end{proof}

Using Lemma~\ref{lemma:identifiability_of_cepo}, we prove that CASE and CAIE are identifiable.
\begin{proof}
Recall that the CASE for treatment $k$, denoted as $\tau_{\text{CASE}} (k, \boldsymbol{x})$, is defined as 
\begin{equation*}
\tau_{\mathrm{CASE}}(k, \boldsymbol{x}) \coloneqq
\mu(\boldsymbol{x}, \boldsymbol{t}_{+k}) - 
\mu(\boldsymbol{x}, \boldsymbol{0}), 
\end{equation*}
where $\boldsymbol{t}_{+k}$ is the one-hot vector for treatment $k$.
Similarly, recall that the CAIE for a set of treatments $S$, denoted as $\tau_{\mathrm{CAIE}}(S,\boldsymbol{x})$, is defined as
\begin{equation*}
\tau_{\mathrm{CAIE}}(S,\boldsymbol{x}) \coloneqq
\sum_{Q \subseteq S} (-1)^{|S| - |Q|} \,
\mu\left(\boldsymbol{x}, \boldsymbol{t}_{(+Q)}\right) .
\end{equation*}
where $\boldsymbol{t}_{(+Q)}$ corresponds to the treatment vector where treatments in the subset $Q$ are active.

According to Lemma~\ref{lemma:identifiability_of_cepo}, for any treatment vector $\boldsymbol{t} \in \{0, 1\}^{K}$, the term $\mu(\boldsymbol{x}, \boldsymbol{t})$ is identifiable from the observed data as $\mathbb{E} [Y \mid \boldsymbol{X} = \boldsymbol{x}, \boldsymbol{T} = \boldsymbol{t}]$.
Both $\tau_{\mathrm{CASE}}(k, \boldsymbol{x})$ and $\tau_{\mathrm{CAIE}}(S,\boldsymbol{x})$ are constructed as linear combinations of $\mu(\boldsymbol{x}, \boldsymbol{t})$ for specific values of $\boldsymbol{t}$.
Since every term $\mu(\boldsymbol{x}, \boldsymbol{t})$ in these equations is identifiable, any linear combination of them is also identifiable from the observed data.
Therefore, CASE and CAIE are identifiable.

\end{proof}

\subsection{Proof of Proposition 2}
\label{appendix:computational_cost_remark}
\begin{proof}
Let $L = 2^K$ denote the total number of treatment patterns
derived from $K$ binary treatments.

Standard pairwise balancing evaluates an OT-based discrepancy
for every unordered pair of distinct treatment-pattern
representation distributions.
Therefore, the total number of evaluations per training step is
\[
N_{\mathrm{pair}}
=
\binom{L}{2}
=
\frac{L(L-1)}{2}
=
O(L^2).
\]

For BFG-WB, the total number of evaluations consists of two
components.
First, for Wasserstein barycenter estimation, each barycenter-update iteration solves one OT problem between the current barycenter and each of the $L$ input treatment-pattern distributions~\citep{cuturi2014fast}.
Therefore, under a fixed iteration budget of at most $L_b$ barycenter-update iterations, the number of OT evaluations required for barycenter estimation satisfies $N_{\mathrm{bary}} \leq L_b L$ where $L_b$ is independent of $L$.

Second, after the barycenter has been estimated, BFG-WB evaluates one FGW discrepancy between the estimated barycenter and each treatment-pattern representation distribution. 
Hence, this step requires $N_{\mathrm{FGW}} = L$ additional evaluations.
Consequently, the total number of OT-based discrepancy evaluations required by BFG-WB per training step satisfies
\[
N_{\mathrm{BFG\text{-}WB}}
=
N_{\mathrm{bary}}
+
N_{\mathrm{FGW}}
\leq
L_bL+L
=
(L_b+1)L.
\]
Because $L_b$ is fixed independently of $L$, it follows that
\[
N_{\mathrm{BFG\text{-}WB}}
=
O(L).
\]
Thus, standard pairwise balancing requires a quadratic number
of evaluations in $L$, whereas BFG-WB requires at most a
linear number of evaluations.
\end{proof}
\subsection{Proof of Theorems 1 and 2}
\label{appendix:proof_theorem_upper}
This appendix details the proofs of Theorems 1 and 2, organized into three parts.
First, we summarize the notation, definitions, and key assumptions used throughout the proofs. 
Next, we present supporting lemmas that establish intermediate inequalities connecting counterfactual losses with distributional discrepancies.
Finally, building on these results, we provide the main proofs of Theorems 1 and 2, which demonstrate how barycenter-based balancing yields the stated upper bounds.

\subsubsection{Setup and Definitions}
\label{appendix:assumptions_and_notation}
\paragraph{Notation and setup.}
Let $\phi:\mathcal{X}\to\mathcal{R}\subset\mathbb{R}^{d_r}$ be the representation map, where $\mathcal{R}$ is the representation space.
For each treatment pattern $\boldsymbol{t}\in\mathcal{T}$, we denote by $p(\boldsymbol{x} \mid \boldsymbol{t})$ the covariate distribution conditional on $\boldsymbol{T}=\boldsymbol{t}$, and define $\boldsymbol{R}_{\boldsymbol{t}}$ as the distribution on $\mathcal{R}$ of $\boldsymbol{r}=\phi(\boldsymbol{x})$ when $\boldsymbol{x}\sim p(\boldsymbol{x} \mid \boldsymbol{t})$.
Throughout the theoretical analysis, $\boldsymbol{R}_{\boldsymbol{t}}$ denotes the population distribution on $\mathcal{R}$; in practice, we use its empirical distribution computed from the observed samples in the treatment pattern $\boldsymbol{t}$.
Given $(\boldsymbol{x},\boldsymbol{t})$, we form the predictor input by concatenating the representation and the task embedding $t_w(\boldsymbol{t})$, and define the point predictor of the potential outcome as $\hat{\mu}(\boldsymbol{x},\boldsymbol{t})\;:=\;h\!\left([\phi(\boldsymbol{x}),\,t_w(\boldsymbol{t})]\right)$ where $h$ is the outcome prediction network and $[\cdot,\cdot]$ denotes concatenation.

\begin{definition}
For covariates $\boldsymbol{x}$ and treatment pattern $\boldsymbol{t}$, define the conditional expected squared loss
$
  l(\boldsymbol{x},\boldsymbol{t})
  =
  \int \bigl(Y(\boldsymbol{t})-\hat\mu(\boldsymbol{x}, \boldsymbol{t})\bigr)^2\,p\bigl(Y(\boldsymbol{t})\mid \boldsymbol{x}\bigr)\,dY(\boldsymbol{t}).
$
The expected factual loss under $t$ and the expected counterfactual loss for predicting outcomes under $t$ using covariates drawn from other treatment patterns are:
\begin{equation}
\label{app_eq:factual_loss}
  \epsilon_{\mathrm{F}}^{(\boldsymbol{t})}
  =
  \int_{\mathcal{X}} l(\boldsymbol{x}, \boldsymbol{t})\,p(\boldsymbol{x} \mid \boldsymbol{t})\,p(\boldsymbol{t})\,d\boldsymbol{x},
\end{equation}

\begin{equation}
\label{app_eq:counterfactual_loss}
  \epsilon_{\mathrm{CF}}^{(\boldsymbol{t})}
  =
  \sum_{\boldsymbol{t}^{\prime} \in \mathcal{T},\,\boldsymbol{t}^{\prime}\neq \boldsymbol{t}}
  \int_{\mathcal{X}} l(\boldsymbol{x}, \boldsymbol{t}) \,p(\boldsymbol{x} \mid \boldsymbol{t}^{\prime})\,p(\boldsymbol{t}^{\prime})\,d\boldsymbol{x}.
\end{equation}

\end{definition}

\begin{definition}
\label{def:kantorovich_wasserstein}
Let $P$ and $Q$ be probability measures on $\mathcal{R}\subset\mathbb{R}^{d_r}$ with finite $p$-th moments.
The $p$-Wasserstein distance is defined as
\begin{equation*}
W_p(P,Q)
\coloneqq
\left(
\inf_{\pi\in\Pi(P,Q)}
\int_{\mathcal{R}\times\mathcal{R}}
\|r-r'\|_2^p
\,d\pi(r,r')
\right)^{1/p},
\end{equation*}
where $\Pi(P,Q)$ denotes the set of all couplings on $\mathcal{R}\times\mathcal{R}$ with marginals $P$ and $Q$.
\end{definition}

\begin{definition}
Let $\mathcal{F}$ be a class of real-valued functions on $\mathcal{R}$.
For two distributions $P$ and $Q$ on $\mathcal{R}$, the integral probability metric~(IPM) induced by $\mathcal{F}$ is defined as
\begin{equation*}
  \mathrm{IPM}_{\mathcal{F}}(P,Q)
  =
  \sup_{f\in\mathcal{F}}
  \left|
    \int_{\mathcal{R}} f(r)\,(dP(r)-dQ(r))
  \right|.
\end{equation*}
\end{definition}

\begin{assumption}[Stable Unit Treatment Value Assumption]
\textit{
(1) no interference, meaning that the outcome of one unit is unaffected by the treatment assignments of other units; and (2) consistency of treatment, meaning that the potential outcomes correspond to well-defined and unique treatments~(i.e., $y_i = Y_i(\boldsymbol{t}_i)$).}
\end{assumption}

\begin{assumption}[Ignorability]
\textit{    
For any treatment pattern, the potential outcome is independent of the assigned treatment $\boldsymbol{T}$ given the observed covariates $\boldsymbol{X}$.
Formally, for all $\boldsymbol{t}$, 
$
Y(\boldsymbol{t}) \perp \boldsymbol{T} \mid \boldsymbol{X}.
$
}
\end{assumption}

\begin{assumption}[Overlap]
\textit{
Every unit has a non-zero probability of receiving any treatment pattern given its observed covariates.
Formally, for all $\boldsymbol{t}$ and $\boldsymbol{x}$, 
$
0 < P(\boldsymbol{T}=\boldsymbol{t} \mid \boldsymbol{X}=\boldsymbol{x}) < 1. 
$
}
\end{assumption}

\begin{assumption}[Invertible representation map]\label{assumption:invertible_rep}
The representation map $\phi:\mathcal{X}\to\mathcal{R}\subset\mathbb{R}^{d_r}$ is one-to-one on $\mathcal{X}$.
Without loss of generality, we assume that $\mathcal{R}$ is the image of $\mathcal{X}$ under $\phi$.
Hence, there exists an inverse map $\Psi:\mathcal{R}\to\mathcal{X}$ such that
\begin{align*}
    \Psi(\phi(\boldsymbol{x})) = \boldsymbol{x} \quad (\forall \boldsymbol{x}\in\mathcal{X}).
\end{align*}
\end{assumption}

\begin{assumption}[Lipschitz loss in representation space]\label{assumption:lipschitz_loss}
\textit{
There exists a constant $B_{\phi}>0$ such that, for any $\boldsymbol{t}\in \mathcal{T}$, the function
\begin{equation*}
  g_{\boldsymbol{t}}(\boldsymbol{r})
  :=
  \frac{1}{B_{\phi}}\,
  l\bigl(\Psi(\boldsymbol{r}),\boldsymbol{t}\bigr)
\end{equation*}
belongs to the class of $1$-Lipschitz functions on $\mathcal{R}$.
}
\end{assumption}

\subsubsection{Supporting Lemmas}
\label{appendix:supporting_lemmas}
We present supporting lemmas and clarify their connections to the proofs of the main theorems.
The lemmas can be grouped into three categories.

First, Lemmas~\ref{lem:cf_ipm_wasserstein} and~\ref{lem:cf_bound} relate counterfactual outcome losses to the corresponding factual outcome losses, plus Wasserstein distances between treatment-specific representation distributions.
Lemma~\ref{lem:cf_ipm_wasserstein} serves as a technical preparation for Lemma~\ref{lem:cf_bound}: it shows that the increase in the expected outcome prediction loss for $\boldsymbol{t}'$ under covariates~(and thus representations) from $\boldsymbol{t}$ is controlled by the Wasserstein distance between the representation distributions of $\boldsymbol{t}$ and $\boldsymbol{t}'$.

\begin{lemma}
\label{lem:cf_ipm_wasserstein}
    Under Assumptions 1-5 , for any $\boldsymbol{t},\boldsymbol{t}^{\prime} \in \mathcal{T}$, we have
    \begin{equation*}
        \int_{\mathcal{X}}l(\boldsymbol{x}, \boldsymbol{t}^{\prime})p(\boldsymbol{x}\mid \boldsymbol{t})d\boldsymbol{x} \leq \int_{\mathcal{X}}l(\boldsymbol{x}, \boldsymbol{t}^{\prime})p(\boldsymbol{x}\mid\boldsymbol{t}^{\prime})d\boldsymbol{x} + B_{\phi}W_1(\boldsymbol{R}_{\boldsymbol{t}}, \boldsymbol{R}_{\boldsymbol{t}^{\prime}}).
    \end{equation*}
\end{lemma}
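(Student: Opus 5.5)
The plan is the standard change-of-measure argument used in representation-based bounds: move both integrals into representation space, then bound their difference by Kantorovich--Rubinstein duality.

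\textbf{Step 1: rewrite both integrals on $\mathcal{R}$.} Fix $\boldsymbol{t},\boldsymbol{t}'\in\mathcal{T}$. By Assumption~\ref{assumption:invertible_rep}, $\Psi(\phi(\boldsymbol{x}))=\boldsymbol{x}$ for every $\boldsymbol{x}\in\mathcal{X}$. Hence $l(\boldsymbol{x},\boldsymbol{t}')=l(\Psi(\phi(\boldsymbol{x})),\boldsymbol{t}')=B_{\phi}\,g_{\boldsymbol{t}'}(\phi(\boldsymbol{x}))$. Since $\boldsymbol{R}_{\boldsymbol{s}}$ is by definition the pushforward of $p(\boldsymbol{x}\mid\boldsymbol{s})$ under $\phi$, the change-of-variables formula gives, for $\boldsymbol{s}\in\{\boldsymbol{t},\boldsymbol{t}'\}$,
\begin{equation*}
\int_{\mathcal{X}} l(\boldsymbol{x},\boldsymbol{t}')\,p(\boldsymbol{x}\mid\boldsymbol{s})\,d\boldsymbol{x} = B_{\phi}\int_{\mathcal{R}} g_{\boldsymbol{t}'}(\boldsymbol{r})\,d\boldsymbol{R}_{\boldsymbol{s}}(\boldsymbol{r}).
\end{equation*}
Pushforwards only require measurability of $\phi$, so no Jacobian or density of $\boldsymbol{R}_{\boldsymbol{s}}$ is needed.

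\textbf{Step 2: bound the difference.} Subtracting the factual integral from the counterfactual one, the difference equals
\begin{equation*}
B_{\phi}\int_{\mathcal{R}} g_{\boldsymbol{t}'}\,(d\boldsymbol{R}_{\boldsymbol{t}}-d\boldsymbol{R}_{\boldsymbol{t}'}).
\end{equation*}
By Assumption~\ref{assumption:lipschitz_loss}, $g_{\boldsymbol{t}'}$ is 1-Lipschitz on $\mathcal{R}$. The difference is therefore at most $B_{\phi}\,\mathrm{IPM}_{\mathcal{F}}(\boldsymbol{R}_{\boldsymbol{t}},\boldsymbol{R}_{\boldsymbol{t}'})$, where $\mathcal{F}$ is the class of 1-Lipschitz functions. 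Rearranging gives the claimed inequality once this IPM is identified with $W_1$.

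\textbf{Step 3: identify the IPM with $W_1$.} Kantorovich--Rubinstein duality gives $\mathrm{IPM}_{\mathcal{F}}=W_1$ with $W_1$ as in Definition~\ref{def:kantorovich_wasserstein}, taking $p=1$ and the Euclidean cost. To avoid invoking full duality, only the easy direction $\le$ is needed. For any coupling $\pi\in\Pi(\boldsymbol{R}_{\boldsymbol{t}},\boldsymbol{R}_{\boldsymbol{t}'})$,
\begin{equation*}
\int g\,(d\boldsymbol{R}_{\boldsymbol{t}}-d\boldsymbol{R}_{\boldsymbol{t}'}) = \int \bigl(g(\boldsymbol{r})-g(\boldsymbol{r}')\bigr)\,d\pi(\boldsymbol{r},\boldsymbol{r}') \le \int \|\boldsymbol{r}-\boldsymbol{r}'\|_2\,d\pi .
\end{equation*}
Taking the infimum over $\pi$ gives the bound by $W_1(\boldsymbol{R}_{\boldsymbol{t}},\boldsymbol{R}_{\boldsymbol{t}'})$.

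\textbf{Main obstacle: integrability.} All integrals above must be finite, and $\boldsymbol{R}_{\boldsymbol{t}},\boldsymbol{R}_{\boldsymbol{t}'}$ need finite first moments for $W_1$ to be defined. Lipschitz continuity handles this: $|g(\boldsymbol{r})|\le |g(\boldsymbol{r}_0)|+\|\boldsymbol{r}-\boldsymbol{r}_0\|_2$ for a fixed $\boldsymbol{r}_0$, so finite first moments make every integral finite and justify splitting the coupling integral in Step~3. If $W_1=\infty$, the inequality holds trivially. Finite first moments are implicit in Definition~\ref{def:kantorovich_wasserstein}; I would state this explicitly. Assumptions~1--3 are not actually needed for this lemma; overlap only ensures that $p(\boldsymbol{x}\mid\boldsymbol{t})$ is well defined.
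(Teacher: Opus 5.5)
Your proposal is correct and follows the paper's proof step for step: push both integrals forward to $\mathcal{R}$ using $\Psi\circ\phi=\mathrm{id}_{\mathcal{X}}$, write the difference as $B_{\phi}\int g_{\boldsymbol{t}'}\,(d\boldsymbol{R}_{\boldsymbol{t}}-d\boldsymbol{R}_{\boldsymbol{t}'})$, bound it by the IPM over 1-Lipschitz functions, and then bound that by $W_1$. The one difference is that you verify the needed direction of Kantorovich--Rubinstein directly via couplings and state the finite-first-moment integrability explicitly, whereas the paper cites the full duality; this makes your version more self-contained but does not change the route.
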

\begin{proof}
    By the definition of $\boldsymbol{R}_{\boldsymbol{t}}$ and the Assumption~\ref{assumption:invertible_rep}, we obtain
    \begin{align*}
        \int_{\mathcal{X}} l(\boldsymbol{x},\boldsymbol{t}^{\prime})\,p(\boldsymbol{x}\mid \boldsymbol{t})\,d\boldsymbol{x}
        &= \int_{\mathcal{R}} l(\Psi(\boldsymbol{r}),\boldsymbol{t}^{\prime})\,dR_{\boldsymbol{t}}(\boldsymbol{r}),\\
        \int_{\mathcal{X}} l(\boldsymbol{x},\boldsymbol{t}^{\prime})\,p(\boldsymbol{x} \mid \boldsymbol{t}^{\prime})\,d\boldsymbol{x}
        &= \int_{\mathcal{R}} l(\Psi(\boldsymbol{r}),\boldsymbol{t}^{\prime})\,dR_{\boldsymbol{t}^{\prime}}(\boldsymbol{r}).
    \end{align*}
    Therefore, we can rewrite the difference of expectations as an integral over $\mathcal{R}$ as follows.
    \begin{align*}
      &
      \int_{\mathcal{X}} l(\boldsymbol{x},\boldsymbol{t}^{\prime})\,p(\boldsymbol{x}\mid \boldsymbol{t})\,d\boldsymbol{x}
      -
      \int_{\mathcal{X}} l(\boldsymbol{x},\boldsymbol{t}^{\prime})\,p(\boldsymbol{x}\mid \boldsymbol{t}^{\prime})\,d\boldsymbol{x}
      \\
      &=
      \int_{\mathcal{R}}
      l\bigl(\Psi(\boldsymbol{r}),\boldsymbol{t}^{\prime}\bigr)\,
      \bigl(dR_{\boldsymbol{t}}(\boldsymbol{r})-dR_{\boldsymbol{t}^{\prime}}(\boldsymbol{r})\bigr)
      \\
      &=
      B_{\phi}
      \int_{\mathcal{R}}
      g_{\boldsymbol{t}^{\prime}}(\boldsymbol{r})\,
      \bigl(dR_{\boldsymbol{t}}(\boldsymbol{r})-dR_{\boldsymbol{t}^{\prime}}(\boldsymbol{r})\bigr)\\
      &\le
      B_{\phi}\left|
      \int_{\mathcal{R}}
      g_{\boldsymbol{t}^{\prime}}(\boldsymbol{r})\,
      \bigl(dR_{\boldsymbol{t}}(\boldsymbol{r})-dR_{\boldsymbol{t}^{\prime}}(\boldsymbol{r})\bigr)
      \right|\\
      &\le
      B_{\phi}\mathrm{IPM}_{\mathcal{F}}(\boldsymbol{R}_{\boldsymbol{t}},\boldsymbol{R}_{\boldsymbol{t}^{\prime}}) \qquad \text{(by Assumption~\ref{assumption:lipschitz_loss})},
    \end{align*}
    where $\mathcal{F}=\mathrm{Lip}_1(\mathcal{R})$ denotes the set of all $1$-Lipschitz functions on $\mathcal{R}$.
    By the Kantorovich-Rubinstein duality~\citep{villani2008optimal}, the IPM over $1$-Lipschitz functions equals the 1-Wasserstein distance,
    \begin{equation*}
      \mathrm{IPM}_{\mathcal{F}}(\boldsymbol{R}_{\boldsymbol{t}},\boldsymbol{R}_{\boldsymbol{t}^{\prime}}) = W_1(\boldsymbol{R}_{\boldsymbol{t}},\boldsymbol{R}_{\boldsymbol{t}^{\prime}}).
    \end{equation*}
    Accordingly, we have
    \begin{equation*}
      \int_{\mathcal{X}} l(\boldsymbol{x},\boldsymbol{t}^{\prime})\,p(\boldsymbol{x}\mid \boldsymbol{t})\,d\boldsymbol{x}
      \le
      \int_{\mathcal{X}} l(\boldsymbol{x},\boldsymbol{t}^{\prime})\,p(\boldsymbol{x}\mid \boldsymbol{t}^{\prime})\,d\boldsymbol{x}
      +
      B_{\phi} W_1(\boldsymbol{R}_{\boldsymbol{t}},\boldsymbol{R}_{\boldsymbol{t}^{\prime}}).
    \end{equation*}

\end{proof}

Building on this result, Lemma~\ref{lem:cf_bound} upper-bounds the counterfactual outcome loss for treatment pattern $\boldsymbol{t}$ by its factual outcome loss plus Wasserstein discrepancies to other treatment patterns.
% counterfactual upper bound
\begin{lemma}
\label{lem:cf_bound}
    Under the Assumptions 1-5, for any treatment pattern $\boldsymbol{t} \in \mathcal{T}$, the expected counterfactual loss satisfies
    \begin{equation*}
    \epsilon_{\mathrm{CF}}^{(\boldsymbol{t})}
    \le
    \left(\frac{1}{p(\boldsymbol{t})}-1\right)\epsilon_{\mathrm{F}}^{(\boldsymbol{t})}
    \;+\;
    B_{\phi}
    \sum_{\substack{\boldsymbol{t}^{\prime}\ne \boldsymbol{t}}}
    \,W_1\!\left(\boldsymbol{R}_{\boldsymbol{t}},\boldsymbol{R}_{\boldsymbol{t}^{\prime}}\right).
    \end{equation*}
\end{lemma}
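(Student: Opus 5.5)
Starting from the definition of $\epsilon_{\mathrm{CF}}^{(\boldsymbol{t})}$, the plan is to bound each summand with Lemma~\ref{lem:cf_ipm_wasserstein} and then rewrite the resulting factual term via $\epsilon_{\mathrm{F}}^{(\boldsymbol{t})}$. Fix $\boldsymbol{t}$ and, for each $\boldsymbol{t}'\neq\boldsymbol{t}$, apply Lemma~\ref{lem:cf_ipm_wasserstein} with the roles chosen so that the loss is $l(\cdot,\boldsymbol{t})$ and the covariates are drawn from $p(\boldsymbol{x}\mid\boldsymbol{t}')$. This yields
\begin{equation*}
\int_{\mathcal{X}} l(\boldsymbol{x},\boldsymbol{t})\,p(\boldsymbol{x}\mid\boldsymbol{t}')\,d\boldsymbol{x}
\le
\int_{\mathcal{X}} l(\boldsymbol{x},\boldsymbol{t})\,p(\boldsymbol{x}\mid\boldsymbol{t})\,d\boldsymbol{x}
+ B_{\phi} W_1(\boldsymbol{R}_{\boldsymbol{t}'},\boldsymbol{R}_{\boldsymbol{t}}).
\end{equation*}
Because $W_1$ is symmetric, the last term equals $B_\phi W_1(\boldsymbol{R}_{\boldsymbol{t}},\boldsymbol{R}_{\boldsymbol{t}'})$.

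Next, observe from the definition of $\epsilon_{\mathrm{F}}^{(\boldsymbol{t})}$ that $\int l(\boldsymbol{x},\boldsymbol{t})p(\boldsymbol{x}\mid\boldsymbol{t})d\boldsymbol{x} = \epsilon_{\mathrm{F}}^{(\boldsymbol{t})}/p(\boldsymbol{t})$, which is well defined since $p(\boldsymbol{t})>0$ by Assumption~3. Multiply the inequality for each $\boldsymbol{t}'$ by $p(\boldsymbol{t}')\ge 0$ and sum over $\boldsymbol{t}'\neq\boldsymbol{t}$. The factual part becomes $\bigl(\sum_{\boldsymbol{t}'\neq\boldsymbol{t}}p(\boldsymbol{t}')\bigr)\epsilon_{\mathrm{F}}^{(\boldsymbol{t})}/p(\boldsymbol{t}) = \frac{1-p(\boldsymbol{t})}{p(\boldsymbol{t})}\epsilon_{\mathrm{F}}^{(\boldsymbol{t})}$, which is exactly $\bigl(\frac{1}{p(\boldsymbol{t})}-1\bigr)\epsilon_{\mathrm{F}}^{(\boldsymbol{t})}$.

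The discrepancy part becomes $B_\phi\sum_{\boldsymbol{t}'\neq\boldsymbol{t}}p(\boldsymbol{t}')W_1(\boldsymbol{R}_{\boldsymbol{t}},\boldsymbol{R}_{\boldsymbol{t}'})$. Since $p(\boldsymbol{t}')\le 1$ and $W_1\ge 0$, dropping the weights gives the stated unweighted sum, completing the argument.

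There is no substantial obstacle; the argument is bookkeeping. The one point to handle carefully is the direction of Lemma~\ref{lem:cf_ipm_wasserstein}: the loss index must be $\boldsymbol{t}$ while the integrating distribution ranges over the other patterns $\boldsymbol{t}'$, i.e.\ the lemma is used with its two arguments swapped. Relaxing $p(\boldsymbol{t}')\le 1$ is loose but matches the statement as given.
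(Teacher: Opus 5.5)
Your proposal is correct and matches the paper's proof step for step. You apply Lemma~\ref{lem:cf_ipm_wasserstein} to each $\boldsymbol{t}'\neq\boldsymbol{t}$, multiply by $p(\boldsymbol{t}')$ and sum, identify $(1-p(\boldsymbol{t}))\int_{\mathcal{X}} l(\boldsymbol{x},\boldsymbol{t})\,p(\boldsymbol{x}\mid\boldsymbol{t})\,d\boldsymbol{x}$ with $\bigl(\frac{1}{p(\boldsymbol{t})}-1\bigr)\epsilon_{\mathrm{F}}^{(\boldsymbol{t})}$, and drop the weights using $p(\boldsymbol{t}')\le 1$; your explicit remark that the lemma is invoked with its arguments swapped and then closed by symmetry of $W_1$ is handled correctly, and the paper leaves it implicit.
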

\begin{proof}
Applying Lemma~\ref{lem:cf_ipm_wasserstein} to each $\boldsymbol{t}^{\prime}$ yields
\begin{align}
\label{eq:lem_prepare}
\int_{\mathcal{X}}
l(\boldsymbol{x}, \boldsymbol{t})\,p(\boldsymbol{x} \mid \boldsymbol{t}^{\prime})\,d\boldsymbol{x}
\le
\int_{\mathcal{X}}
l(\boldsymbol{x} , \boldsymbol{t})\,p(\boldsymbol{x}\mid \boldsymbol{t})\,d\boldsymbol{x}
+
B_{\phi}\,W_1\!\left(\boldsymbol{R}_{\boldsymbol{t}},\boldsymbol{R}_{\boldsymbol{t}^{\prime}}\right).
\end{align}

Multiplying both sides of \eqref{eq:lem_prepare} by $p(\boldsymbol{t}^{\prime})$ and summing over $\boldsymbol{t}^{\prime}\ne \boldsymbol{t}$, we obtain
\begin{align*}
\begin{split}
\label{eq:counter_factual_upper_proof}
\epsilon_{\mathrm{CF}}^{(\boldsymbol{t})}
&\le
\sum_{\boldsymbol{t}^{\prime}\ne \boldsymbol{t}}
p(\boldsymbol{t}^{\prime})\int_{\mathcal{X}}l(\boldsymbol{x},\boldsymbol{t})\,p(\boldsymbol{x}\mid \boldsymbol{t})\,d\boldsymbol{x}
+
B_{\phi}\sum_{\boldsymbol{t}^{\prime}\ne \boldsymbol{t}}p(\boldsymbol{t}^{\prime})W_1\!\left(\boldsymbol{R}_{\boldsymbol{t}},\boldsymbol{R}_{\boldsymbol{t}^{\prime}}\right) \\
&=
\left(\sum_{\boldsymbol{t}^{\prime}\ne \boldsymbol{t}}p(\boldsymbol{t}^{\prime})\right)\int_{\mathcal{X}}l(\boldsymbol{x},\boldsymbol{t})\,p(\boldsymbol{x}\mid \boldsymbol{t})\,d\boldsymbol{x}
+
B_{\phi}\sum_{\boldsymbol{t}^{\prime}\ne \boldsymbol{t}}p(\boldsymbol{t}^{\prime})W_1\!\left(\boldsymbol{R}_{\boldsymbol{t}},\boldsymbol{R}_{\boldsymbol{t}^{\prime}}\right) \\
&=
(1-p(\boldsymbol{t}))\int_{\mathcal{X}}l(\boldsymbol{x}, \boldsymbol{t})\,p(\boldsymbol{x}\mid \boldsymbol{t})\,d\boldsymbol{x}
+
B_{\phi}\sum_{\boldsymbol{t}^{\prime}\ne \boldsymbol{t}}p(\boldsymbol{t}^{\prime})W_1\!\left(\boldsymbol{R}_{\boldsymbol{t}},\boldsymbol{R}_{\boldsymbol{t}^{\prime}}\right) \\
&=\left(\frac{1}{p(\boldsymbol{t})}-1\right)\epsilon_{\mathrm{F}}^{(\boldsymbol{t})}
+
B_{\phi}\sum_{\boldsymbol{t}^{\prime}\ne \boldsymbol{t}} p(\boldsymbol{t}^{\prime}) W_1\!\left(\boldsymbol{R}_{\boldsymbol{t}},\boldsymbol{R}_{\boldsymbol{t}^{\prime}}\right) \qquad \text{(by Definition in Equation~\eqref{app_eq:factual_loss})} \\
&\le \left(\frac{1}{p(\boldsymbol{t})}-1\right)\epsilon_{\mathrm{F}}^{(\boldsymbol{t})}
+
B_{\phi}\sum_{\boldsymbol{t}^{\prime}\ne \boldsymbol{t}} W_1\!\left(\boldsymbol{R}_{\boldsymbol{t}},\boldsymbol{R}_{\boldsymbol{t}^{\prime}}\right).
\end{split}
\end{align*}
where the last inequality holds because $p(\boldsymbol{t}^{\prime}) \le 1$ and $W_1(\cdot, \cdot) \ge 0$.
\end{proof}

% Pairwise Wasserstein discrepancies are upper-bounded by barycentric discrepancies
Second, Lemma~\ref{lem:pairwise_to_barycenter_bound} provides an upper bound on aggregate discrepancies across multiple treatment patterns in terms of distances to a common Wasserstein barycenter.
This bound demonstrates that the sum of all pairwise discrepancies can be controlled by reducing the discrepancies between each representation distribution and the Wasserstein barycenter.
This is exactly what our barycenter-based regularization term $\mathcal{L}_{\phi}$ enforces.
\begin{lemma}\label{lem:pairwise_to_barycenter_bound}
    Let $\mathcal{T}$ denote the set of all treatment patterns with $|\mathcal{T}|=2^{K}$.
    For any $p\ge 1$, the following inequality holds:
    \begin{equation*}
    \sum_{\{\boldsymbol{t},\boldsymbol{t}'\}\in \binom{\mathcal{T}}{2}}
    W_p\!\left(\boldsymbol{R}_{\boldsymbol{t}}, \boldsymbol{R}_{\boldsymbol{t}'}\right)
    \le
    \left(2^{K}-1\right)
    \sum_{\boldsymbol{t}\in \mathcal{T}}
    W_p\!\left(\boldsymbol{R}_b^{\ast}, \boldsymbol{R}_{\boldsymbol{t}}\right),
    \end{equation*}
    where $\boldsymbol{R}_b^{\ast}$ is a Wasserstein barycenter of $\{\boldsymbol{R}_{\boldsymbol{t}}\}_{\boldsymbol{t}\in\mathcal{T}}$.
\end{lemma}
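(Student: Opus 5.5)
The plan is to use only the triangle inequality for $W_p$ together with a counting argument; the barycenter property of $\boldsymbol{R}_b^{\ast}$ plays no role. The inequality holds with $\boldsymbol{R}_b^{\ast}$ replaced by any probability measure on $\mathcal{R}$ with finite $p$-th moment. For such an anchor, each $W_p(\boldsymbol{R}_b^{\ast},\boldsymbol{R}_{\boldsymbol{t}})$ is finite, and $W_p$ is a metric on the space of such measures (\citealp{villani2008optimal}).

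First, for each unordered pair $\{\boldsymbol{t},\boldsymbol{t}'\}\in\binom{\mathcal{T}}{2}$, the triangle inequality through the common anchor gives
\begin{equation*}
W_p(\boldsymbol{R}_{\boldsymbol{t}},\boldsymbol{R}_{\boldsymbol{t}'})\le W_p(\boldsymbol{R}_{\boldsymbol{t}},\boldsymbol{R}_b^{\ast})+W_p(\boldsymbol{R}_b^{\ast},\boldsymbol{R}_{\boldsymbol{t}'}).
\end{equation*}
Second, I sum this over all $\binom{2^K}{2}$ unordered pairs and count how often each term $W_p(\boldsymbol{R}_b^{\ast},\boldsymbol{R}_{\boldsymbol{t}})$ appears on the right. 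A fixed pattern $\boldsymbol{t}$ belongs to exactly $|\mathcal{T}|-1=2^K-1$ unordered pairs, and it contributes one term in each. Using symmetry of $W_p$, the right-hand side therefore collapses to exactly $(2^K-1)\sum_{\boldsymbol{t}\in\mathcal{T}}W_p(\boldsymbol{R}_b^{\ast},\boldsymbol{R}_{\boldsymbol{t}})$, which is the stated bound.

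The main point needing care is justifying that $W_p$ satisfies the triangle inequality for general (non-empirical) measures. This is the standard gluing-lemma argument: given optimal couplings $\pi_1\in\Pi(\boldsymbol{R}_{\boldsymbol{t}},\boldsymbol{R}_b^{\ast})$ and $\pi_2\in\Pi(\boldsymbol{R}_b^{\ast},\boldsymbol{R}_{\boldsymbol{t}'})$, glue them along the shared marginal $\boldsymbol{R}_b^{\ast}$. Then apply Minkowski's inequality in $L^p$ to $\|r-r'\|_2\le\|r-z\|_2+\|z-r'\|_2$. I would cite this rather than reprove it, so the only genuine step is the double-counting identity above.
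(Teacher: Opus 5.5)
Your proof is correct and essentially the same as the paper's: both apply the triangle inequality through the anchor $\boldsymbol{R}_b^{\ast}$ and then count multiplicities, the only cosmetic difference being that the paper sums over ordered pairs to get $2(|\mathcal{T}|-1)\sum_{\boldsymbol{t}\in\mathcal{T}} W_p(\boldsymbol{R}_b^{\ast},\boldsymbol{R}_{\boldsymbol{t}})$ and then halves, whereas you count unordered pairs directly. The paper likewise remarks after the proof that the lemma holds for any anchor distribution, so your observation that the barycenter property plays no role matches it.
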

\begin{proof}
    By the triangle inequality for the $p$-Wasserstein distance defined in Definition~\ref{def:kantorovich_wasserstein}, for any $\boldsymbol{t},\boldsymbol{t}'\in \mathcal{T}$,
    \begin{equation*}
    W_p\!\left(\boldsymbol{R}_{\boldsymbol{t}}, \boldsymbol{R}_{\boldsymbol{t}'}\right)
    \le
    W_p\!\left(\boldsymbol{R}_{\boldsymbol{t}}, \boldsymbol{R}_b^{\ast}\right)
    +
    W_p\!\left(\boldsymbol{R}_b^{\ast}, \boldsymbol{R}_{\boldsymbol{t}'}\right).
    \end{equation*}
    
    Summing the above inequality over all \emph{ordered} pairs $(\boldsymbol{t},\boldsymbol{t}')$ with $\boldsymbol{t}\neq \boldsymbol{t}'$ yields
    \begin{align*}
    \sum_{\boldsymbol{t}\neq \boldsymbol{t}'}
    W_p\!\left(\boldsymbol{R}_{\boldsymbol{t}}, \boldsymbol{R}_{\boldsymbol{t}'}\right)
    &\le
    \sum_{\boldsymbol{t}\neq \boldsymbol{t}'}
    \left\{
    W_p\!\left(\boldsymbol{R}_{\boldsymbol{t}}, \boldsymbol{R}_b^{\ast}\right)
    +
    W_p\!\left(\boldsymbol{R}_b^{\ast}, \boldsymbol{R}_{\boldsymbol{t}'}\right)
    \right\} \\
    &=
    (|\mathcal{T}|-1)\sum_{\boldsymbol{t}\in\mathcal{T}} W_p\!\left(\boldsymbol{R}_{\boldsymbol{t}}, \boldsymbol{R}_b^{\ast}\right)
    +
    (|\mathcal{T}|-1)\sum_{\boldsymbol{t}\in\mathcal{T}} W_p\!\left(\boldsymbol{R}_b^{\ast}, \boldsymbol{R}_{\boldsymbol{t}}\right),
    \end{align*}
    since each $\boldsymbol{t}\in\mathcal{T}$ appears exactly $|\mathcal{T}|-1$ times as the first element of an ordered pair.

    Using the symmetry of $W_p$, we have
    $\sum_{\boldsymbol{t}\in\mathcal{T}} W_p(\boldsymbol{R}_{\boldsymbol{t}}, \boldsymbol{R}_b^{\ast})
    =
    \sum_{\boldsymbol{t}\in\mathcal{T}} W_p(\boldsymbol{R}_b^{\ast}, \boldsymbol{R}_{\boldsymbol{t}})$,
    and thus
    \begin{equation*}
    \sum_{\boldsymbol{t}\neq \boldsymbol{t}'}
    W_p\!\left(\boldsymbol{R}_{\boldsymbol{t}}, \boldsymbol{R}_{\boldsymbol{t}'}\right)
    \le
    2(|\mathcal{T}|-1)\sum_{\boldsymbol{t}\in\mathcal{T}} W_p\!\left(\boldsymbol{R}_b^{\ast}, \boldsymbol{R}_{\boldsymbol{t}}\right).
    \end{equation*}

    Finally, noting that
    $\sum_{\boldsymbol{t}\neq \boldsymbol{t}'} W_p(\boldsymbol{R}_{\boldsymbol{t}},\boldsymbol{R}_{\boldsymbol{t}'})
    =2\sum_{\{\boldsymbol{t},\boldsymbol{t}'\}\in\binom{\mathcal{T}}{2}} W_p(\boldsymbol{R}_{\boldsymbol{t}},\boldsymbol{R}_{\boldsymbol{t}'})$,
    we obtain
    \begin{equation*}
    \sum_{\{\boldsymbol{t},\boldsymbol{t}'\}\in \binom{\mathcal{T}}{2}}
    W_p\!\left(\boldsymbol{R}_{\boldsymbol{t}}, \boldsymbol{R}_{\boldsymbol{t}'}\right)
    \le
    (|\mathcal{T}|-1)
    \sum_{\boldsymbol{t}\in \mathcal{T}}
    W_p\!\left(\boldsymbol{R}_b^{\ast}, \boldsymbol{R}_{\boldsymbol{t}}\right)
    =
    \left(2^{K}-1\right)
    \sum_{\boldsymbol{t}\in \mathcal{T}}
    W_p\!\left(\boldsymbol{R}_b^{\ast}, \boldsymbol{R}_{\boldsymbol{t}}\right),
    \end{equation*}
    which completes the proof.
    \end{proof}
Although this lemma holds for any anchor distribution, using the Wasserstein barycenter can yield a smaller right-hand side, leading to a tighter anchor-based bound.

% relation of wasserstein and fgw
Third, Lemma~\ref{lem:w1_upper_by_fgw} provides an upper bound that relates the above Wasserstein discrepancies to the FGW discrepancy adopted in our objective, thereby allowing us to control the Wasserstein terms via FGW.
Whereas replacing the Wasserstein terms with an FGW-based upper bound may loosen the theoretical bound, introducing FGW is crucial in practice because it encourages the preservation of local proximity structures in the representation space.
\begin{lemma}
\label{lem:w1_upper_by_fgw}
    Let $0<\eta\le 1$ and let $F(\cdot,\cdot)$ be the FGW discrepancy defined in Equation~\eqref{eq:fgw_to_barycenter}.
    Then, for any treatment pattern $\boldsymbol{t}\in\mathcal{T}$,
    \begin{equation*}
    W_1\!\left(\boldsymbol{R}_{\boldsymbol{t}}, \boldsymbol{R}_{b}^{\ast}\right)
    \le
    \frac{1}{\eta}\,
    F\!\left(\boldsymbol{R}_{\boldsymbol{t}}, \boldsymbol{R}_{b}^{\ast}\right).
    \end{equation*}
\end{lemma}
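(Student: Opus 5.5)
Since both terms of $F$ are nonnegative, $F$ dominates $\eta$ times the transport cost of its own optimal coupling, and that cost is at least $W_1$. I would therefore compare the two variational problems coupling by coupling, and only then pass to the infimum.

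\textbf{Step 1 (fixed coupling).} Fix an arbitrary coupling $\pi\in\Pi(\boldsymbol{R}_{\boldsymbol{t}},\boldsymbol{R}^{\ast}_{b})$ and let $J_\eta(\pi)$ be the bracketed objective in Equation~\eqref{eq:fgw_to_barycenter}. The integrand $d_b^{(\boldsymbol{t})}(\boldsymbol{r},\boldsymbol{r}',\boldsymbol{z},\boldsymbol{z}')=\bigl(\|\boldsymbol{r}-\boldsymbol{r}'\|_2-\|\boldsymbol{z}-\boldsymbol{z}'\|_2\bigr)^2$ is a square, so it is nonnegative. Because $\eta\le 1$, the factor $1-\eta$ is also nonnegative, and the Gromov--Wasserstein part of $J_\eta(\pi)$ is therefore $\ge 0$. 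Dropping it gives
\begin{equation*}
J_\eta(\pi)\;\ge\;\eta\int\|\boldsymbol{r}-\boldsymbol{z}\|_2\,d\pi(\boldsymbol{r},\boldsymbol{z}).
\end{equation*}

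\textbf{Step 2 (compare with $W_1$).} By Definition~\ref{def:kantorovich_wasserstein} with $p=1$, $W_1(\boldsymbol{R}_{\boldsymbol{t}},\boldsymbol{R}^{\ast}_{b})$ is the infimum of $\int\|\boldsymbol{r}-\boldsymbol{z}\|_2\,d\pi$ over the same set $\Pi(\boldsymbol{R}_{\boldsymbol{t}},\boldsymbol{R}^{\ast}_{b})$. Hence the transport cost of any single $\pi$ is at least $W_1$, and so $J_\eta(\pi)\ge \eta\,W_1(\boldsymbol{R}_{\boldsymbol{t}},\boldsymbol{R}^{\ast}_{b})$ for every $\pi$.

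\textbf{Step 3 (take the infimum).} The right-hand side of Step~2 does not depend on $\pi$. Taking the infimum over $\pi$ on the left therefore gives $F(\boldsymbol{R}_{\boldsymbol{t}},\boldsymbol{R}^{\ast}_{b})\ge\eta\,W_1(\boldsymbol{R}_{\boldsymbol{t}},\boldsymbol{R}^{\ast}_{b})$. Dividing by $\eta>0$ yields the claimed inequality.

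\textbf{Main obstacle.} There is no real difficulty; the only point needing care is well-definedness. The integrals must be finite, which follows from the finite-moment standing assumption (finite first moments for the linear term and finite second moments for the quadratic term). Alternatively, if $F=+\infty$, the inequality is trivial. No attainment of the infimum is needed, because the comparison is done coupling by coupling before taking the infimum.
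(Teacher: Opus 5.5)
Your proof is correct and follows the paper's argument essentially step for step. The paper also writes the FGW objective as $\eta A(\pi)+(1-\eta)B(\pi)$ with $B(\pi)\ge 0$, bounds $A(\pi)\ge W_1$ coupling by coupling, and then takes the infimum over $\pi$; your remarks on the sign of $1-\eta$ and on well-definedness are harmless refinements that the paper leaves implicit.
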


\begin{proof}
For any coupling $\pi\in\Pi(\boldsymbol{R}_{\boldsymbol{t}},\boldsymbol{R}_b^*)$, write the FGW objective as $ J(\pi) = \eta A(\pi) + (1-\eta)B(\pi)$, where $A(\pi)$ is the feature-based transport cost and $B(\pi)\geq 0$ is the structural term.
By Definition~\ref{def:kantorovich_wasserstein},
\[
A(\pi)
\geq
\inf_{\widetilde{\pi}
\in\Pi(\boldsymbol{R}_{\boldsymbol{t}},\boldsymbol{R}_b^*)}
A(\widetilde{\pi})
=
W_1(R_{\boldsymbol{t}},R_b^*).
\]
Therefore, $J(\pi) \geq \eta W_1(\boldsymbol{R}_{\boldsymbol{t}},\boldsymbol{R}_b^*)$.
Taking the infimum over $\pi\in\Pi(\boldsymbol{R}_{\boldsymbol{t}},\boldsymbol{R}_b^*)$ yields $F(\boldsymbol{R}_{\boldsymbol{t}},\boldsymbol{R}_b^*) \geq \eta W_1(\boldsymbol{R}_{\boldsymbol{t}},\boldsymbol{R}_b^*)$ ,which proves the result.
\end{proof}

Combining Lemmas~\ref{lem:cf_ipm_wasserstein} - \ref{lem:w1_upper_by_fgw}, we can upper-bound the counterfactual outcome losses terms in Theorems 1 and 2 by the corresponding factual outcome losses plus the barycenter-based regularizer~(via Lemma~\ref{lem:pairwise_to_barycenter_bound}) and the FGW discrepancy term in our objective~(via Lemma~\ref{lem:w1_upper_by_fgw}).
We then obtain the stated bounds by substituting these inequalities into the decompositions used in the proofs of Theorems 1 and 2.

\subsubsection{Main Proofs: Theorems 1 and 2}
Using the above lemmas, we prove Theorems 1 and 2.

\begin{thm}[Upper bound for CASE]\label{thm:case_upper}
Suppose that Assumptions 1–3 and the auxiliary conditions in Appendix~\ref{appendix:assumptions_and_notation} hold~(in particular, the expected squared loss $l(\boldsymbol{x},\boldsymbol{t})$ is $B_{\phi}$-Lipschitz in the representation space), and that the weights in Equation~\eqref{eq:bfgwb_overview} are uniform, i.e., $w_{\boldsymbol{t}}=2^{-K}$. 
Then, for any $k \in \{1,\ldots,K\}$,
\begin{equation*}
\epsilon_{\mathrm{CASE}}(k)
\le
2\left(
\frac{1}{p(\boldsymbol{t}_{+k})}\,\epsilon_{\mathrm{F}}^{(\boldsymbol{t}_{+k})}
+
\frac{1}{p(\boldsymbol{0})}\,\epsilon_{\mathrm{F}}^{(\boldsymbol{0})}
+
\frac{2^{2K}}{\eta}\,B_{\phi}\mathcal{L}_{\phi}
\right).
\end{equation*}
\end{thm}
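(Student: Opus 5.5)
The plan is to combine a per-pattern decomposition of the CASE error with Lemma~\ref{lem:cf_bound}, and then route every pairwise Wasserstein term through the barycenter $\boldsymbol{R}_b^{\ast}$. The barycenter step has to be done jointly for the two patterns so that the constant comes out as exactly $2^{2K}$.

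\textbf{Step 1: reduce the CASE error to two per-pattern errors.} Write $\hat\tau_{\mathrm{CASE}}-\tau_{\mathrm{CASE}}=(\hat\mu(\boldsymbol{x},\boldsymbol{t}_{+k})-\mu(\boldsymbol{x},\boldsymbol{t}_{+k}))-(\hat\mu(\boldsymbol{x},\boldsymbol{0})-\mu(\boldsymbol{x},\boldsymbol{0}))$. By $(a-b)^2\le 2a^2+2b^2$,
\[
\epsilon_{\mathrm{CASE}}(k)\le 2\bigl(e(\boldsymbol{t}_{+k})+e(\boldsymbol{0})\bigr),\qquad e(\boldsymbol{t}):=\int(\hat\mu(\boldsymbol{x},\boldsymbol{t})-\mu(\boldsymbol{x},\boldsymbol{t}))^2p(\boldsymbol{x})\,d\boldsymbol{x}.
\]
A bias--variance decomposition gives $l(\boldsymbol{x},\boldsymbol{t})=(\hat\mu-\mu)^2+\mathrm{Var}(Y(\boldsymbol{t})\mid\boldsymbol{x})\ge(\hat\mu-\mu)^2$. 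Since $p(\boldsymbol{x})=\sum_{\boldsymbol{t}'}p(\boldsymbol{x}\mid\boldsymbol{t}')p(\boldsymbol{t}')$, this yields $e(\boldsymbol{t})\le\epsilon_{\mathrm{F}}^{(\boldsymbol{t})}+\epsilon_{\mathrm{CF}}^{(\boldsymbol{t})}$.

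\textbf{Step 2: apply Lemma~\ref{lem:cf_bound}.} For each $\boldsymbol{t}\in\{\boldsymbol{t}_{+k},\boldsymbol{0}\}$, Lemma~\ref{lem:cf_bound} gives
\[
e(\boldsymbol{t})\le \frac{1}{p(\boldsymbol{t})}\epsilon_{\mathrm{F}}^{(\boldsymbol{t})}+B_\phi\sum_{\boldsymbol{t}'\ne\boldsymbol{t}}W_1(\boldsymbol{R}_{\boldsymbol{t}},\boldsymbol{R}_{\boldsymbol{t}'}).
\]
The factual parts then match the statement exactly. What remains is to show
\[
\Sigma:=\sum_{\boldsymbol{t}'\ne\boldsymbol{t}_{+k}}W_1(\boldsymbol{R}_{\boldsymbol{t}_{+k}},\boldsymbol{R}_{\boldsymbol{t}'})+\sum_{\boldsymbol{t}'\ne\boldsymbol{0}}W_1(\boldsymbol{R}_{\boldsymbol{0}},\boldsymbol{R}_{\boldsymbol{t}'})\le\frac{2^{2K}}{\eta}\mathcal{L}_\phi .
\]

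\textbf{Step 3: bound $\Sigma$ through the barycenter (main obstacle).} This is where the constant is decided. Invoking Lemma~\ref{lem:pairwise_to_barycenter_bound} on each sum separately loses a factor of $2$, so I would instead use the triangle inequality for $W_1$ through $\boldsymbol{R}_b^{\ast}$ directly; it is a genuine metric on measures, and $\boldsymbol{R}_b^{\ast}$ being defined via $W_2$ is irrelevant here. Set $W_{\boldsymbol{t}}:=W_1(\boldsymbol{R}_{\boldsymbol{t}},\boldsymbol{R}_b^{\ast})$ and $S:=\sum_{\boldsymbol{t}\in\mathcal{T}}W_{\boldsymbol{t}}$. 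For each anchor $\boldsymbol{t}$,
\[
\sum_{\boldsymbol{t}'\ne\boldsymbol{t}}W_1(\boldsymbol{R}_{\boldsymbol{t}},\boldsymbol{R}_{\boldsymbol{t}'})\le(2^K-1)W_{\boldsymbol{t}}+(S-W_{\boldsymbol{t}})=(2^K-2)W_{\boldsymbol{t}}+S.
\]
Adding the two anchors gives $\Sigma\le(2^K-2)(W_{\boldsymbol{t}_{+k}}+W_{\boldsymbol{0}})+2S$. Because $\boldsymbol{t}_{+k}\ne\boldsymbol{0}$ and all $W_{\boldsymbol{t}}\ge 0$, we have $W_{\boldsymbol{t}_{+k}}+W_{\boldsymbol{0}}\le S$, and therefore $\Sigma\le 2^K S$.

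\textbf{Step 4: convert to FGW.} By Lemma~\ref{lem:w1_upper_by_fgw} and the uniform weights $w_{\boldsymbol{t}}=2^{-K}$,
\[
S\le\frac1\eta\sum_{\boldsymbol{t}}F(\boldsymbol{R}_{\boldsymbol{t}},\boldsymbol{R}_b^{\ast})=\frac{2^K}{\eta}\mathcal{L}_\phi .
\]
Hence $\Sigma\le 2^{2K}\mathcal{L}_\phi/\eta$. Inserting this and the factual terms into the factor-$2$ bound from Step 1 gives exactly the stated inequality.
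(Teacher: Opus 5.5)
Your proposal is correct and follows the paper's proof essentially step for step. The shared steps are the $(a-b)^2\le 2a^2+2b^2$ split with $(\hat\mu-\mu)^2\le l$, Lemma~\ref{lem:cf_bound} applied to $\boldsymbol{t}_{+k}$ and $\boldsymbol{0}$, the anchored triangle inequality $\sum_{\boldsymbol{t}'\ne\boldsymbol{t}}W_1(\boldsymbol{R}_{\boldsymbol{t}},\boldsymbol{R}_{\boldsymbol{t}'})\le(2^K-2)W_{\boldsymbol{t}}+S$ summed over the two anchors and bounded by $2^K S$, and then Lemma~\ref{lem:w1_upper_by_fgw} with $w_{\boldsymbol{t}}=2^{-K}$. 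Like the paper, you bypass Lemma~\ref{lem:pairwise_to_barycenter_bound} for this theorem, and that is what keeps the constant at exactly $2^{2K}$.
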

\begin{proof}
    By definition in Equation~\eqref{eq:def_case_error}, we rewrite
    \begin{align*}
    \epsilon_{\mathrm{CASE}}(k)
    &= \int_{\mathcal{X}} \Bigl[ \bigl\{\hat{\mu}(\boldsymbol{x},\boldsymbol{t}_{+k})-\mu(\boldsymbol{x},\boldsymbol{t}_{+k})\bigr\} - \bigl\{\hat{\mu}(\boldsymbol{x}, \boldsymbol{0})-\mu(\boldsymbol{x}, \boldsymbol{0})\bigr\} \Bigr]^2
    \,p(\boldsymbol{x})\,d\boldsymbol{x}.
    \end{align*}
    Therefore, we have
    \begin{align*}
    \epsilon_{\mathrm{CASE}}(k)
    &\le 2 \int_{\mathcal{X}} \Bigl[ \bigl\{ \hat{\mu}(\boldsymbol{x},\boldsymbol{t}_{+k}) - \mu(\boldsymbol{x},\boldsymbol{t}_{+k}) \bigr\}^2
        + 
        \bigl\{\hat{\mu}(\boldsymbol{x}, \boldsymbol{0}) - \mu(\boldsymbol{x}, \boldsymbol{0}) \bigr\}^2 \Bigr] \, p(\boldsymbol{x})\, d\boldsymbol{x} \\
    &= 2 \sum_{\boldsymbol{t}^{\prime} \in \mathcal{T}} \int_{\mathcal{X}} \bigl\{ \hat{\mu}(\boldsymbol{x}, \boldsymbol{t}_{+k}) - \mu(\boldsymbol{x}, \boldsymbol{t}_{+k}) \bigr\}^2 \, p(\boldsymbol{x} \mid \boldsymbol{t}^{\prime})\, p(\boldsymbol{t}^{\prime}) d\boldsymbol{x}
    +
    2 \sum_{\boldsymbol{t}^{\prime} \in \mathcal{T}} \int_{\mathcal{X}}  \bigl\{ \hat{\mu}(\boldsymbol{x},\boldsymbol{0}) - \mu(\boldsymbol{x},\boldsymbol{0}) \bigr\}^2 \, p(\boldsymbol{x} \mid \boldsymbol{t}^{\prime})\, p(\boldsymbol{t}^{\prime}) d\boldsymbol{x}  \\
    &\le
    2\bigl(\epsilon_{\mathrm{F}}^{(\boldsymbol{t}_{+k})}+\epsilon_{\mathrm{CF}}^{(\boldsymbol{t}_{+k})}\bigr)
    +
    2\bigl(\epsilon_{\mathrm{F}}^{(\boldsymbol{0})}+\epsilon_{\mathrm{CF}}^{(\boldsymbol{0})}\bigr)  \qquad \text{(by the Definition of Equations~\eqref{app_eq:factual_loss} and \eqref{app_eq:counterfactual_loss})} \\
    &\le   2\left( \frac{1}{p(\boldsymbol{t}_{+k})}\epsilon_{\mathrm{F}}^{(\boldsymbol{t}_{+k})} + \frac{1}{p(\boldsymbol{0})}\epsilon_{\mathrm{F}}^{(\boldsymbol{0})} \right)
    + 2B_{\phi} \left( \sum_{\boldsymbol{t}^{\prime}\ne \boldsymbol{t}_{+k}} W_1(\boldsymbol{R}_{\boldsymbol{t}_{+k}},\boldsymbol{R}_{\boldsymbol{t}^{\prime}}) + \sum_{\boldsymbol{t}^{\prime}\ne \boldsymbol{0}}  W_1(\boldsymbol{R}_{\boldsymbol{0}},\boldsymbol{R}_{\boldsymbol{t}^{\prime}})\right)  \qquad \text{(by Lemma~\ref{lem:cf_bound})} ,
    \end{align*}
    where we use $(\hat{\mu}(\boldsymbol{x},\boldsymbol{t})-\mu(\boldsymbol{x},\boldsymbol{t}))^{2}\le l(\boldsymbol{x},\boldsymbol{t})$, which follows from the law of total variance.

    To connect the remaining Wasserstein terms to the BFG-WB regularization, we upper bound the pairwise sums by barycenter-based discrepancies.
    Let $\boldsymbol{R}_b^\ast$ denote the Wasserstein barycenter over $\{\boldsymbol{R}_{\boldsymbol{t}}\}_{\boldsymbol{t}\in\mathcal{T}}$.
    For any fixed $\boldsymbol{t} \in\mathcal{T}$, by the triangle inequality of $W_1$,
    \begin{align}
    \label{eq:tri_to_bary}
    \begin{split}
    \sum_{\boldsymbol{t}^{\prime}\ne \boldsymbol{t}} W_1(\boldsymbol{R}_{\boldsymbol{t}},\boldsymbol{R}_{\boldsymbol{t}^{\prime}})
    &\le \sum_{\boldsymbol{t}^{\prime}\ne \boldsymbol{t}}\Bigl\{W_1(\boldsymbol{R}_{\boldsymbol{t}}, \boldsymbol{R}_b^\ast)+W_1(\boldsymbol{R}_{\boldsymbol{t}^{\prime}},\boldsymbol{R}_b^\ast)\Bigr\} \\
    &= (2^K-1)\,W_1(\boldsymbol{R}_{\boldsymbol{t}},\boldsymbol{R}_b^\ast)+\sum_{\boldsymbol{t}^{\prime}\ne \boldsymbol{t}}W_1(\boldsymbol{R}_{\boldsymbol{t}^{\prime}},\boldsymbol{R}_b^\ast) \\
    &= (2^K-2)\,W_1(\boldsymbol{R}_{\boldsymbol{t}},\boldsymbol{R}_b^\ast)+\sum_{\boldsymbol{t}^{\prime}\in\mathcal{T}}W_1(\boldsymbol{R}_{\boldsymbol{t}^{\prime}},\boldsymbol{R}_b^\ast).
    \end{split}
    \end{align}
    Applying~\eqref{eq:tri_to_bary} to $\boldsymbol{t}=\boldsymbol{t}_{+k}$ and $\boldsymbol{t}=\boldsymbol{0}$ and summing them yields
    \begin{align*}
    \sum_{\boldsymbol{t}^{\prime}\ne \boldsymbol{t}_{+k}} W_1(\boldsymbol{R}_{\boldsymbol{t}_{+k}},\boldsymbol{R}_{\boldsymbol{t}^{\prime}})
    +\sum_{\boldsymbol{t}^{\prime}\ne \boldsymbol{0}} W_1(\boldsymbol{R}_{\boldsymbol{0}},\boldsymbol{R}_{\boldsymbol{t}^{\prime}})
    &\le (2^K-2)\Bigl\{W_1(\boldsymbol{R}_{\boldsymbol{t}_{+k}},\boldsymbol{R}_b^\ast)+W_1(\boldsymbol{R}_{\boldsymbol{0}},\boldsymbol{R}_b^\ast)\Bigr\}
    +2\sum_{\boldsymbol{t}^{\prime} \in\mathcal{T}}W_1(\boldsymbol{R}_{\boldsymbol{t}^{\prime}}, \boldsymbol{R}_b^\ast) \\
    &\le 2^K\sum_{\boldsymbol{t}\in\mathcal{T}} W_1(\boldsymbol{R}_{\boldsymbol{t}},\boldsymbol{R}_b^\ast)\\
    &\le \frac{2^K}{\eta}\sum_{\boldsymbol{t}\in\mathcal{T}}F(\boldsymbol{R}_{\boldsymbol{t}}, \boldsymbol{R}_b^\ast) \\
    &\le \frac{2^{2K}}{\eta}\mathcal{L} _\phi.
    \end{align*}
\end{proof}

\begin{thm}[Upper Bound for CAIE]\label{thm:caie_bound}
Under the same assumptions as Theorem~\ref{thm:case_upper}, consider an interaction set $S \subseteq \{1,\ldots,K\}$ with $|S|\ge 2$.
Let $(a_{\boldsymbol{t}})_{\boldsymbol{t}\in\mathcal{T}}$  be a constant vector reflecting the combinatorial structure of $S$~($a_{\boldsymbol{t}}\in\{-1,0,1\}$). 
The estimation error $\epsilon_{\mathrm{CAIE}}(S)$ is bounded by:

\begin{equation*}\label{eq:caie_upper_bound}
\epsilon_{\mathrm{CAIE}}(S)
\;\le\;
\left(\sum_{\boldsymbol{t}\in\mathcal{T}} a_{\boldsymbol{t}}^2\right)
\left\{
\sum_{\boldsymbol{t}\in\mathcal{T}}
\frac{1}{p(\boldsymbol{t})}\,\epsilon_{\mathrm{F}}^{(\boldsymbol{t})}
\;+\;
\frac{2^{K+1}}{\eta}\,B_{\phi}\,(2^{K}-1)\,\mathcal{L}_{\phi}
\right\}.
\end{equation*}
\end{thm}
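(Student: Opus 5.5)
The plan follows the proof of Theorem~\ref{thm:case_upper}; the only new ingredient is handling a signed sum of $2^{|S|}$ terms rather than a difference of two. Set $a_{\boldsymbol{t}} = (-1)^{|S|-|Q|}$ if $\boldsymbol{t} = \boldsymbol{t}_{(+Q)}$ for some $Q\subseteq S$, and $a_{\boldsymbol{t}}=0$ otherwise. Then
\[
\hat\tau_{\mathrm{CAIE}}(S,\boldsymbol{x})-\tau_{\mathrm{CAIE}}(S,\boldsymbol{x})=\sum_{\boldsymbol{t}\in\mathcal{T}} a_{\boldsymbol{t}}\bigl(\hat\mu(\boldsymbol{x},\boldsymbol{t})-\mu(\boldsymbol{x},\boldsymbol{t})\bigr).
\]
Cauchy--Schwarz gives the pointwise bound
\[
\Bigl(\sum_{\boldsymbol{t}} a_{\boldsymbol{t}}\delta_{\boldsymbol{t}}\Bigr)^2\le \Bigl(\sum_{\boldsymbol{t}} a_{\boldsymbol{t}}^2\Bigr)\sum_{\boldsymbol{t}}\delta_{\boldsymbol{t}}^2,
\qquad \delta_{\boldsymbol{t}}=\hat\mu(\boldsymbol{x},\boldsymbol{t})-\mu(\boldsymbol{x},\boldsymbol{t}).
\]
This is the analogue of the factor $2$ in the CASE proof, and it produces the prefactor $\sum_{\boldsymbol{t}} a_{\boldsymbol{t}}^2$. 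Summing over all $\boldsymbol{t}\in\mathcal{T}$ rather than only $\{\boldsymbol{t}: a_{\boldsymbol{t}}\neq 0\}$ only enlarges the right-hand side, and it matches the form of the statement.

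Integrate against $p(\boldsymbol{x})$ and decompose $p(\boldsymbol{x})=\sum_{\boldsymbol{t}'}p(\boldsymbol{x}\mid\boldsymbol{t}')p(\boldsymbol{t}')$. Using $\delta_{\boldsymbol{t}}^2\le l(\boldsymbol{x},\boldsymbol{t})$, which follows from the law of total variance as in Theorem~\ref{thm:case_upper}, each summand satisfies
\[
\int \delta_{\boldsymbol{t}}^2\,p(\boldsymbol{x})\,d\boldsymbol{x}\le \epsilon_{\mathrm{F}}^{(\boldsymbol{t})}+\epsilon_{\mathrm{CF}}^{(\boldsymbol{t})}.
\]
Lemma~\ref{lem:cf_bound} bounds the right-hand side by
\[
\frac{1}{p(\boldsymbol{t})}\epsilon_{\mathrm{F}}^{(\boldsymbol{t})}+B_\phi\sum_{\boldsymbol{t}'\neq\boldsymbol{t}}W_1(\boldsymbol{R}_{\boldsymbol{t}},\boldsymbol{R}_{\boldsymbol{t}'}).
\]
Summing over $\boldsymbol{t}$ yields $\sum_{\boldsymbol{t}}\frac{1}{p(\boldsymbol{t})}\epsilon_{\mathrm{F}}^{(\boldsymbol{t})}$ plus $B_\phi$ times the sum of $W_1$ over ordered pairs, which equals $2B_\phi\sum_{\{\boldsymbol{t},\boldsymbol{t}'\}}W_1(\boldsymbol{R}_{\boldsymbol{t}},\boldsymbol{R}_{\boldsymbol{t}'})$.

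Next I would apply Lemma~\ref{lem:pairwise_to_barycenter_bound} with $p=1$:
\[
\sum_{\{\boldsymbol{t},\boldsymbol{t}'\}}W_1(\boldsymbol{R}_{\boldsymbol{t}},\boldsymbol{R}_{\boldsymbol{t}'})\le(2^K-1)\sum_{\boldsymbol{t}}W_1(\boldsymbol{R}_b^\ast,\boldsymbol{R}_{\boldsymbol{t}}).
\]
Lemma~\ref{lem:w1_upper_by_fgw} then gives $W_1(\boldsymbol{R}_{\boldsymbol{t}},\boldsymbol{R}_b^\ast)\le \frac{1}{\eta}F(\boldsymbol{R}_{\boldsymbol{t}},\boldsymbol{R}_b^\ast)$. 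With uniform weights $w_{\boldsymbol{t}}=2^{-K}$ we have $\sum_{\boldsymbol{t}}F(\boldsymbol{R}_{\boldsymbol{t}},\boldsymbol{R}_b^\ast)=2^K\mathcal{L}_\phi$. Chaining these, the discrepancy part becomes
\[
2B_\phi(2^K-1)\frac{2^K}{\eta}\mathcal{L}_\phi=\frac{2^{K+1}}{\eta}B_\phi(2^K-1)\mathcal{L}_\phi,
\]
which matches the statement exactly.

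The step needing most care is the bookkeeping in the middle stage. One must check that the ordered-pair sum arising from Lemma~\ref{lem:cf_bound} is exactly twice the unordered sum in Lemma~\ref{lem:pairwise_to_barycenter_bound}, so the constant $2^{K+1}(2^K-1)$ comes out without slack. One must also check that the measure decomposition $p(\boldsymbol{x})=\sum_{\boldsymbol{t}'}p(\boldsymbol{x}\mid\boldsymbol{t}')p(\boldsymbol{t}')$ splits each term exactly into the factual part ($\boldsymbol{t}'=\boldsymbol{t}$) and the counterfactual part ($\boldsymbol{t}'\neq\boldsymbol{t}$). The Cauchy--Schwarz step and the variance inequality are routine.
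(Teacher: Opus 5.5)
Your proposal is correct and follows the paper's proof essentially step for step. The chain is the same: Cauchy--Schwarz to extract $\sum_{\boldsymbol{t}} a_{\boldsymbol{t}}^2$, the factual/counterfactual split of $p(\boldsymbol{x})$ with $\delta_{\boldsymbol{t}}^2\le l(\boldsymbol{x},\boldsymbol{t})$, Lemma~\ref{lem:cf_bound}, the ordered-to-unordered pair identity, Lemma~\ref{lem:pairwise_to_barycenter_bound} with $p=1$, Lemma~\ref{lem:w1_upper_by_fgw}, and $\sum_{\boldsymbol{t}}F(\boldsymbol{R}_{\boldsymbol{t}},\boldsymbol{R}_b^\ast)=2^K\mathcal{L}_\phi$ under uniform weights, with the explicit choice $a_{\boldsymbol{t}}=(-1)^{|S|-|Q|}$ on $\{\boldsymbol{t}_{(+Q)}:Q\subseteq S\}$ (zero elsewhere) being exactly what the paper leaves implicit.
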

\begin{proof}
\begin{align*}
        \epsilon_{\mathrm{CAIE}}(S)
        &= \int_{\mathcal{X}} \left\{\sum_{\boldsymbol{t}} a_{\boldsymbol{t}} \bigl( \hat{\mu}(\boldsymbol{x}, \boldsymbol{t}) - \mu(\boldsymbol{x}, \boldsymbol{t}) \bigr) \right\}^2 p(\boldsymbol{x})\, d\boldsymbol{x} \qquad \text{(by Definition in Equation~\eqref{eq:def_caie_error})}\\
        &\le  \left( \sum_{\boldsymbol{t}} a_{\boldsymbol{t}}^2 \right) \sum_{\boldsymbol{t}} \int_{\mathcal{X}} \bigl(\hat{\mu}(\boldsymbol{x}, \boldsymbol{t}) - \mu(\boldsymbol{x}, \boldsymbol{t}) \bigr)^2 p(\boldsymbol{x})\, d\boldsymbol{x} \qquad \text{(by Cauchy--Schwarz)}  \\
        &=  \left(\sum_{\boldsymbol{t}} a_{\boldsymbol{t}}^2 \right)  \sum_{\boldsymbol{t}}\sum_{\boldsymbol{t}^{\prime}} \int_{\mathcal{X}} \bigl(\hat{\mu}(\boldsymbol{x}, \boldsymbol{t}) - \mu(\boldsymbol{x}, \boldsymbol{t}) \bigr)^2\, p(\boldsymbol{x} \mid \boldsymbol{t}^{\prime})\, p(\boldsymbol{t}^{\prime})\, d\boldsymbol{x}  \\
         &=  \left( \sum_{\boldsymbol{t}} a_{\boldsymbol{t}}^2 \right) \sum_{\boldsymbol{t}} \left\{ \int_{\mathcal{X}} \bigl(\hat{\mu}(\boldsymbol{x}, \boldsymbol{t}) - \mu(\boldsymbol{x}, \boldsymbol{t}) \bigr)^2\, p(\boldsymbol{x} \mid \boldsymbol{t})\, p(\boldsymbol{t})\, d\boldsymbol{x} +
        \sum_{\boldsymbol{t}^{\prime} \neq \boldsymbol{t}} \int_{\mathcal{X}} \bigl(\hat{\mu}(\boldsymbol{x}, \boldsymbol{t}) - \mu(\boldsymbol{x}, \boldsymbol{t}) \bigr)^2\, p(\boldsymbol{x} \mid \boldsymbol{t}^{\prime})\, p(\boldsymbol{t}^{\prime})\, d\boldsymbol{x} \right\} \\
        &\le  \left( \sum_{\boldsymbol{t}} a_{\boldsymbol{t}}^2 \right) \sum_{\boldsymbol{t}} \left\{ \int_{\mathcal{X}} l(\boldsymbol{x}, \boldsymbol{t})\, p(\boldsymbol{x} \mid \boldsymbol{t})\, p(\boldsymbol{t})\, d\boldsymbol{x} +
        \sum_{\boldsymbol{t}^{\prime} \neq \boldsymbol{t}} \int_{\mathcal{X}} l(\boldsymbol{x},\boldsymbol{t})\, p(\boldsymbol{x} \mid \boldsymbol{t}^{\prime})\, p(\boldsymbol{t}^{\prime})\, d\boldsymbol{x} \right\} \\
        &= \left( \sum_{\boldsymbol{t}} a_{\boldsymbol{t}}^2 \right) \sum_{\boldsymbol{t}}\left( \epsilon_{\mathrm{F}}^{(\boldsymbol{t})} + \epsilon_{\mathrm{CF}}^{(\boldsymbol{t})} \right)  \qquad \text{(by Definition in Equations~\eqref{app_eq:factual_loss} and \eqref{app_eq:counterfactual_loss})}\\
        &\le \left(\sum_{\boldsymbol{t}} a_{\boldsymbol{t}}^2 \right) \sum_{\boldsymbol{t}}\left\{\epsilon_{\mathrm{F}}^{(\boldsymbol{t})} + \left( \frac{1}{p(\boldsymbol{t})} - 1 \right) \epsilon_{\mathrm{F}}^{(\boldsymbol{t})}
         + {B}_{\phi}\,  \sum_{ \boldsymbol{t^{\prime}} \neq \boldsymbol{t}}  W_1(\boldsymbol{R}_{\boldsymbol{t}},\boldsymbol{R}_{\boldsymbol{\boldsymbol{t}^{\prime}}})  \right\} \\
\end{align*}

Note that $\sum_{\boldsymbol{t}}\sum_{\boldsymbol{t}'\neq \boldsymbol{t}} W_1(\boldsymbol{R}_{\boldsymbol{t}},\boldsymbol{R}_{\boldsymbol{t}^{\prime}})$
counts each unordered pair $\{\boldsymbol{t},\boldsymbol{t}^{\prime}\}$ twice.
Hence,
\[
\sum_{\boldsymbol{t}}\sum_{\boldsymbol{t}^{\prime} \neq \boldsymbol{t}} W_1(\boldsymbol{R}_{\boldsymbol{t}},\boldsymbol{R}_{\boldsymbol{t}^{\prime}})
= 2 \sum_{\{\boldsymbol{t},\boldsymbol{t}^{\prime}\}\in{\binom{\mathcal{T}}{2}}} W_1(\boldsymbol{R}_{\boldsymbol{t}},\boldsymbol{R}_{\boldsymbol{t}^{\prime}}).
\]

By substituting this identity into the previous inequality, we have
\begin{align*}
\epsilon_{\mathrm{CAIE}}(S)
     &= \left( \sum_{\boldsymbol{t}} a_{\boldsymbol{t}}^2 \right)\left\{ \sum_{\boldsymbol{t}} \frac{1}{p(\boldsymbol{t})}\,\epsilon_{\mathrm{F}}^{(\boldsymbol{t})} + 2B_{\phi} \sum_{\{\boldsymbol{t},\boldsymbol{t}^{\prime}\}\in{\binom{\mathcal{T}}{2}}} W_1(\boldsymbol{R}_{\boldsymbol{t}},\boldsymbol{R}_{\boldsymbol{t}^{\prime}})  \right\} \\
        &\le \left( \sum_{\boldsymbol{t}} a_{\boldsymbol{t}}^2 \right)\left\{ \sum_{\boldsymbol{t}} \frac{1}{p(\boldsymbol{t})}\,\epsilon_{\mathrm{F}}^{(\boldsymbol{t})} 
        + 2B_{\phi}(2^{K}-1)\sum_{\boldsymbol{t}} W_1(\boldsymbol{R}_{\boldsymbol{t}},\boldsymbol{R}_b^\ast)  \right\} \quad \text{(by Lemma~\ref{lem:pairwise_to_barycenter_bound})}\\
        &\le \left( \sum_{\boldsymbol{t}} a_{\boldsymbol{t}}^2 \right)\left\{ \sum_{\boldsymbol{t}} \frac{1}{p(\boldsymbol{t})}\,\epsilon_{\mathrm{F}}^{(\boldsymbol{t})} 
        + \frac{2}{\eta}B_{\phi}(2^{K}-1)\sum_{\boldsymbol{t}} F(\boldsymbol{R}_{\boldsymbol{t}},\boldsymbol{R}_b^\ast)  \right\} \quad \text{(by Lemma~\ref{lem:w1_upper_by_fgw})}\\
        &= \left( \sum_{\boldsymbol{t}} a_{\boldsymbol{t}}^2 \right) \left\{ \sum_{\boldsymbol{t}} \frac{1}{p(\boldsymbol{t})}\, \epsilon_{\mathrm{F}}^{(\boldsymbol{t})} 
         + \frac{2^{K+1}}{\eta}\,{B}_{\phi}\, (2^{K}-1)\, \mathcal{L}_{\phi}  \right\}.
\end{align*}

\end{proof}

Although the two theorems could be subsumed under a general linear-contrast bound, presenting them separately clarifies how BFG-WB controls the estimation errors of both single and interaction treatment effects. 
Together, these bounds provide a principled decomposition that motivates the training objective in Equation~\eqref{eq:total_loss} by linking factual prediction errors and the BFG-WB discrepancy to the estimation errors of both estimands.

It is important to note the connection between these theoretical bounds and our empirical design. 
Assumptions 4 and 5 are idealized conditions; they follow the standard setting adopted in representation-based causal inference~\cite{shalit2017estimating,wang2025proximity}. 
Assumption 5 concerns the smoothness of the loss over the representation space, and is more plausible when nearby representations correspond to units with similar outcomes. 
By penalizing distortions of within-distribution neighborhoods, the proposed BFG-WB objective is designed to encourage representations of this kind, which we examine empirically in Appendix D.4

\section{Data Generating Process}
\label{appendix:dgp}

We describe the data-generating process~(DGP) for the simulation datasets used in Section~\ref{main:simulation} to evaluate the ability of the models to estimate CASE and CAIE accurately and stably under selection bias.
We generate two datasets under covariate-dependent treatment assignment and vary only whether the outcome function contains interaction effects. 
Simulation 1 includes interaction effects among treatments in the outcome generation, which serves as the primary benchmark for CASE and CAIE estimation.
Simulation 2 excludes interaction effects to assess the model's robustness and its ability to avoid inducing false interaction effects.

In two scenarios, the simulation data follow the same functional form for covariate generation, treatment assignment, and outcome generation.
The number of treatments $K$ is fixed at 3, and the sample size $N$ is set to 50,000.
We introduce the indicator variables $\boldsymbol{H}$ to reflect realistic conditions where treatment assignment depends on covariates~\citep{ascarza2018retention,djulbegovic2014physicians}.
\begin{gather*}
x_{i,n}^{(j)} \sim N(c_n^{(j)}, 1^2), \quad 
x_{i,u}^{(j)} \sim U(-1,1), \quad 
\boldsymbol{x}_{i} = (\boldsymbol{x}_{i,n}, \boldsymbol{x}_{i,u}),\\
H_{i}^{(v)} = \mathbb{I}\left(\,x_{i}^{(v)} + x_{i}^{(v+1)} > 1\,\right), \quad v=1,2,3,\\
\boldsymbol{H}_{i} = (H_{i}^{(1)}, H_{i}^{(2)}, H_{i}^{(3)}),\\
t_{i}^{(k)} \sim \mathrm{Bern} \left( \sigma \left( \boldsymbol{w}_{t_k}^\top \boldsymbol{x}_{i}+ \boldsymbol{w}_{H,k}^{\top}\boldsymbol{H}_{i} - 1 \right) \right), \quad k=1,2,3,\\
\boldsymbol{t_i} = (t_{i}^{(1)},t_{i}^{(2)},t_{i}^{(3)}),\\
Y_i \sim \mathcal{N}\!\left(f\!\left(\boldsymbol{x}_{i},\boldsymbol{t}_i,l\right),\,1^2\right),\quad l \in \{0,1\},
\end{gather*}
where $j \in \{1,\dots,15\}$ and $c_n^{(j)}$ are drawn from the uniform distribution $U(-1,1)$. 
$\mathbb{I}(\cdot)$ is the indicator function, $\sigma(\cdot)$ is the sigmoid function defined as $\sigma(x) = 1 / (1 + \exp(-x))$.
$\boldsymbol{w}_{t_k}$ and $\boldsymbol{w}_{H,k}$ are the weight vectors whose elements are independently drawn from the uniform distribution $U(-0.5,0.5)$. 
The outcome-generating function $f$ is defined as follows:
\begin{align*}
f(\boldsymbol{x}_{i}, \boldsymbol{t}_i, l) 
&= \boldsymbol{w}_x^\top \boldsymbol{x}_{i}
 + (x_{i}^{(1)}+1)t_{i}^{(1)} 
 + 1.2(x_{i}^{(2)}+1)t_{i}^{(2)} \\
&\quad + 0.8(x_{i}^{(3)}+1)t_{i}^{(3)} 
+ l\Big\{ (x_{i}^{(4)}+0.5)t_{i}^{(1)}t_{i}^{(2)} \\
&\quad -0.5(x_{i}^{(5)}-0.1)t_{i}^{(1)}t_{i}^{(3)}
+0.3(x_{i}^{(6)}+0.1)t_{i}^{(2)}t_{i}^{(3)} \\
&\quad +0.7(x_{i}^{(7)}+1)t_{i}^{(1)}t_{i}^{(2)}t_{i}^{(3)} \Big\} + 2,
\end{align*}
where vector $\boldsymbol{w}_{x}$ is the weight vector whose elements are independently drawn from the uniform distribution $U(-1,1)$, and the parameter $l \in \{0,1\}$ determines whether interaction effect terms are included ~($l=1$) or excluded ~($l=0$).
We define the case where $l=1$ as the first scenario and the case where $l=0$ as the second scenario.

\section{Implementation Details}
\label{appendix:implement_details}

This section provides the implementation details necessary to reproduce the simulation results reported in Section~\ref{main:simulation}.
We first specify the network architectures of the proposed CIHSI-Net, including representation learning, task embedding, and outcome prediction subnetworks.
Subsequently, we outline the optimization protocols, hyperparameter settings, and the implementation configurations for the baseline methods.

\paragraph{CIHSI-Net Architecture and Training.}
All neural networks within CIHSI-Net are constructed using fully connected~(FC) layers~\citep{lecun2015deep} with Leaky ReLU activation functions~\citep{xu2020reluplex}.
The specific architectures are as follows.
\begin{itemize}
\item \textbf{Representation Learning Network $\phi$}: Consists of two hidden layers with 100 units each.
The output dimension is set to 64.
Crucially, we apply batch normalization with a fixed scale to the output of the representation learning network because the BFG-WB regularization is sensitive to feature scaling, similar to the balancing regularization in \citet{shalit2017estimating}.
\item \textbf{Task Embedding Network $t_w$}: Consists of two hidden layers with 100 units each, outputting a five-dimensional embedding vector.
\item \textbf{Outcome Prediction Network $h$}: Consists of two hidden layers with 100 units each.
It takes the concatenated output of $\phi$ and $t_w$ as input.
\end{itemize}

During training, the Wasserstein barycenter is estimated for each mini-batch as a free-support $W_2^2$ barycenter with 16 uniformly weighted support points, using the treatment-pattern-specific empirical representation distributions. 
In all experiments, the weights $\lambda_{\boldsymbol{t}}$ in Equation~\eqref{eq:def_barycenter} are set uniformly across the treatment patterns included in the barycenter computation.
The BFG-WB weights are also set uniformly as $w_{\boldsymbol{t}}=2^{-K}$ for all treatment patterns.
We compute the barycenter through iterative updates with a maximum of 100 iterations.
The optimization procedure of CIHSI-Net is encapsulated in Algorithm~\ref{alg:cihsi_net_bfgwb}.

\begin{algorithm}[t]
\caption{The computation workflow of CIHSI-Net with BFG-WB}
\label{alg:cihsi_net_bfgwb}

\KwIn{
observed data
$\mathcal{D}
= \{(\boldsymbol{x}_i,\boldsymbol{t}_i,y_i)\}_{i=1}^{N}$;
representation mapping $\phi$;
task embedding mapping $t_w$;
outcome mapping $h$.
}

\KwParam{
$\alpha$: strength of BFG-WB regularization;
$\eta$: FGW trade-off parameter;
$\beta$: strength of $L_2$ regularization;
$\{\lambda_{\boldsymbol{t}}\}_{\boldsymbol{t}\in\{0,1\}^{K}}$:
Wasserstein barycenter weights;
$\{w_{\boldsymbol{t}}\}_{\boldsymbol{t}\in\{0,1\}^{K}}$:
BFG-WB weights.
}

\KwOut{
$\mathcal{L}$: the learning objective of CIHSI-Net.
}

\Indentp{2em}
\SetNlSkip{-1.5em}

$\boldsymbol{R}_i
\leftarrow
\phi(\boldsymbol{x}_i)$\;

$\boldsymbol{R}_{\boldsymbol{t}}
\leftarrow
\{\boldsymbol{R}_i
\mid
\boldsymbol{t}_i=\boldsymbol{t}\}$
for each
$\boldsymbol{t}\in\{0,1\}^{K}$\;

$\boldsymbol{R}_b^{*}
\leftarrow
\arg\min_{\boldsymbol{R}_{b}}
\sum_{\boldsymbol{t}\in\mathcal{T}}
\lambda_{\boldsymbol{t}}
W^{2}_{2}
(\boldsymbol{R}_{\boldsymbol{t}},\boldsymbol{R}_{b})$\;

$\mathcal{L}_{\phi}
\leftarrow
\sum_{\boldsymbol{t}\in\mathcal{T}}
w_{\boldsymbol{t}}
F(\boldsymbol{R}_{\boldsymbol{t}},\boldsymbol{R}_b^{*})$\;

$\widehat{p}(\boldsymbol{t})
\leftarrow
\frac{1}{N}
\sum_{i=1}^{N}
\mathbb{I}[\boldsymbol{t}_i=\boldsymbol{t}]$
for each
$\boldsymbol{t}\in\{0,1\}^{K}$\;

Calculate $\mathcal{L}_{y}$ following
\eqref{eq:weighted_loss}
with $\widehat{p}(\boldsymbol{t})$
and outcome mapping $h$\;

$\mathcal{L}
\leftarrow
\mathcal{L}_y
+
\alpha\mathcal{L}_{\phi}
+
\beta\lVert\boldsymbol{w}\rVert_2^2$\;

\end{algorithm}

\begin{table}[t]
\centering
\begin{tabular}{lcc}
\toprule
Hyperparameter & Candidate values & Selected value \\
\midrule
BFG-WB coefficient $\alpha$
& $\{0.1,0.5,1.0,1.5,2.0,5.0\}$
& $1.0$ \\
FGW trade-off $\eta$
& $\{10^{-5},0.1,0.2,\ldots,0.9,1.0\}$
& $0.6$ \\
Number of units per FC hidden layer
&  $\{20, 50, 100\}$
& $100$ \\
\bottomrule
\end{tabular}
\caption{Hyperparameter search ranges and selected values.}
\label{tab:hyperparameter_search}
\end{table}

All models are trained with Adam optimizer~\citep{kingma2014adam} for a maximum of 300 epochs, with an early stopping patience set to 15 epochs.
Unless otherwise specified, we used a learning rate of $10^{-4}$, a batch size of 256, and an L2 regularization of $10^{-5}$.
Data splitting is performed by random shuffling, allocating 60\% for training, 10\% for validation, and 30\% for testing.
Table~\ref{tab:hyperparameter_search} summarizes the candidate values considered for each hyperparameter.
For CIHSI-Net, the final configuration are selected as the one that achieved the lowest validation loss on Simulation~1.

\paragraph{Baseline Methods Selection and Implementation.}
To enable a comprehensive comparison, we implemented the following three baseline methods.
These methods cover complementary modeling paradigms: TECE-VAE adopts a latent-variable generative approach, NCoRE explicitly models relations among treatment combinations, and CISI-Net employs pairwise Wasserstein balancing across treatment patterns.
In particular, CISI-Net is the closest baseline to CIHSI-Net and enables us to directly evaluate the contribution of replacing pairwise balancing with the proposed barycenter-based BFG-WB objective.
For each model, we set the hyperparameters as follows.
\begin{itemize}
\item \textbf{TECE-VAE: } The latent dimension is set to 25, and the task embedding network has three hidden layers with 200 units and ELU activation~\citep{clevert2015fast}.
The task embedding network produces a five-dimensional embedding vector.

\item \textbf{NCoRE: } Each interaction subnetwork is implemented as two FC layers with 200 units per layer and ReLU activation~\citep{nair2010rectified}.
The representation learning network has two hidden FC layers with 200 units and  outputs a 100-dimensional vector.

\item \textbf{CISI-Net: } CISI-Net has three neural networks, all of which are built using FC layers and leaky ReLU activation. 
Each network consists of two hidden layers with 100 units each.
The task embedding network outputs a five-dimensional embedding vector.
The balancing penalty coefficient $\alpha$ is set to 0.1, and the Integral Probability Metrics~(IPM) used in the penalty is the 1-Wasserstein distance~\citep{sriperumbudur2010non}.
\end{itemize}
All baseline models are tuned and trained under conditions comparable to CIHSI-Net to ensure a fair evaluation.

We excluded other methods to ensure a focused and fair comparison for the following reasons:
\begin{itemize}
    \item Some methods have been empirically shown to underperform the baselines adopted in this study. In particular, methods originally developed for single-treatment settings, such as Treatment-Agnostic Representation Network~(TARNet)~\citep{shalit2017estimating} and Counterfactual Regression~(CFR)~\citep{shalit2017estimating}, have been shown in previous experiments to be outperformed by methods specifically designed for multiple simultaneous treatments~\citep{murakami2025multipletreatmentscausaleffects,saini2019multiple}.

    \item Some methods rely on additional assumptions that are not adopted in our problem setting, such as the sequential ignorability assumption used in Single-cause Perturbation~\citep{qian2021estimating}.

    \item Some methods target different estimands or data structures, such as H-learner, which considers multiple outcomes~\citep{Chauhan_2025}.

    \item Some methods share modeling paradigms and limitations similar to those already represented by the selected baselines, and therefore provide limited additional insight; Variational Sample Re-weighting~\citep{zou2020counterfactual}, for example, represents a latent-variable approach similar to TECE-VAE.

    \item Some methods employ architectures closely related to those of the selected baselines but are not applicable to binary treatments. For example, the Dosage Combination Network~\citep{schweisthal2023reliable} is architecturally similar to CISI-Net but is specifically designed for combinations of continuous treatment dosages and therefore cannot be directly applied to the binary treatment setting considered here.
\end{itemize}
Furthermore, conventional inverse probability weighting based and doubly robust estimators, are not included because previous empirical comparisons have shown them to be outperformed by TARNet or CFR~\citep{shalit2017estimating}.

\paragraph{Hardware and Software Environments}
All experiments are conducted in a single-GPU computing environment running Ubuntu 22.04.5 LTS on an x86\_64 architecture. 
The environment are equipped with one NVIDIA L4 GPU with 23\,GB of GPU memory, 12 virtual CPUs corresponding to an Intel Xeon processor operating at 2.20\,GHz, and 52\,GB of system memory. 
CUDA compatibility version is 13.0.
All methods are implemented in Python 3.12, and the neural network models are implemented primarily using TensorFlow 2.20.0. 

\section{Additional Simulation Results}
\subsection{Comparison of Computing Speed}
\label{appendix:exp_computing_speed}
\begin{table}[tb]
  \centering
  \renewcommand{\arraystretch}{1.1}  
  \begin{tabular}{lcccccc}
    \toprule
    Method & $K=2$ & $K=3$ & $K=4$ & $K=5$ & $K=6$ & $K=8$ \\
    \midrule
    CISI-Net  & 5.98 & 15.08 & 31.09 & 138.54 &  125.62 & 1529.83 \\
    CIHSI-Net & 8.21 &  9.02 & 13.92 &  22.71 &  23.01 &  48.32 \\
    \bottomrule
  \end{tabular}
  \caption{
  Wall-clock training time per epoch~(s/epoch) for CISI-Net~(pairwise balancing) and CIHSI-Net~(BFG-WB), evaluated with $K \in \{2,3,4,5,6,8\}$ treatments under the same hardware and training configuration. 
  For each $K$, both methods are trained with the same batch size, while the batch size is adjusted across different values of $K$.
  Times are averaged across all observed epochs in 100 trials with early stopping~(max 300 epochs, patience = 15).}
  \label{tab:compute_cost}
\end{table}
One advantage of implementing BFG-WB within CIHSI-Net is its lower computational cost relative to existing pairwise balancing methods.
OT-based discrepancy evaluation, as used in representation balancing, requires solving optimization problems. 
This requirement often introduces a significant bottleneck during model training~\citep{genevay2018learning,yamada2022approximating1wassersteindistancetrees}.
If the number of treatments is denoted as $K$, the total number of possible treatment patterns is $L = 2^K$.
The number of OT computations required for discrepancy evaluation in pairwise balancing increases quadratically with the number of treatment patterns, specifically following the combination formula ${2^K \choose 2}=O(L^2)$.
In contrast, BFG-WB reduces the required number of OT computations to $O(L)$ and demonstrates a linear increase in computation count relative to the number of treatment patterns.
To determine whether this theoretical reduction in computation translates into a measurable decrease in actual training time, we compare the required training time.

We compare computational costs using the CISI-Net and CIHSI-Net with simulation dataset 1~(detailed in Appendix~\ref{appendix:dgp}).
The number of treatments K is set to $K \in \{2,3,4,5,6,8\}$ to verify the scaling of training time as the number of treatment patterns increases to $2^K$.
For each setting, we conduct 100 independent trials using different random seeds for data generation.
Each trial follows the hyperparameters and training protocol described in Appendix~\ref{appendix:implement_details}.
To ensure that each mini-batch contains at least four samples from each treatment pattern, we adjust the batch size according to $K$: 256 for $K \in \{2,3,4,5\}$, 1024 for $K=6$, and 2048 for $K=8$.
We compute the training time per epoch~(s/epoch) from the wall-clock measurements and compare the actual training time required for each method by averaging over all epochs across 100 trials.
Table~\ref{tab:compute_cost} summarizes the results.

Table~\ref{tab:compute_cost} demonstrates  that the computational advantage of CIHSI-Net becomes clear as the number of treatments increases.
For small treatment settings such as $K=2$, this advantage is limited, and CIHSI-Net is slightly slower than CISI-Net due to the overhead of computing the barycenter-based discrepancy.
However, from $K=3$ onward, CIHSI-Net consistently achieves shorter training times than CISI-Net.
The gap generally becomes more pronounced as $K$ increases.
For example, when $K=5$, CISI-Net requires 138.54~[s/epoch], whereas CIHSI-Net requires only 22.71~[s/epoch].
Although the wall-clock time of CISI-Net decreases slightly from $K=5$ to $K=6$ because the batch size increases from 256 to 1024, thereby reducing the number of mini-batch updates per epoch, CIHSI-Net remains substantially faster in both settings.
At $K=8$, this computational advantage becomes particularly pronounced: CISI-Net requires 1529.83~[s/epoch], whereas CIHSI-Net requires only 48.32~[s/epoch].
These results suggest that CIHSI-Net offers practical advantages in training time, particularly in settings with an increased number of treatments.

\subsection{Scalability to a Larger Number of Treatments}
\label{appendix:simulation_scalability}
We investigate the scalability of the proposed method by evaluating whether its estimation accuracy remains superior to that of existing multiple-treatment methods as the treatment dimension increases.
Multiple-treatment settings are practically relevant because many real-world studies involve the simultaneous evaluation of multiple treatments, which requires the estimation of numerous causal effects~\citep{parmar2008speeding, freidlin2008multi}.
Therefore, we compare the estimation accuracy of each method under different values of $K$.

In this scalability analysis, we set the number of treatments to $K \in \{2,4,6,8\}$.
This range is chosen to reflect realistic multiple-treatment settings, since empirical applications rarely involve more than ten simultaneous treatments~\citep{cho2022aging, freidlin2008multi, muralidharan2025factorial, Tsuboi31122024}.
Therefore, setting the maximum number of treatments to $K=8$ provides a practically meaningful evaluation range while still requiring the methods to handle a rapidly increasing number of treatment patterns.
Across all settings, we use a common functional form for treatment assignment and outcome generation, and fix the sample size at $N=50{,}000$.
The covariates, treatment assignments, and outcomes are generated according to the following data-generating process:
\begin{gather*}
x_{i,n}^{(j)} \sim N(c_n^{(j)}, 1^2), \quad 
x_{i,u}^{(j)} \sim U(-1,1), \quad 
\boldsymbol{x}_{i} = (\boldsymbol{x}_{i,n}, \boldsymbol{x}_{i,u}),\\
H_{i}^{(v)} = \mathbb{I}\left(\,x_{i}^{(v)} + x_{i}^{(v+1)} > 1\,\right), \quad 
\boldsymbol{H}_{i} = (H_{i}^{(1)}, H_{i}^{(2)}, H_{i}^{(3)}),\\
t_{i}^{(k)} \sim \mathrm{Bern} \left( \sigma \left( \boldsymbol{w}_{t_k}^\top \boldsymbol{x}_{i}- \boldsymbol{w}_{H}^{\top}\boldsymbol{H}_{i} \right) \right),\quad
\boldsymbol{t}_{i} = (t_{i}^{(1)},\dots,t_{i}^{(K)}),\\
Y_i \sim \mathcal{N}\!\left(g\!\left(\boldsymbol{x}_{i},\boldsymbol{t}_i\right),\,0.1^2\right),
\end{gather*}
where $j \in \{1,\dots,15\}$ denotes the covariate index, $v \in \{1,2,3\}$, and $k \in \{1,\dots,K\}$.
$\mathbb{I}(\cdot)$ is the indicator function, $\sigma(\cdot)$ is the sigmoid function defined as $\sigma(x) = 1 / (1 + \exp(-x))$.
The elements of $\boldsymbol{w}_{t_k}$ are independently drawn from $U(-0.5,0.5)$, and the elements of $\boldsymbol{w}_{H}$ are independently drawn from $U(-1,1)$.
The outcome generating function $g$ is defined as follows:
\begin{gather*}
g(\boldsymbol{x}_{i},\boldsymbol{t}_i)
=
\boldsymbol{w}_{x}^{\top}\boldsymbol{x}_{i}
+
\sum_{\substack{\emptyset \neq S \subseteq \{1,\dots,K\} \\ |S| \le K}}
\tau_S(\boldsymbol{x}_i)\prod_{k\in S} t_{i}^{(k)}
+2, \ \ 
\tau_S(\boldsymbol{x}_i) = w_{S}  \left(1 + x_i^{(j_{S})}\right),
\end{gather*}
where $j_{S} \in \{1,\dots,30\}$ denotes a covariate index randomly assigned to each subset $S$.
The vector $\boldsymbol{w}_{x}$ is a weight vector whose elements are independently drawn from the uniform distribution $U(-1,1)$, and each $w_S$ is independently drawn from $U(0.2,1.0)$ with a randomly assigned sign.
This setup induces treatment assignment bias through $\boldsymbol{H}_i$ and heterogeneous treatment effects through $\tau_S(\boldsymbol{x}_i)$.

All model hyperparameters are kept identical to those described in Appendix~\ref{appendix:implement_details}.
The only exception is the batch size, which is adjusted depending on $K$.
Specifically, to ensure that each batch contains at least four samples from each treatment pattern, we set the batch size to 256 for $K \in \{2,4\}$, 1024 for $K=6$, and 2048 for $K=8$.
This adjustment is necessary to obtain stable empirical estimates of the balancing regularization terms, because the number of treatment patterns increases as $2^K$.
We exclude TECE-VAE from this scalability analysis because it does not show competitive performance in the main simulation results reported in Section~\ref{main:simulation}.

\begin{figure}[tb]
  \centering
  \includegraphics[width=\linewidth, keepaspectratio]{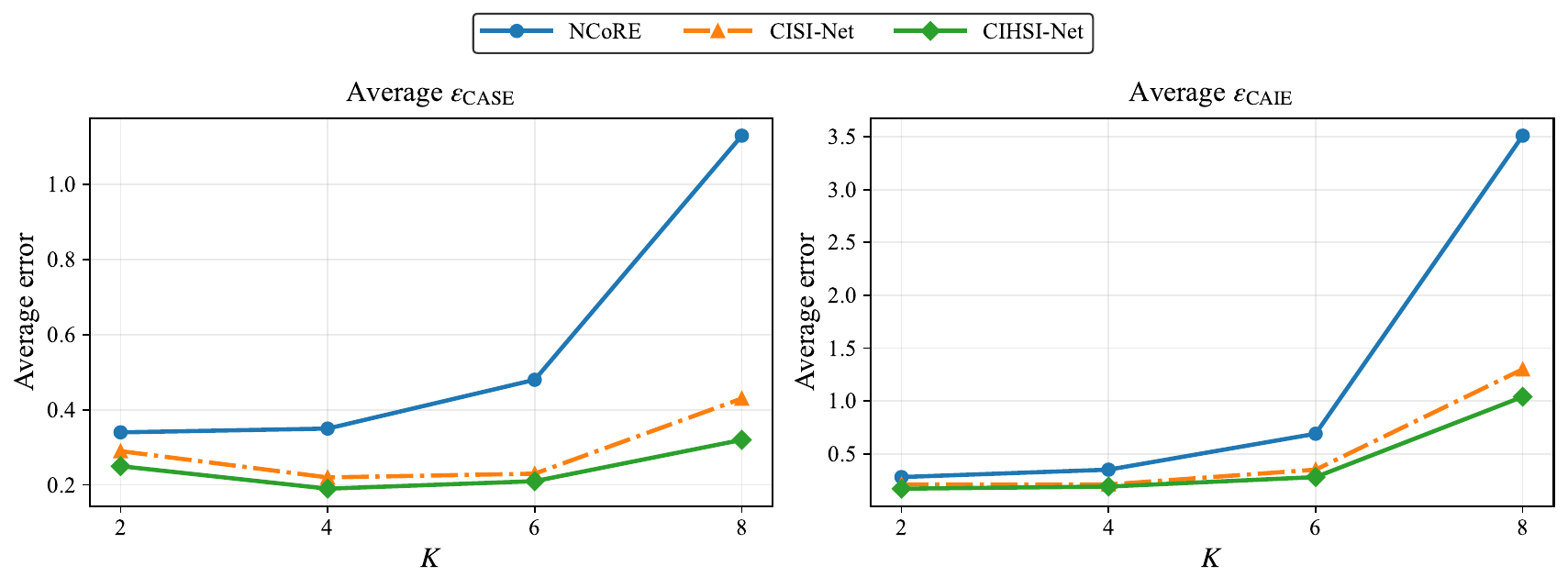}
  \caption{Average $\epsilon_{\mathrm{CASE}}$~(left) and $\epsilon_{\mathrm{CAIE}}$~(right) under varying numbers of treatments $K \in \{2,4,6,8\}$.}
  \label{fig:ase_aie_errors_treatmentNum}
\end{figure}

Figure~\ref{fig:ase_aie_errors_treatmentNum} shows the average $\epsilon_{\mathrm{CASE}}$ and $\epsilon_{\mathrm{CAIE}}$ over 50 trials across varying numbers of treatments $K \in \{2,4,6,8\}$.
CIHSI-Net consistently achieves the lowest estimation errors for both CASE and CAIE across all values of $K$, which demonstrates that the proposed method maintains high estimation accuracy even as $K$ increases.
Crucially, the performance gap between CIHSI-Net and CISI-Net becomes larger as $K$ increases, and CIHSI-Net achieves this higher accuracy with substantially shorter training time than CISI-Net, as shown in Table~\ref{tab:compute_cost}.
This tendency suggests that pairwise balancing becomes unstable as the number of treatment patterns grows, whereas the barycenter-based alignment in CIHSI-Net provides a more consistent global reference across treatment patterns.
These results demonstrate that CIHSI-Net improves both estimation accuracy and computational efficiency in settings with many treatment patterns, supporting its scalability for multiple-treatment causal inference.
\subsection{Detailed Results of Ablation Study}
\label{appendix:ablation_results_details}
In this section, we provide the detailed values for $\epsilon_{\mathrm{CASE}}$ and $\epsilon_{\mathrm{CAIE}}$.
Although the average results in Section~\ref{main:simulation} clarify the overall tendency, the detailed results further confirm whether the observed improvement is consistent across individual CASE and CAIE estimation errors.
Table~\ref{tab:simulation_ablation_detail} shows the detailed ablation results on simulation dataset 1.

The detailed results also show that the combination of the Wasserstein barycenter and the FGW discrepancy achieves the lowest estimation errors across all reported CASE and CAIE errors.
These results are consistent with the discussion in Section~\ref{main:simulation} and provide additional support for the design of BFG-WB.

\begin{table*}[tb]
    \centering
    \small
    \setlength{\tabcolsep}{6.8pt}
    \begin{tabular}{lccccc|cccc}
    \hline
    \multicolumn{3}{l}{} & \multicolumn{3}{c|}{${\epsilon_{\mathrm{CASE}}}$} & \multicolumn{4}{c}{${\epsilon_{\mathrm{CAIE}}}$} \\
    \cline{4-6}\cline{7-10}
    No. & WB & Penalty type & $k=1$ & $k=2$ & $k=3$ & $S=\{1,2\}$ & $S=\{2,3\}$ & $S=\{1,3\}$ & $S=\{1,2,3\}$ \\
    \hline
    1 & \xmarkgray & None & 0.28 $\pm$ 0.08 & 0.29 $\pm$ 0.04 & 0.25 $\pm$ 0.09 & 0.36 $\pm$ 0.03 & 0.12 $\pm$ 0.04 & 0.16 $\pm$ 0.03 & 0.31 $\pm$ 0.04 \\
    2 & \xmarkgray & W & 0.22 $\pm$ 0.04 & 0.24 $\pm$ 0.03 & 0.19 $\pm$ 0.06 & 0.31 $\pm$ 0.07 & 0.11 $\pm$ 0.04 & 0.15 $\pm$ 0.08 & 0.31 $\pm$ 0.08 \\
    3 & \xmarkgray & GW     & 0.25 $\pm$ 0.03 & 0.28 $\pm$ 0.04 & 0.23 $\pm$ 0.10 & 0.35 $\pm$ 0.03 & 0.12 $\pm$ 0.06 & 0.16 $\pm$ 0.04 & 0.32 $\pm$ 0.05 \\
    4 & \xmarkgray & FGW & 0.23 $\pm$ 0.04 & 0.25 $\pm$ 0.03 & 0.19 $\pm$ 0.05 & 0.33 $\pm$ 0.04 & 0.12 $\pm$ 0.06 & 0.16 $\pm$ 0.04 & 0.31 $\pm$ 0.04 \\
    \midrule
    5 & \cmark     & W & 0.22 $\pm$ 0.03 & 0.24 $\pm$ 0.03 & 0.19 $\pm$ 0.08 & 0.30 $\pm$ 0.04 & 0.10 $\pm$ 0.05 & 0.15 $\pm$ 0.09 & 0.28 $\pm$ 0.07 \\
    6 & \cmark     & GW & 0.22 $\pm$ 0.04 & 0.24 $\pm$ 0.03 & 0.19 $\pm$ 0.08 & 0.32 $\pm$ 0.06 & 0.10 $\pm$ 0.04 & 0.16 $\pm$ 0.05 & 0.29 $\pm$ 0.06 \\
    7 & \cmark     & FGW   & \textbf{0.19 $\pm$ 0.04} & \textbf{0.21 $\pm$ 0.04} & \textbf{0.18 $\pm$ 0.10} & \textbf{0.28 $\pm$ 0.05} & \textbf{0.08 $\pm$ 0.05} & \textbf{0.12 $\pm$ 0.04} & \textbf{0.24 $\pm$ 0.04} \\
    \hline
    \end{tabular}
    \caption{
    Ablation study result on simulation dataset 1.
    WB indicates whether the Wasserstein barycenter is used.
    Penalties: Wasserstein~(W), Gromov-Wasserstein~(GW), Fused GW~(FGW), and None.
    }
    \label{tab:simulation_ablation_detail}
\end{table*}
\subsection{Analysis of Representation Discrepancy}
\label{appendix:learned_rep}

We investigate the effectiveness of the proposed method in reducing distributional discrepancies across treatment patterns compared to existing approaches.
Minimizing these discrepancies serves two critical objectives essential for accurate causal inference: (i) reducing selection bias arising from treatment assignment imbalances, which corresponds to minimizing distributional discrepancy~(assessed via Wasserstein distance), and (ii) preserving local proximity structures to stabilize the estimation of heterogeneous causal effects, which corresponds to maintaining geometric consistency~(assessed via Gromov-Wasserstein distance).

To empirically verify these properties, we compute these two metrics~(the Wasserstein distance and the Gromov-Wasserstein distance) for every unique pair of treatment patterns on the dataset of Simulation 1 described in Section~\ref{main:simulation}.
We compare three methods: CISI-Net~(without balancing), CISI-Net~(with pairwise balancing), and our proposed CIHSI-Net~(with BFG-WB).

Figure~\ref{fig:balancing_rep} displays the boxplots of pairwise Wasserstein and Gromov-Wasserstein distances calculated across all treatment pairs.
As shown in the left panel~(Wasserstein distance), CIHSI-Net consistently achieves the lowest median and variance of discrepancies across representation distributions compared to the baselines.
Whereas standard CISI-Net reduces discrepancies relative to the non-balanced model, it fails to match the compactness achieved by CIHSI-Net.
This result suggests that simply extending pairwise balancing is insufficient for achieving comprehensive alignment in multiple-treatment settings.
The right panel~(Gromov-Wasserstein distance) further illustrates the advantage of our CIHSI-Net.
CIHSI-Net exhibits lower structural discrepancies than CISI-Net.
This result indicates that independent pairwise alignment can distort the local proximity structures within the representation space.
In contrast, by aligning all distributions toward a shared Wasserstein barycenter, CIHSI-Net effectively mitigates both distributional divergence and local structural inconsistency.
These results empirically demonstrate the superiority of CIHSI-Net in handling multiple treatments.

\begin{figure}[tb]
  \centering
  \includegraphics[width=\linewidth, keepaspectratio]{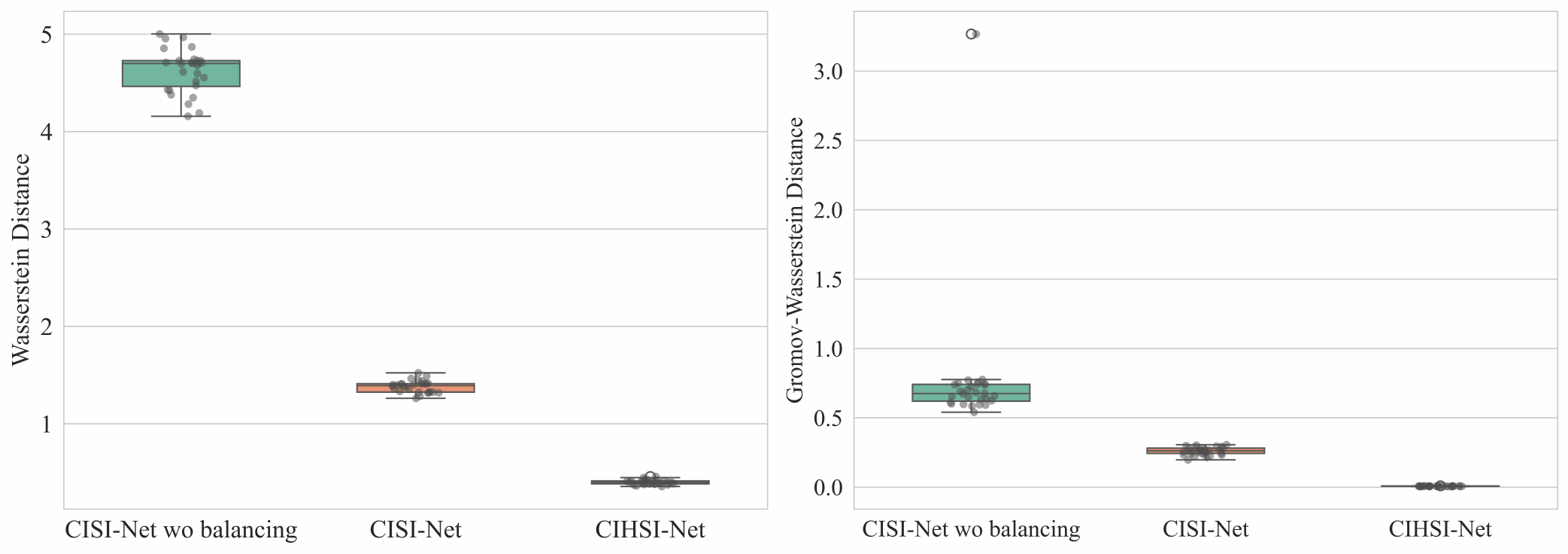}
  \caption{Distribution of pairwise Wasserstein~(left) and Gromov-Wasserstein distances~(right) between representation distributions for all unique treatment pairs. We compare CISI-Net~(without balancing), CISI-Net~(with pairwise balancing), and CIHSI-Net~(ours). Lower values indicate better alignment across treatment patterns.}
  \label{fig:balancing_rep}
\end{figure}

\subsection{Sensitivity Analysis for $\alpha$ and $\eta$}
\label{appendix:sensitivity_analysis}
To examine the sensitivity of CIHSI-Net to the key hyperparameters in BFG-WB, we conduct a sensitivity analysis over the regularization strength $\alpha$ and the FGW trade-off parameter $\eta$.
This analysis clarifies how the balance between global alignment and local structure preservation affects estimation accuracy, and assesses whether CIHSI-Net remains stable across a reasonable range of hyperparameter choices.
Varying $\alpha \in \{0.1,0.5,1.0,1.5,2.0,5.0\}$, we cover weak to strong regularization via BFG-WB, which examines the trade-off between representation balancing and outcome prediction accuracy.
Additionally, to compare the relative contributions of local structure preservation and feature-wise distribution alignment, we vary $ \eta \in \{10^{-5}, 0.1, 0.2, \ldots, 0.9, 1.0\}$. 
Here, larger $\eta$ places more weight on feature-wise Wasserstein alignment, whereas smaller $\eta$ emphasizes the Gromov-Wasserstein-based structural term.
We include $\eta=10^{-5}$ as a near-boundary configuration that approximates GW-only alignment while remaining within the theoretical domain $\eta\in(0,1]$.

Figure~\ref{fig:alpha_eta_anal} summarizes the sensitivity results over $\alpha$ and $\eta$, averaged over 100 trials.
\begin{figure*}[tb]
  \centering
  \includegraphics[width = \linewidth, keepaspectratio]{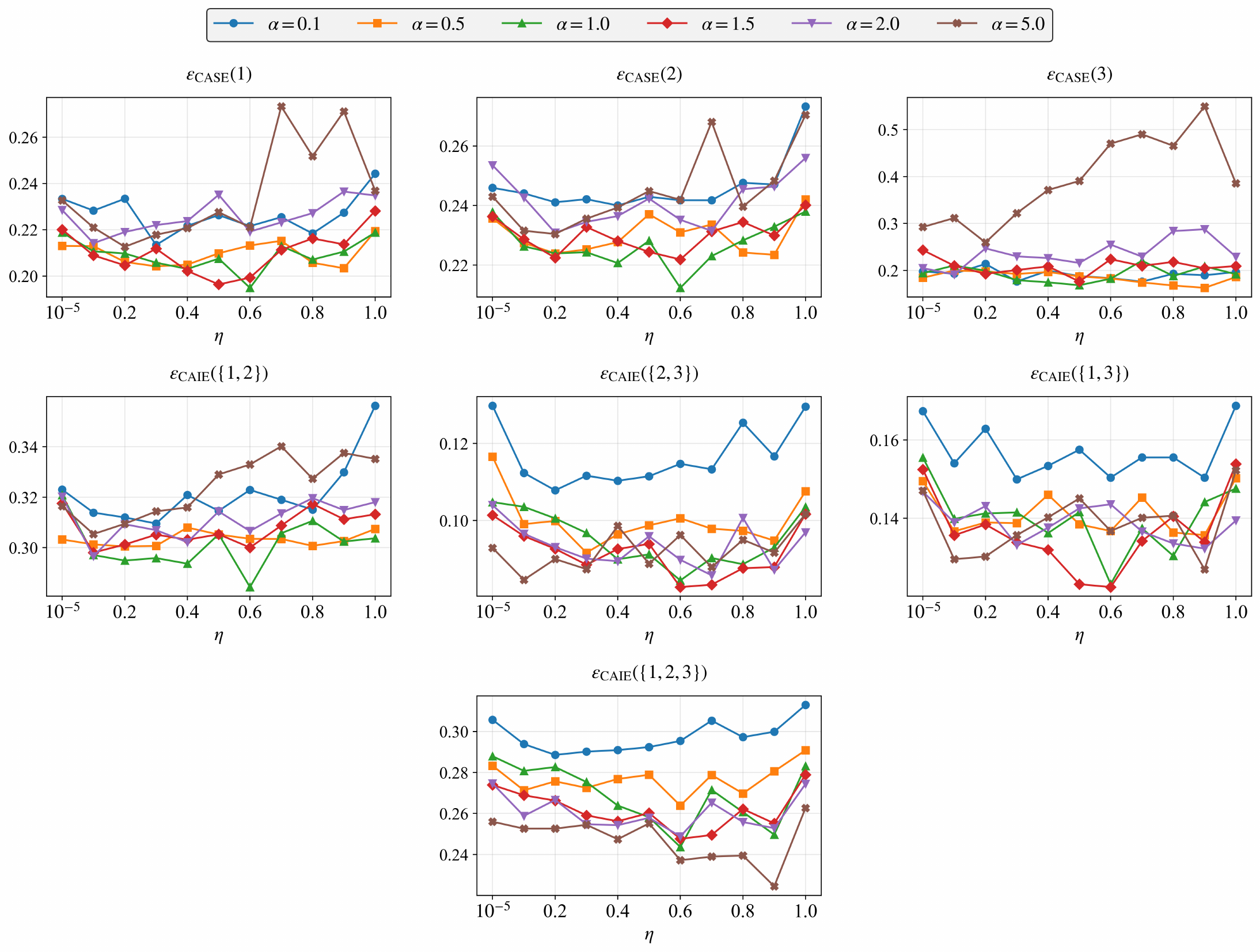}
  \caption{Sensitivity analysis of hyperparameters $\alpha$ and $\eta$ on simulation dataset 1.
  The top row shows $\epsilon_{\mathrm{CASE}}(1)$ - $\epsilon_{\mathrm{CASE}}(3)$, the middle row shows $\epsilon_{\mathrm{CAIE}}(\{1,2\})$ - $\epsilon_{\mathrm{CAIE}}(\{1,3\})$, and the bottom panel shows $\epsilon_{\mathrm{CAIE}}(\{1,2,3\})$.
  }
  \label{fig:alpha_eta_anal}
\end{figure*}
The results indicate that estimation performance depends on both $\alpha$ and $\eta$, which emphasizes the importance of appropriately selecting their combination.
For $0.5 < \eta < 0.8$, ${\epsilon_{\mathrm{CASE}}}$ and ${\epsilon_{\mathrm{CAIE}}}$ remain small, suggesting that a moderate emphasis on feature-based alignment while retaining structural preservation is beneficial for causal effect estimation.
At the near-boundary setting $\eta=10^{-5}$, the estimation errors remain competitive for some estimands but are not consistently minimized across metrics and values of $\alpha$. 
This pattern is consistent with the theoretical role of $\eta$: because the error bounds contain a factor of $1/\eta$, they become increasingly loose as $\eta$ approaches zero and cease to provide finite control at the GW-only boundary $\eta=0$.
With respect to $\alpha$, accurate estimation of both CASE and CAIE is achieved when $1.0 \le \alpha \le 1.5$.
Weak balancing regularization, such as $\alpha=0.1$, tends to result in insufficient correction of selection bias and larger estimation errors across all metrics.
Conversely, at $\alpha=5.0$, CASE estimation errors increase, while CAIE estimation errors are reduced.
This may be interpreted as stronger selection bias affecting units receiving multiple treatments simultaneously, where strong balancing effectively estimated interaction treatment effects.

\subsection{Evaluation on Semi-Synthetic Dataset}
\label{appendix:evaluation_semi_synthetic}
We conduct an additional semi-synthetic experiment to evaluate the proposed method under a more realistic setting while retaining access to ground-truth causal effects.
Whereas the fully synthetic simulations allow us to evaluate the proposed method under controlled data-generating processes, they may not fully reflect the covariate distribution and treatment imbalance observed in real-world data.
The purpose of this experiment is to verify whether CIHSI-Net can accurately estimate both CASE and CAIE when the covariates and treatment assignments follow an empirical distribution, while the outcome-generation mechanism is known.

We construct a semi-synthetic dataset using the real covariates and observed treatment assignments from Dataset A used in Section~\ref{main:application_marketing_promotions}.
Specifically, we pair Dataset A's real covariates and treatment assignments with a known synthetic outcome function that includes nonlinear heterogeneous effects for both CASE and CAIE.
Let $\boldsymbol{x}_i \in \mathbb{R}^{71}$ denote the observed covariates for unit $i$.
The observed treatment vector $\boldsymbol{t}_i=(t_{i}^{(1)},t_{i}^{(2)},t_{i}^{(3)}) \in \{0,1\}^3$ is kept fixed from the original data.
The observed outcome is then generated from the following semi-synthetic data-generating process:
\[
y_i = g(\boldsymbol{x}_i,\boldsymbol{t}_i) + \varepsilon_i, \qquad \varepsilon_i \sim \mathcal{N}(0,0.5^2).
\]
The outcome generating function $g$ is defined as:
\[ 
g(\boldsymbol{x}_i,\boldsymbol{t}) = \tanh(\boldsymbol{x}_i^\top \boldsymbol{W}_g+b_g) + \sum_{\emptyset \neq S \subseteq \{1,2,3\}} g_S(\boldsymbol{x}_i) \prod_{k \in S} t_k, 
\]
where, for each non-empty subset $S \subseteq \{1,2,3\}$, $g_S(\boldsymbol{x}_i) = w_S \tanh(\boldsymbol{x}_i^\top \boldsymbol{W}_S+b_S)$ represents the heterogeneous treatment effect function.
The elements of $\boldsymbol{W}_g$ and $\boldsymbol{W}_S$ are independently drawn from $\mathcal{N}(0,1^2)$, and the bias terms $b_g$ and $b_S$ are drawn from $\mathcal{N}(0,0.5^2)$.
The coefficient $w_S$ is independently drawn from $U(-1,1)$ for each non-empty subset $S$.
This construction preserves the covariate distribution and treatment imbalance of the original data while providing known ground-truth CASE and CAIE values.

We use the same model architectures, hyperparameters, and training protocol as those described in Appendix~\ref{appendix:implement_details}.
The data was randomly split into 60\% training, 10\% validation, and 30\% test sets, with all reported evaluations performed on the test set.
To ensure statistical reliability, we report average metrics over 100 independent runs.

\begin{table*}[tb]
  \centering
  \small
  \setlength{\tabcolsep}{10.0pt}
  \begin{tabular}{lccc|cccc}
  \hline
  & \multicolumn{3}{c|}{${\epsilon_{\mathrm{CASE}}}$} & \multicolumn{4}{c}{${\epsilon_{\mathrm{CAIE}}}$} \\
  \cline{2-4}\cline{5-8}
  Method & $k=1$ & $k=2$ & $k=3$ & $S=\{1,2\}$ & $S=\{2,3\}$ & $S=\{1,3\}$ & $S=\{1,2,3\}$ \\
  \hline
  TECE-VAE  
    & 1.23 $\pm$ 0.09 
    & 1.25 $\pm$ 0.06 
    & 1.28 $\pm$ 0.07 
    & 3.22 $\pm$ 0.46 
    & 1.82 $\pm$ 0.08 
    & 1.62 $\pm$ 0.17 
    & 4.00 $\pm$ 0.34 \\
  NCoRE     
    & 0.98 $\pm$ 0.16 
    & \textbf{0.98 $\pm$ 0.07} 
    & \textbf{0.78 $\pm$ 0.05} 
    & 2.94 $\pm$ 0.48 
    & 1.29 $\pm$ 0.28 
    & \underline{1.37 $\pm$ 0.14} 
    & 5.73 $\pm$ 1.20 \\
  CISI-Net  
    & \textbf{0.79 $\pm$ 0.01} 
    & 1.13 $\pm$ 0.04 
    & \underline{0.96 $\pm$ 0.12} 
    & \underline{1.49 $\pm$ 0.14} 
    & \underline{1.21 $\pm$ 0.21} 
    & \textbf{0.96 $\pm$ 0.02} 
    & \underline{1.49 $\pm$ 0.11} \\
  CIHSI-Net
    & \underline{0.81 $\pm$ 0.02} 
    & \underline{1.02 $\pm$ 0.07} 
    & \textbf{0.78 $\pm$ 0.07} 
    & \textbf{1.45 $\pm$ 0.07} 
    & \textbf{1.09 $\pm$ 0.08} 
    & \textbf{0.96 $\pm$ 0.04} 
    & \textbf{1.45 $\pm$ 0.06} \\
  \hline
  \end{tabular}
  \caption{
  Semi-synthetic experiment results using Dataset A's real covariates and observed treatment assignments.
  Bold values indicate the best performance, and underlined values indicate the second-best performance for each metric.
  }
  \label{tab:semi_synthetic}
\end{table*}

Table~\ref{tab:semi_synthetic} shows the results of the semi-synthetic experiment.
Overall, CIHSI-Net achieves the strongest performance among the baseline methods under the empirical covariate distribution and observed treatment assignments of Dataset A.
Although CISI-Net and NCoRE obtain the lowest error for some individual ${\epsilon_{\mathrm{CASE}}}$, CIHSI-Net remains competitive for CASE and achieves the best results for all ${\epsilon_{\mathrm{CAIE}}}$.
When the errors are averaged over the ${\epsilon_{\mathrm{CASE}}}$ and ${\epsilon_{\mathrm{CAIE}}}$, CIHSI-Net obtains the lowest overall errors, suggesting that the proposed barycenter-based balancing is effective for estimating both single and interaction effects in the semi-synthetic setting.
This result suggests that CIHSI-Net remains robust even when the treatment assignment imbalance is inherited from real-world data.

\section{Real-world Datasets and Additional Results}
\label{appendix:real_world_application}

In this section, we provide supplementary details and additional results for the real-world marketing application.
We first describe the dataset specifications, preprocessing steps, and experimental protocols common to both Dataset A and Dataset B.
Subsequently, we present the analysis results for Dataset B, which involves promotions from competing merchants.

\subsection{Details of Real-World Application}
\label{appendix:sub_real_world_data_details}

\paragraph{Dataset Description and Treatment Assignment.}
The dataset is derived from a mobile payment platform and involves multiple concurrent marketing promotions.
We utilize this proprietary dataset because existing publicly available causal inference benchmarks focus primarily on single-treatment scenarios and lack the complex, simultaneous multiple-treatment interactions observed in real-world marketing environments. 
\begin{itemize}
    \item \textbf{Dataset A~(reported in Section~\ref{main:application_marketing_promotions})}:
    This dataset includes three marketing promotions that were simultaneously conducted: two offline promotions organized by the same merchant group~(denoted as CP\textsubscript{1} and CP\textsubscript{2}) and one online promotion conducted by another merchant group~(denoted as CP\textsubscript{3}).
    Each promotion is represented by a binary treatment indicator, which results in $2^3 = 8$ distinct treatment patterns.
    The observed sample proportions were approximately 2\% for CP\textsubscript{1} only, 15\% for CP\textsubscript{2} only, 42\% for CP\textsubscript{3} only, 0.2–1\% for two-promotion combinations, 0.2\% for all three promotions simultaneously, and the remainder as the control group.
    
    \item \textbf{Dataset B~(reported in this appendix)}:
    This dataset includes two marketing promotions conducted by two different merchants in the same industry~(denoted as CP\textsubscript{4} and CP\textsubscript{5}), yielding $2^2=4$ treatment patterns.
    The observed sample proportions were approximately 29\% for CP\textsubscript{4} only, 4\% for CP\textsubscript{5} only, 1\% for both promotions, and the remainder as the control group.
\end{itemize}
For both datasets, the control group consists of users who were not exposed to any of the promotion during the treatment period and who satisfied the positivity assumption, randomly drawn from users with at least one mobile payment transaction in the month preceding the promotions.
Additionally, the preprocessing and experimental protocol described below are applied identically to both datasets.

\paragraph{Outcome Definition and Preprocessing.}
The outcome variable $Y$ is defined as the total payment amount during the one-month period following the promotion implementation.
To ensure numerical stability and protect data confidentiality, $Y$ was standardized to have zero mean and unit variance.
This preprocessing preserves the sign of the estimated causal effect and ensures interpretational consistency between the standardized scale and the original scale because the standardization is an affine transformation~\citep{thakral2023estimates, murakami2025multipletreatmentscausaleffects}.
The covariates consist of 71 variables, including service usage histories and user demographic attributes.

\paragraph{User Stratification for Heterogeneity Analysis.}
To analyze heterogeneous treatment effects, we stratified users based on their total payment amount in the month before the promotions. 
Users were grouped into intervals of 5,000 JPY (e.g., 0–5,000, 5,000–10,000). 
Users with usage exceeding 50,000 JPY were aggregated into a single high-usage group, resulting in a total of 11 distinct strata. 
This grouping strategy allows us to examine how the sensitivity to marketing incentives varies with prior engagement levels.

\paragraph{Experimental Configuration.}
For the real-world application, we employed the same hyperparameter configuration that achieved the best performance in the simulation experiments~(Section~\ref{main:simulation}). 
This approach was chosen to evaluate the method's robust performance under a reproducible setting without dataset-specific tuning.
The dataset was randomly split into 60\% training, 10\% validation, and 30\% test sets, with all reported evaluations performed on the test set.

\subsection{Sensitivity and Uncertainty Analysis on Dataset A}
\label{appendix:sensitivity_and_uncertainty_analysis_on_datasetA}
\begin{figure*}[p]
    \centering
    \includegraphics[width=\textwidth]{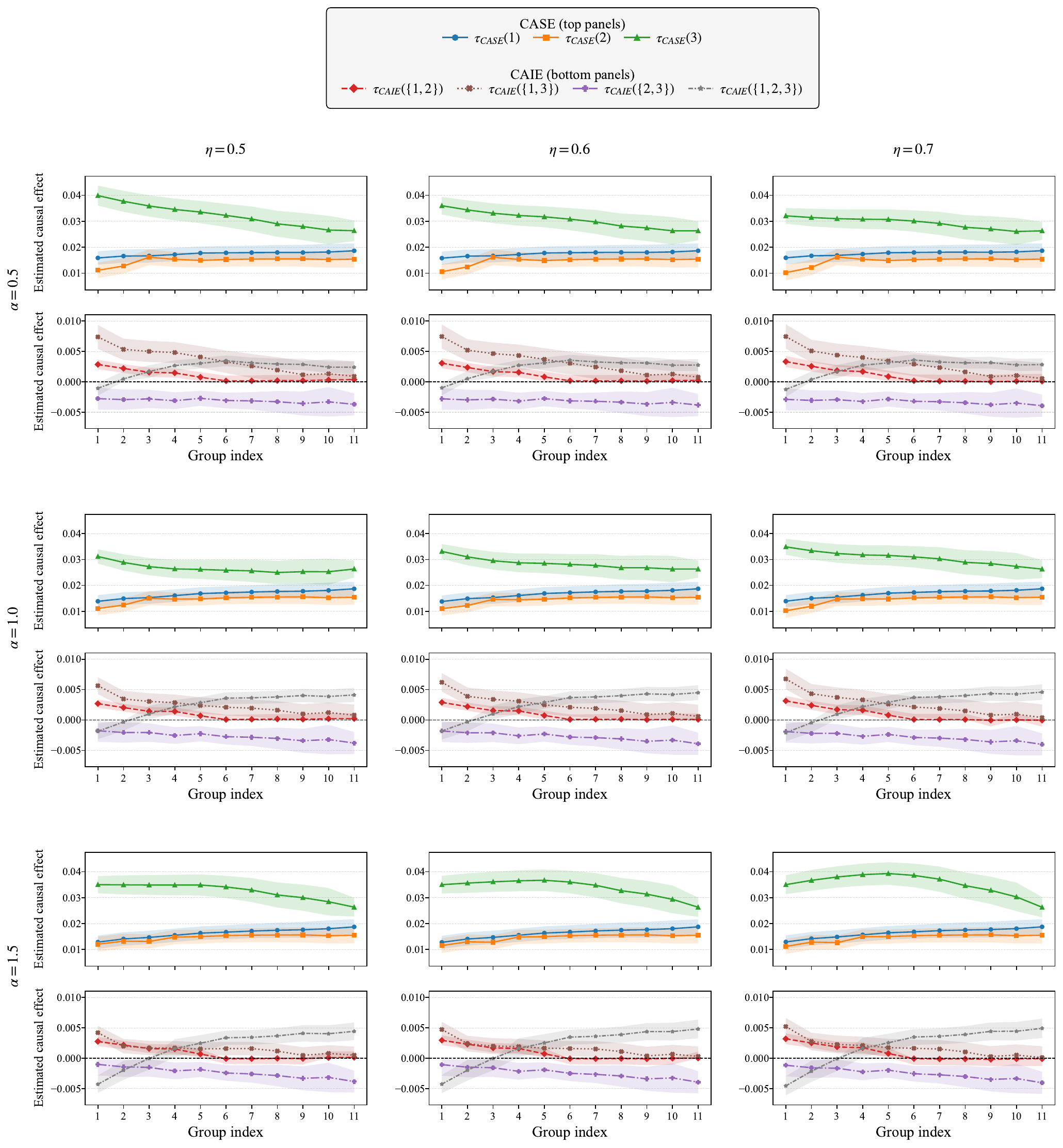}
    \caption{
    Sensitivity and uncertainty analysis of the estimated causal effects in the real-world application on Dataset A.
    The rows correspond to the BFG-WB regularization strength $\alpha \in \{0.5,1.0,1.5\}$, and the columns correspond to the FGW trade-off parameter $\eta \in \{0.5,0.6,0.7\}$.
    In each panel, the top subpanel reports the estimated CASE and the bottom subpanel reports the estimated CAIE across user groups.
    The shaded regions represent bootstrap confidence intervals.
    }
    \label{fig:seisitivity_uncertainty_to_application}
\end{figure*}

This appendix reports additional sensitivity and uncertainty analyses for the estimated causal effects in the real-world application on Dataset A.
In real-world observational data, the ground-truth causal effects are unavailable, and the reliability of the results cannot be assessed through direct estimation errors as in the simulation studies.
Therefore, it is important to examine whether the estimated causal-effect patterns remain stable under different model configurations and sampling variability.
We evaluate the sensitivity of the estimated CASE and CAIE to key BFG-WB hyperparameters and compute bootstrap confidence intervals for the estimated causal effects.
These analyses provide complementary evidence on the robustness of the real-world application.

For the sensitivity analysis, we focus on two key hyperparameters related to BFG-WB: the regularization strength $\alpha$ in Equation~\eqref{eq:total_loss} and the FGW trade-off parameter $\eta$ in Equation~\eqref{eq:fgw_to_barycenter}.
We evaluate nine hyperparameter configurations by varying $\alpha \in \{0.5, 1.0, 1.5\}$ and $\eta \in \{0.5, 0.6, 0.7\}$ while keeping the other model configurations fixed.
For each configuration, the data are randomly split into 60\% training, 10\% validation, and 30\% test sets, and all reported evaluations are performed on the test set.
To quantify uncertainty, we compute bootstrap confidence intervals by repeating the entire estimation pipeline for each hyperparameter configuration~\citep{efron1994introduction,kubota2025impact}.
We repeat this procedure 1,000 times for each hyperparameter configuration.
We then aggregate the test-set estimates for each bootstrap replication and construct confidence intervals from the empirical distribution of the resulting aggregate statistics.
Figure~\ref{fig:seisitivity_uncertainty_to_application} shows the estimated CASE and CAIE with bootstrap confidence intervals under the nine hyperparameter configurations of $\alpha \in \{0.5,1.0,1.5\}$ and $\eta \in \{0.5,0.6,0.7\}$.

Although some individual confidence intervals include zero, the overall treatment-effect patterns remain stable across different hyperparameter settings.
In particular, the estimated effects preserve the same qualitative trends observed in the main real-world analysis, such as the sign reversal of the three-way interaction effect from negative for low-usage users to positive for high-usage users.
This suggests that the empirical findings are not driven by a specific choice of the BFG-WB regularization strength or the FGW trade-off parameter.
Moreover, the consistency of the estimated CASE and CAIE patterns across the bootstrap replications suggests that CIHSI-Net captures the primary interaction structures in the real-world application.
These results provide additional evidence that the real-world findings are robust to both hyperparameter variation and sampling variability.

\subsection{Additional Results on Dataset B}
\label{appendix:additional_real_world_results}
\begin{figure*}[t]
  \centering
  \includegraphics[width = \linewidth, keepaspectratio]{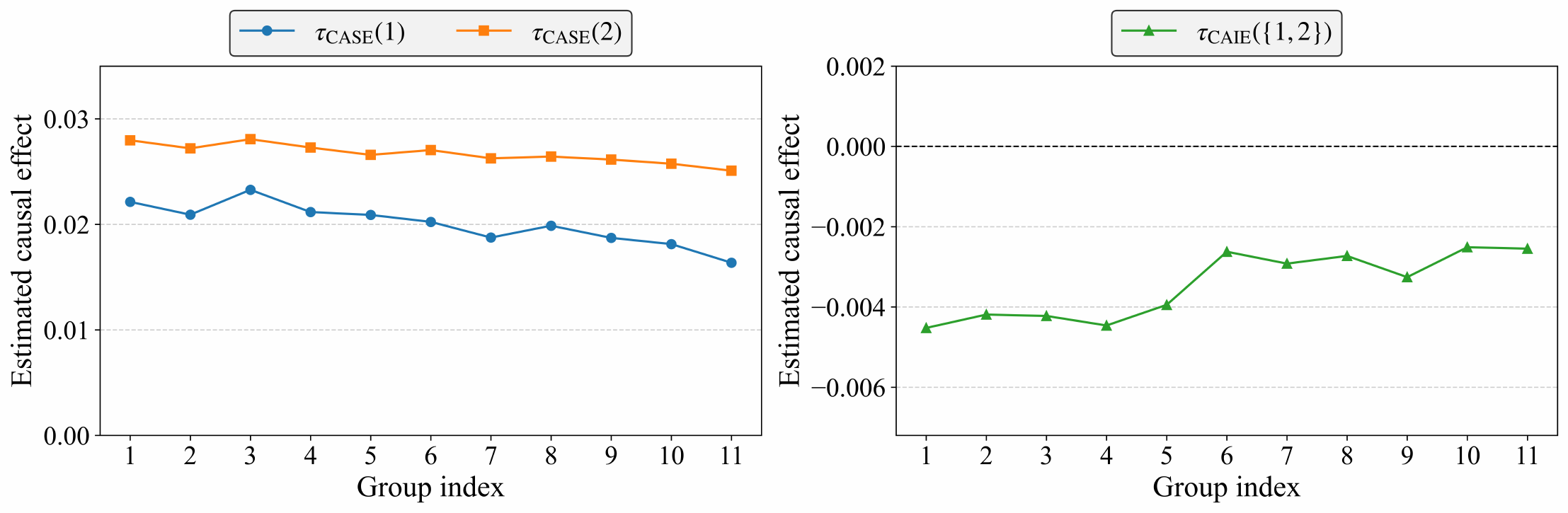}
  \caption{
  In dataset B, estimated CASEs for CP\textsubscript{4} and CP\textsubscript{5} (left) and the estimated CAIE for $\{ \mathrm{CP}\textsubscript{4}, \mathrm{CP}\textsubscript{5} \}$ (right) across user groups stratified by pre-promotion service usage.
  The outcomes are standardized.
  }
  \label{fig:marketing_result_datasetb}
\end{figure*}

Figure~\ref{fig:marketing_result_datasetb} shows the estimated causal effects of dataset B on a standardized outcome scale.
The left panel in Figure~\ref{fig:marketing_result_datasetb} shows that both estimated single-treatment effects are positive across all user groups.
The estimated effect of CP\textsubscript{5} is consistently larger than that of CP\textsubscript{4}, and for both promotions, the effect sizes gradually decrease as the pre-promotion usage level increases. 
This pattern suggests that CP\textsubscript{4} and CP\textsubscript{5} are effective at inducing behavioral changes among low-usage users, whereas their incremental effects are limited for high-usage users whose service usage is already established. 
This heterogeneity is consistent with established findings in marketing research~\citep[e.g.][]{breugelmans2017effect, kubota2025causal, liu2007long}, which suggests that light users typically possess a greater capacity to increase their transaction volume, whereas heavy users are often constrained by saturation in their activity levels.
Collectively, the contrast with Dataset~A~(see Section~\ref{main:application_marketing_promotions}) highlights that treatment effect heterogeneity is highly context-dependent, varying significantly based on promotion characteristics and competitive dynamics.

The right panel in Figure~\ref{fig:marketing_result_datasetb} shows that the estimated interaction effect $\tau_{\mathrm{CAIE}}(\{4,5\})$ is consistently negative across all groups, which suggests that promotions offered by competing merchants in the same industry may induce a reallocation of spending across merchants~(cannibalization). 
The results of this cannibalization are consistent with previous research~\citep{dorotic2021synergistic}.
In contrast to the CASE results, the absolute magnitude of the negative interaction becomes smaller for higher-usage groups, indicating that the degree of cannibalization is mitigated as usage increases. 
This trend suggests that cannibalization may be weaker among high-usage users because their behavior of using multiple merchants is already established.
Overall, these additional results suggest that CIHSI-Net serves as a practical analysis tool by flexibly uncovering heterogeneous single and interaction effects across user groups.

\end{document}